\documentclass[11pt]{article}


\usepackage{amssymb,amsmath,amsfonts,amsthm}
\usepackage{mathtools}
\usepackage{dsfont}
\usepackage{mathpazo}
\usepackage{fullpage}

\usepackage{authblk}

\usepackage{titlesec}
\titleformat{\section}[hang]{\Large\bfseries\filright}{\thesection.}{.5em}{}
\titleformat{\subsection}[hang]{\large\bfseries\filright}{}{0em}{}
\titleformat{\subsubsection}[block]{\bfseries}{}{0em}{}

\usepackage[colorlinks=true,linkcolor=blue,
  citecolor=blue,urlcolor=blue]{hyperref}


\newcommand{\Tr}{\operatorname{Tr}}
\newcommand{\I}{\mathds{1}}
\renewcommand{\t}{{\scriptscriptstyle\mathsf{T}}}
\renewcommand\vec{\operatorname{vec}}
\newcommand{\complex}{\mathbb{C}}
\newcommand{\M}{\mathcal{M}}
\newcommand{\U}{\mathcal{U}}
\newcommand{\op}{\operatorname}

\newcommand{\ip}[2]{\langle #1 , #2\rangle}
\newcommand{\bigip}[2]{\bigl\langle #1, #2 \bigr\rangle}

\newcommand{\abs}[1]{\lvert #1 \rvert}
\newcommand{\bigabs}[1]{\bigl\lvert #1 \bigr\rvert}


\theoremstyle{plain}
\newtheorem{theorem}{Theorem}
\newtheorem{lemma}[theorem]{Lemma}
\newtheorem{corollary}[theorem]{Corollary}
\newtheorem{proposition}[theorem]{Proposition}

\theoremstyle{definition}

\newtheorem{example}[theorem]{Example}

\theoremstyle{remark}
\newtheorem*{remark}{Remark}

\begin{document}

\title{\textbf{On the mixed-unitary rank of quantum channels}}

\author[1,2]{Mark Girard}
\author[1,3,4]{Debbie Leung}
\author[1,5]{Jeremy Levick}
\author[1,6]{Chi-Kwong Li}
\author[1,7]{\\Vern Paulsen}
\author[8,9]{Yiu Tung Poon}
\author[1,2,10]{John Watrous\vspace{-2mm}}

\affil[1]{Institute for Quantum Computing, University of Waterloo\vspace{0.4mm}}

\affil[2]{School of Computer Science, University of Waterloo\vspace{0.4mm}}

\affil[3]{Department of Combinatorics \& Optimization, University of
  Waterloo\vspace{0.4mm}}

\affil[4]{Perimeter Institute for Theoretical Physics, Waterloo\vspace{0.4mm}}

\affil[5]{Department of Mathematics \& Statistics, University of
  Guelph\vspace{0.4mm}}

\affil[6]{Department of Mathematics, College of William and Mary\vspace{0.4mm}}

\affil[7]{Department of Pure Mathematics, University of Waterloo\vspace{0.4mm}}

\affil[8]{Department of Mathematics, Iowa State University\vspace{0.4mm}}

\affil[9]{Center for Quantum Computing, Peng Cheng Laboratory\vspace{0.4mm}}

\affil[10]{Canadian Institute for Advanced Research, Toronto\vspace{-1mm}}

\date{\today}

\maketitle

\begin{abstract}
  In the theory of quantum information, the
  \emph{mixed-unitary quantum channels}, for any positive integer dimension
  $n$, are those linear maps that can be expressed as a convex combination of
  conjugations by $n\times n$ complex unitary matrices.
  We consider the \emph{mixed-unitary rank} of any such channel, which is the
  minimum number of distinct unitary conjugations required for an expression of
  this form.
  We identify several new relationships between the mixed-unitary rank~$N$ and
  the Choi rank~$r$ of mixed-unitary channels, the Choi rank being equal to the
  minimum number of nonzero terms required for a Kraus representation of that
  channel.
  Most notably, we prove that the inequality $N\leq r^2-r+1$ is satisfied for
  every mixed-unitary channel (as is the equality $N=2$ when $r=2$), and
  we exhibit the first known examples of mixed-unitary channels for which
  $N>r$.
  Specifically, we prove that there exist mixed-unitary channels having Choi
  rank $d+1$ and mixed-unitary rank $2d$ for infinitely many positive integers
  $d$, including every prime power $d$.
  We also examine the mixed-unitary ranks of the mixed-unitary Werner--Holevo
  channels.
\end{abstract}

\section{Introduction}

The theory of quantum information posits that discrete-time changes in
quantum-mechanical systems that store quantum information are represented
by \emph{quantum channels} (or just \emph{channels} for short), which are
completely positive and trace-preserving linear maps from square matrices to
square matrices having complex number entries.
Hereafter we shall write $\M_{n,m}$ to denote the space of $n\times m$ complex
matrices, and we also write $\M_n = \M_{n,n}$.

One standard way of describing any given channel $\Phi:\M_n\rightarrow\M_m$
is to choose a positive integer $r$ along with matrices
$A_1,\ldots,A_r\in M_{m,n}$, and then take
\begin{equation}
  \label{eq:Kraus form}
  \Phi(X) = \sum_{k=1}^r A_k X A_k^{\ast}
\end{equation}
for all $X\in\M_n$.
Such a description is known as a \emph{Kraus representation} of $\Phi$, and the
existence of such a description is equivalent to $\Phi$ being completely
positive \cite{Choi1975}.
A map $\Phi$ described in this way preserves trace if and only if
\begin{equation}
  \sum_{k=1}^r A_k^{\ast} A_k = \I_n,
\end{equation}
where $\I_n\in\M_n$ is the identity matrix.
The minimum value of $r$ for which such a description exists is called the
\emph{Choi rank} of $\Phi$, this number being so-named because it is equal to
the rank of the \emph{Choi representation} (or \emph{Choi matrix}) $J(\Phi)$
associated with $\Phi$:
\begin{equation}
  J(\Phi) = \sum_{1\leq j,k\leq n} \Phi(E_{j,k}) \otimes E_{j,k},
\end{equation}
where $E_{j,k}\in\M_n$ denotes the matrix having a 1 in entry $(j,k)$
and 0 in all other entries.
As the Choi representation is an $nm\times nm$ matrix, it follows that the
Choi rank of $\Phi$ is always at most $nm$.

In this paper we consider a restricted class of channels called mixed-unitary
channels.
Writing $\U_n$ to denote the set of all $n\times n$ complex unitary matrices,
we say that a channel $\Phi:\M_n\rightarrow\M_n$ is a \emph{unitary channel} if
there exists a unitary matrix $U\in\U_n$ for which
\begin{equation}
  \Phi(X) = U X U^{\ast}
\end{equation}
for all $X\in\M_n$, and we say that a channel is a
\emph{mixed-unitary channel} if it can be expressed as a convex combination of
unitary channels.
That is, a channel $\Phi:\M_n\rightarrow\M_n$ is mixed unitary if and only if
there exists a positive integer $N$, a probability vector $(p_1,\ldots,p_N)$,
and unitary matrices $U_1,\ldots,U_N\in\U_n$ such that
\begin{equation}
  \label{eq:general mixed-unitary channel}
  \Phi(X) = \sum_{k = 1}^N p_k U_k X U_k^{\ast}
\end{equation}
for all $X\in\M_n$. We observe that the set of all mixed-unitary channels is compact, as it is the convex hull of a compact set (the set of unitary channels) in a finite-dimensional space. Thus, by Carath\'eodory's theorem, every element in the closed convex hull of the set of unitary channels can be represented as a (finite) convex combination of unitary channels.

Various properties of mixed-unitary channels may be observed.
Of course, with respect to the general description of channels above,
mixed-unitary channels are channels for which $m = n$, and it is evident
that every mixed-unitary channel $\Phi:\M_n\rightarrow\M_n$ is
\emph{unital}, meaning that $\Phi(\I_n) = \I_n$.
It is known that in the case $n=2$, a channel is mixed unitary if and only if
it is unital, but when $n\geq 3$ there exist channels
$\Phi:\M_n\rightarrow\M_n$ that are unital but not mixed unitary
\cite{Tregub1986,Landau1993}.

The importance of mixed-unitary channels in quantum information theory
is multifarious.
Many natural examples of noisy quantum channels (including the so-called
\emph{dephasing} and \emph{depolarizing} channels) are mixed unitary.
The general form of a mixed-unitary channel---which can be described as a
classical source of randomness selecting a unitary transformation to be applied
to a system---is simple and intuitive, and arises naturally in algorithmic and
cryptographic settings.
For example, the most standard forms of encryption and decryption of quantum
information using a private key induce mixed-unitary channels on the data
from the viewpoint of an eavesdropper \cite{AmbainisMTdW2000,HaydenLSW2004}.
\emph{Quantum expanders}, \emph{twirling operations,} and 
\emph{unitary $t$-designs} are specific types of
mixed-unitary channels that have been studied extensively in quantum
information and computation \cite{Ben-AroyaST2008,BDSW96,DLT01,DankertCEL2009}.
Mixed-unitary channels are also known to correspond precisely to those channels
for which an ideal form of environment-assisted error correction is possible
\cite{GregorattiW2003}.
Despite the fact that mixed-unitary channels have a simple form, they do
nevertheless inherit many interesting properties of general channels
\cite{Rosgen2008}.

Mixed-unitary channels also have important uses, as an analytic tool, in
quantum information theory.
For example, \emph{majorization} for Hermitian matrices \cite{AlbertiU1982} is
typically defined in terms of mixed-unitary channels: a Hermitian matrix $H$
is said to majorize a Hermitian matrix $K$ if there exists a mixed-unitary
channel $\Phi$ such that $\Phi(H) = K$.
This notion has found many applications in quantum information, perhaps most
notably in Nielsen's theorem \cite{Nielsen1999}, which provides a perfect
characterization of the bipartite pure state transformations that can be
realized through local quantum operations and classical communication.
Another example is that the monotonicity of quantum relative entropy under the
action of mixed-unitary channels, which follows directly from the joint
convexity of quantum relative entropy, offers a convenient stepping stone to
monotonicity for all channels.
(Recent proofs of the monotonicity of quantum relative entropy under the
action of all channels, and indeed all positive and trace-preserving maps
\cite{Mueller-HermesR2017}, do however offer an alternative path.)

Mixed-unitary channels are also interesting mathematical objects in their own
right, and have inspired fruitful lines of research.
For example, the \emph{asymptotic quantum Birkhoff conjecture}
\cite{SmolinVW2005}, which was eventually refuted \cite{HaagerupM2011}, was
concerned with the approximation of tensor powers of unital channels by
mixed-unitary channels.
Through the \emph{channel--state correspondence}, which essentially identifies a
channel with the state obtained by normalizing its Choi representation,
mixed-unitary channels also offer an interesting twist on bipartite
separability.
That is, whereas a separable state is a convex mixture of
\emph{pure product states}, the states corresponding to mixed-unitary channels
are convex mixtures of \emph{maximally entangled states}.
As it turns out, the two sets of states share some important common properties,
including the fact that they have a nonempty interior
\cite{ZyczkowskiHSL1998, GurvitsB2002, Watrous2009} and
have NP-hard membership testing problems
\cite{Gurvits2003, Ioannou2007, Gharibian2010, LeeW2019}.
Additional properties of mixed-unitary channels can be found in
\cite{Audenaert2008a} and \cite{MendlW2009}.

In this paper we focus on the minimum number $N$, over all possible expressions
of the form~\eqref{eq:general mixed-unitary channel} that exist for a given
mixed-unitary channel $\Phi$, and we shall refer to this number as the
\emph{mixed-unitary rank} of $\Phi$.
It is immediate that $N\geq r$; the mixed-unitary rank is always at least the
Choi rank.
Buscemi \cite{Buscemi2006} proved the upper bound $N \leq r^2$, which was the
strongest bound known prior to our work.\footnote{%
  See also Theorem 4.10 and Corollary 4.11 in \cite{Watrous2018}, which provides
  an alternative way to prove Buscemi's bound.
  We note, in addition, that one can obtain a very slight improvement to
  Buscemi's bound in the specific case $r=n^2$ by observing that the bound
  $N\leq n^4 - 2n^2 + 2$ follows in a straightforward fashion from
  Carath\'eodory's theorem, as is explained in Proposition 4.9 of
  \cite{Watrous2018}.}
With respect to the lower bound $N \geq r$, to our knowledge no examples of
mixed-unitary channels for which $N > r$ have previously been exhibited.
The main contributions of our paper are as follows.
\begin{enumerate}
\item
  We prove that the mixed-unitary rank $N$ of every mixed-unitary channel
  $\Phi:\M_n\rightarrow\M_n$ having Choi rank $r$ satisfies
  \begin{equation}
    N \leq r^2 - s + 1,
  \end{equation}
  where $s = \op{dim}(\mathcal{S}_{\Phi})$ is the dimension of the operator system
  associated with $\Phi$.
  This operator system is given by
  \begin{equation}
    \mathcal{S}_{\Phi} = \op{span}\{A_j^{\ast} A_k\,:\,1\leq j,k\leq r\},
  \end{equation}
  for any expression of $\Phi$ taking the form \eqref{eq:Kraus form}.
  By examining relations between the dimension of the operator
  system of a mixed-unitary channel and its Choi rank, we conclude that
  \begin{equation}
   N \leq r^2-r+1
  \end{equation}
  for all mixed-unitary channels.
  Furthermore, we prove that $N=r$ if either $r\leq 2$ or $s\leq 3$, and that
  $N\leq 6$ in the case when $r=3$.

\item
  We provide a construction through which one may obtain examples of
  mixed-unitary channels having mixed-unitary ranks strictly larger than their
  Choi ranks.
  Specifically, the construction takes these channels to be the
  \emph{direct sum} of a unitary channel with a mixed-unitary channel that can
  be expressed uniquely as a nontrivial convex combination of unitary channels.
  Through this construction we exhibit examples of mixed-unitary channels
  of the form $\Phi:\M_{d+1}\rightarrow\M_{d+1}$ having Choi rank $d+1$ and
  mixed-unitary rank $2d$ for every odd prime~$d$, as well as mixed-unitary
  Schur channels of the form $\Phi:\M_{d^2+1}\rightarrow\M_{d^2+1}$
  having Choi rank $d+1$ and mixed-unitary rank $2d$ for every positive integer
  $d$ for which $d+1$ mutually unbiased bases for the space $\complex^d$ exist
  (which includes every prime power $d$).

\item
  We observe that the mixed-unitary rank is not multiplicative with respect to
  tensor products.
  In particular, there exist mixed-unitary channels $\Phi$ and $\Psi$ having
  mixed-unitary ranks~4 and~2, respectively, such that the mixed-unitary rank
  of $\Phi\otimes\Psi$ is 6.

\item
  Finally, we examine the mixed-unitary ranks of the Werner--Holevo channels,
  which are an important class of channels in quantum information theory
  defined for every dimension $n\geq 2$ by the formulas
  \begin{equation}
    \Phi_0(X) = \frac{\Tr(X)\I_n + X}{n+1}
    \qquad\text{and}\qquad
    \Phi_1(X) = \frac{\Tr(X)\I_n - X}{n-1}
  \end{equation}
  for all $X\in\M_n$.
  The channel $\Phi_0$ is mixed unitary for all such $n$, while $\Phi_1$ is
  mixed-unitary if and only if $n$ is even.
  We prove the following facts concerning the mixed-unitary ranks of these
  channels.
  \begin{itemize}
  \item
    For all even $n \geq 2$, the Choi rank and the mixed-unitary rank are in
    agreement for both $\Phi_0$ and $\Phi_1$:
    we have $N = r = \binom{n+1}{2}$ for $\Phi_0$ and
    $N = r = \binom{n}{2}$ for $\Phi_1$.

  \item
    The mixed-unitary rank of $\Phi_0$ in every odd dimension $n$ is at most
    $n(n+3)/2$, and the mixed-unitary rank of $\Phi_0$ for the case $n=3$ is
    $N = r = 6$.
  \end{itemize}

  Numerical evidence suggests that the mixed-unitary rank of $\Phi_0$
  for a few other small odd values of~$n$ satisfies $N = r = \binom{n+1}{2}$,
  but we leave open the problem of determining whether or not this formula
  holds in general.
 
\end{enumerate}

\section{Preliminaries}

In this section we summarize known facts and results concerning quantum
channels, including a few results specific to mixed-unitary channels, that will
be used later in the paper.
Further information on quantum channels, and the role they play in the theory
of quantum information, can be found in texts on the subject, including
\cite{NielsenC2000,Wilde2017,Watrous2018}.

\subsection{Linear algebra notations and conventions}

Given any matrix $A\in\M_{m,n}$, we denote by $A^\t$, $\overline{A}$, and
$A^{\ast}$ the transpose, entry-wise conjugate, and adjoint (or conjugate
transpose) of $A$, respectively.
A square matrix $H\in\M_n$ is \emph{Hermitian} if $H = H^{\ast}$, a square
matrix $U\in\M_n$ is \emph{unitary} if $U^{\ast} U = U U^{\ast} = \I_n$, and
a matrix $A\in\M_{m,n}$ is an \emph{isometry} if $A^*A = \I_n$.
This last condition requires that $m\geq n$, and in the case $m=n$ the
condition that $A$ is an isometry is equivalent to $A$ being unitary.

The vectorization mapping $\vec: \M_{m,n}\rightarrow \complex^{nm}$ converts a
given matrix to a column vector by transposing its rows into columns and
stacking them on top of one another from top to bottom.
In more precise terms, this mapping is defined as
\begin{equation}
\vec(A) = \sum_{j=1}^m \sum_{k = 1}^n A(j,k) \, e_j \otimes e_k
\end{equation}
for every $A\in\M_{m,n}$, where $e_j\in\complex^m$ and $e_k\in\complex^n$
denote the elementary unit vectors having a 1 in entry $j$ or $k$,
respectively, and 0 in all other entries.
We define the inner product of two matrices $A, B \in \M_{m,n}$ as
$\ip{A}{B}  = \Tr(A^*B)$, which is equivalent to the ordinary inner
product (conjugate linear in the first argument) of $A$ and $B$ viewed as
vectors: $\ip{A}{B} = \ip{\vec(A)}{\vec(B)}$.

Finally, the adjoint of a linear map $\Phi:\M_n\rightarrow\M_m$ is the unique
linear map $\Phi^{\ast}:\M_m\rightarrow\M_n$ that satisfies
$\ip{Y}{\Phi(X)} = \ip{\Phi^{\ast}(Y)}{X}$ for all $X\in\M_n$ and $Y\in\M_m$.
The condition that $\Phi$ is \emph{trace-preserving} is equivalent to
$\Phi^*$ being unital.

\subsection{Choi and Kraus representations of channels}

We have already defined the Choi representation and the notion of a Kraus
representation of a completely positive linear map $\Phi:\M_n\rightarrow\M_m$
in the introduction, but it will be helpful to note two additional facts
concerning them.
First, if $\Phi$ is a completely positive map having the Kraus representation
\eqref{eq:Kraus form}, then its Choi representation is given by
\begin{equation}
  J(\Phi) = \sum_{k=1}^r \vec(A_k)\vec(A_k)^*.
\end{equation}
Second, although Kraus representations are not unique, any two Kraus
representations of a given completely positive map are related in the following
way: if $\Phi:\M_n\rightarrow\M_m$ has the Kraus representation
\eqref{eq:Kraus form}, and
\begin{equation}
  \Phi(X) = \sum_{k = 1}^N B_k X B_k^{\ast}
\end{equation}
is a Kraus representation of $\Phi$ for which $N \geq r$, then there must exist
an isometry $V\in\M_{N,r}$ such that
\begin{equation}
  B_k = \sum_{j=1}^r V(k,j) A_j
\label{eq:krausfreedom}
\end{equation}
for every $k\in\{1,\ldots,N\}$.

\subsection{Complementary channels}

Suppose that $\Phi:\M_n\rightarrow\M_m$ is a channel (i.e., a completely
positive and trace-preserving linear map) having Kraus representation
\begin{equation}
  \Phi(X) = \sum_{k = 1}^r A_k X A_k^{\ast}.
\end{equation}
The linear map $\Psi:\M_n\rightarrow\M_r$ defined by
\begin{equation}
  \Psi(X) = \sum_{j,k=1}^r \ip{A_k^{\ast} A_j}{X} \, E_{j,k}
\end{equation}
for all $X\in\M_n$ is then also necessarily a channel, and is said to be
\emph{complementary} to $\Phi$.
(The notion of a complementary channel is more commonly defined through the
Stinespring representation of channels, which we have no need to discuss, but
the definitions are equivalent.)

If it is the case that $r = \op{rank}(J(\Phi))$ and $\Psi:\M_n\rightarrow\M_r$ is
complementary to $\Phi$, then any other given channel
$\Xi:\M_n\rightarrow\M_N$ is also complementary to $\Phi$ if and only if
there exists an isometry $V \in \M_{N,r}$ such that
$\Xi(X) = V\Psi(X)V^*$ for every $X\in\M_n$.

\subsection{The operator system of a channel}

Let $n$ be a positive integer.
A linear subspace $\mathcal{S}\subseteq\M_n$ is an \emph{operator system} if
$\I_n\in\mathcal{S}$ and if $A^*\in\mathcal{S}$ for each
$A\in\mathcal{S}$.
Every operator system is spanned by its Hermitian elements.
In particular, if $s=\op{dim}(\mathcal{S})$ is the dimension of this subspace,
there exist Hermitian matrices $H_1,\dots,H_{s-1}$ such that
\begin{equation}
  \mathcal{S} = \op{span}\{\I_n,H_1,\dots,H_{s-1}\}.
\end{equation}
Further information on the topic of operator systems can be found in
\cite{Paulsen1986}, \cite{Conway1999}, and \cite{Paulsen2003}.

An operator system $\mathcal{S}_{\Phi}$ is associated with every channel
$\Phi:\M_n\rightarrow\M_m$ in the following manner:
\begin{equation}\label{eq:opsysphi}
  \mathcal{S}_\Phi = \{A\,:\, \vec(A)\in\op{im}(J(\Phi^*\Phi))\}.
\end{equation}
Equivalently, given any Kraus representation
\begin{equation}
  \Phi(X) = \sum_{k = 1}^r A_k X A_k^{\ast}
\end{equation}
of a channel $\Phi$, the operator system $\mathcal{S}_{\Phi}$ may be expressed as
\begin{equation}
  \mathcal{S}_{\Phi} = \op{span}\bigl\{A_k^{\ast} A_j\,:\,1\leq j,k\leq r\bigr\},
\end{equation}
which is evident from the observation that
\begin{equation}
  J(\Phi^*\Phi) = \sum_{j,k=1}^r \vec(A_k^*A_j) \vec(A_k^*A_j)^*.
\end{equation}
The fact that $\mathcal{S}_{\Phi}$ is closed under adjoints is immediate, while
the condition $\I_n \in \mathcal{S}_{\Phi}$ follows from the assumption that $\Phi$
preserves trace, and therefore satisfies
\begin{equation}
  \sum_{k = 1}^r A_k^{\ast} A_k = \I_n.
\end{equation}
The operator system $\mathcal{S}_{\Phi}$ of the channel $\Phi$ has also been referred to
as the \emph{non-commutative graph} of $\Phi$ in the context of quantum
zero-error information theory \cite{Duan2013}.

If $\Phi:\M_n\rightarrow\M_m$ is a channel with Choi rank $r$, then
the dimension $s = \op{dim}(S_{\Phi})$ of the operator system of
$\Phi$ necessarily satisfies $s \leq r^2$.  One also has that $s =
r^2$ if and only if $\Phi$ is an extreme point in the convex set of
all channels from $\M_n$ to $\M_m$ \cite{Choi1975}. In addition, if $\Phi$ is a mixed-unitary channel with mixed-unitary rank $N$, it further holds that $s \le
  N^2 -N +1$, as each unitary operator $U_k$ in the Kraus representation \eqref{eq:Kraus form} satisfies $U_k^*U_k = \I_n$.

\subsection{Direct sums of channels}

Let $n$ and $m$ be positive integers and let $\Phi:\M_n\rightarrow\M_n$ and
$\Psi:\M_m\rightarrow\M_m$ be linear mappings.
The \emph{direct sum} of the mappings $\Phi$ and $\Psi$ is the linear map
$\Phi\oplus\Psi:\M_{n+m} \rightarrow \M_{n+m}$ defined as
\begin{equation}
(\Phi\oplus\Psi) \begin{pmatrix}
  X & \cdot \\
  \cdot& Y
\end{pmatrix} = \begin{pmatrix}
  \Phi(X) & 0\\
  0& \Psi(Y)
\end{pmatrix}
\end{equation}
for every $X\in\M_n$ and $Y\in\M_m$ (where the dots indicate arbitrary matrices
of the appropriate size that have no influence on the output of the map).
If $\Phi$ and $\Psi$ are channels, then $\Phi\oplus\Psi$ is a channel as well.
It is always the case that
\begin{equation}
  \op{rank}(J(\Phi\oplus\Psi)) = \op{rank}(J(\Phi)) + \op{rank}(J(\Psi)) .
\end{equation}
Using any Kraus representations of $\Phi$ and $\Psi$, 
we can express $\Phi\oplus\Psi$ as 
\begin{equation}
  (\Phi\oplus\Psi) \begin{pmatrix} X & \cdot \\
  \cdot& Y \end{pmatrix} = \sum_{k=1}^N\begin{pmatrix} A_k & 0 \\0& B_k \end{pmatrix}
  \begin{pmatrix} X & \cdot \\\cdot& Y \end{pmatrix}
  \begin{pmatrix} A_k & 0 \\0& B_k \end{pmatrix}^*
\end{equation}
for some choice of matrices $A_1,\dots,A_N\in\M_n$ and $B_1,\dots,B_N\in\M_m$ 
and for some positive integer $N$. 
Furthermore, following the relation between different Kraus representations 
given by (\ref{eq:krausfreedom}) 
every Kraus representation of $\Phi\oplus\Psi$ must have the form similar
to the above.

\subsection{Unique mixed-unitary decompositions}

Let $n$ be a positive integer, let $\Phi:\M_n\rightarrow\M_n$ be a
mixed-unitary channel, and let $N$ be the mixed-unitary rank of $\Phi$.
Let us also introduce the following notation: for two unitary matrices
$U,V\in\U_n$, we write $U\sim V$ if there exists a complex number
$\alpha\in\complex$ with $\abs{\alpha} = 1$ such that $U = \alpha V$.
That is, $U\sim V$ if and only if $U$ and $V$ describe the same unitary
channel through conjugation.
A decomposition
\begin{equation}
  \Phi(X) = \sum_{k=1}^N p_k U_k X U_k^{\ast}
\end{equation}
of $\Phi$ is said to be a \emph{unique mixed-unitary decomposition} for $\Phi$
if the following statement is true.
For every mixed-unitary decomposition
\begin{equation}
  \Phi(X) = \sum_{j = 1}^t q_j V_j X V_j^{\ast}
\end{equation}
of $\Phi$, there must exist a partition
$\{ 1,\ldots,t \} = T_1 \cup \cdots \cup T_N$ so that these two conditions hold
for every $k\in\{1,\ldots,N\}$:
\begin{enumerate}
  \item
    $V_j \sim U_k$ for every $j\in T_k$.
  \item $p_k = \sum_{j\in T_k}q_j$.
\end{enumerate}

\subsection{Condition for a channel to be mixed unitary}

Finally, we require a characterization of mixed-unitary channels,
expressed by the following theorem and corollary.
The theorem is based on a characterization of mixed-unitary channels due to
Audenaert and Scheel~\cite{Audenaert2008a}; a proof is included below because
we require a slight refinement of their characterization.

A square matrix $X\in\M_n$ is said to be \emph{traceless} if $\Tr(X) =0$ and is said to have \emph{vanishing diagonal} if all of its diagonal entries are equal to $0$. We remark that the set of traceless $n\times n$ matrices is equal to the set $\I_n^\perp$.

\begin{theorem}
  \label{thm:condition}
  Let $\Phi:\M_n\rightarrow\M_n$ be a channel having Choi rank $r$.
  For every positive integer $N\geq r$, the following statements are
  equivalent:
  \begin{enumerate}
  \item $\Phi$ is mixed unitary with mixed-unitary rank at most $N$.
  \item There is a channel $\Psi:\M_n\rightarrow\M_N$ complementary
    to $\Phi$ such that $\Psi(X)$ has vanishing diagonal for every traceless
    matrix $X$.
 \end{enumerate}
\end{theorem}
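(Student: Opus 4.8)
The plan is to prove the equivalence by relating a mixed-unitary decomposition of $\Phi$ directly to a complementary channel, exploiting the freedom in Kraus representations described by \eqref{eq:krausfreedom} and the characterization of complementary channels via isometries. Fix a Kraus representation $\Phi(X) = \sum_{k=1}^r A_k X A_k^{\ast}$ with $r$ the Choi rank, and let $\Psi_0 : \M_n \to \M_r$ be the associated complementary channel $\Psi_0(X) = \sum_{j,k} \ip{A_k^{\ast} A_j}{X} E_{j,k}$. Recall that every channel $\Xi : \M_n \to \M_N$ complementary to $\Phi$ has the form $\Xi(X) = V \Psi_0(X) V^{\ast}$ for an isometry $V \in \M_{N,r}$, and conversely.

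First I would establish the direction (1) $\Rightarrow$ (2). Suppose $\Phi(X) = \sum_{k=1}^N p_k U_k X U_k^{\ast}$ with all $p_k > 0$; then $B_k = \sqrt{p_k}\, U_k$ is a Kraus representation with $N$ terms, so by \eqref{eq:krausfreedom} there is an isometry $V \in \M_{N,r}$ with $B_k = \sum_j V(k,j) A_j$. Define $\Xi(X) = V \Psi_0(X) V^{\ast}$; this is the complementary channel built from the $B_k$'s, so its $(k,\ell)$ entry is $\ip{B_\ell^{\ast} B_k}{X} = \sqrt{p_k p_\ell}\, \ip{U_\ell^{\ast} U_k}{X}$. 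For the diagonal entries we get $p_k \ip{U_k^{\ast} U_k}{X} = p_k \ip{\I_n}{X} = p_k \Tr(X)$, which vanishes precisely when $X$ is traceless. Hence $\Psi := \Xi$ works. Conversely, for (2) $\Rightarrow$ (1), take $\Psi : \M_n \to \M_N$ complementary to $\Phi$ whose output is diagonal-free on traceless matrices, write $\Psi(X) = \sum_{k,\ell} \ip{B_\ell^{\ast} B_k}{X} E_{k,\ell}$ for some Kraus family $B_1,\dots,B_N$ of $\Phi$ (guaranteed by the isometry characterization applied to $V$ with $B_k = \sum_j V(k,j) A_j$), and examine the diagonal entries $\ip{B_k^{\ast} B_k}{X}$. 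The vanishing-diagonal hypothesis says $\ip{B_k^{\ast} B_k}{X} = 0$ for all traceless $X$, i.e. $B_k^{\ast} B_k \in \I_n^{\perp\perp} = \complex \I_n$, so $B_k^{\ast} B_k = p_k \I_n$ for some scalar $p_k \geq 0$. Thus $B_k = \sqrt{p_k}\, U_k$ with $U_k$ unitary (discarding the zero terms), and the trace-preservation condition $\sum_k B_k^{\ast} B_k = \I_n$ forces $\sum_k p_k = 1$, giving the mixed-unitary decomposition $\Phi(X) = \sum_k p_k U_k X U_k^{\ast}$ with at most $N$ terms.

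The main subtlety I anticipate is the bookkeeping around "mixed-unitary rank at most $N$" versus "a mixed-unitary decomposition with exactly $N$ terms" — one must be careful that zero probabilities are allowed and that the number of nonzero $p_k$ is what bounds the mixed-unitary rank, and that in the (1) $\Rightarrow$ (2) direction a decomposition with fewer than $N$ unitaries can be padded to exactly $N$ terms (appending copies of some $U_k$ with zero weight, or equivalently padding the isometry $V$) so that the complementary channel genuinely maps into $\M_N$. The other point requiring a line of justification is the identification $\I_n^{\perp\perp} = \complex\I_n$ and the remark, already noted in the excerpt, that the traceless matrices are exactly $\I_n^{\perp}$ under the inner product $\ip{A}{B} = \Tr(A^{\ast} B)$; together these give $B_k^{\ast} B_k \perp \I_n^{\perp} \iff B_k^{\ast} B_k \in \complex \I_n$, and positivity pins down the scalar. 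Everything else is a direct translation between the Kraus freedom relation and the formula for complementary channels, so no deep obstacle is expected beyond this careful accounting of dimensions and zero weights.
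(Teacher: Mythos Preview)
Your proposal is correct and follows essentially the same approach as the paper's own proof: both directions hinge on the observation that a complementary channel built from Kraus operators $B_k$ has diagonal entries $\ip{B_k^{\ast}B_k}{X}$, and that these vanish on $\I_n^{\perp}$ precisely when each $B_k^{\ast}B_k$ is a scalar multiple of $\I_n$, i.e., each $B_k$ is a scalar times a unitary. Your explicit passage through the isometry $V$ and the fixed minimal complementary channel $\Psi_0$ makes the bookkeeping slightly more visible than in the paper (which simply invokes the definition of complementarity to produce the $N$-term Kraus family directly), and your remark about padding to exactly $N$ terms is a fair point that the paper leaves implicit; but the substance of the two arguments is the same.
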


\begin{proof}
  Suppose first that $\Psi:\M_n\rightarrow\M_N$ is a channel complementary to
  $\Phi$ such that $\Psi(X)$ has vanishing diagonal for every traceless
  matrix $X$.
  Let $A_1,\dots,A_N\in\M_n$ be matrices satisfying
  \begin{equation}
    \Phi(X) = \sum_{k=1}^N A_kXA_k^*
    \qquad\text{and}\qquad\ip{A_k^*A_j}{X} E_{j,k}
  \end{equation}
  for each $X\in\M_n$.
For each index $k\in\{1,\dots,N\}$, one has that
    \begin{equation}
    \ip{A_k^*A_k}{X}  = \ip{E_{k,k}}{\Psi(X)}
    = 0
  \end{equation}
 for each matrix $X\in\I_n^\perp$, as the diagonal entries of $\Psi(X)$ are equal to zero by assumption for each traceless matrix $X$. It must therefore hold that $A_k^*A_k\in\op{span}\{\I_n\}$ and thus
  $A_k = \alpha_k U_k$ for some choice of a complex number $\alpha_k$ and
  a unitary matrix $U_k\in\U_n$, for every index $k\in\{1,\dots,N\}$.
  It follows that 
  \begin{equation}
    \Phi(X) = \sum_{k=1}^N A_k X A_k^*
    = \sum_{k=1}^N \abs{\alpha_k}^2 U_k X U_k^*
    = \sum_{k=1}^N  p_k U_k X U_k^*
  \end{equation}
  for each $X\in\M_n$, where $(p_1,\dots,p_N)$ is the probability vector
  defined as $p_k=\abs{\alpha_k}^2$ for each $k\in\{1,\dots,N\}$, and thus
  $\Phi$ is mixed unitary with mixed-unitary rank at most $N$.
  
  To prove the reverse implication, suppose that there exist unitary matrices
  $U_1,\dots,U_N\in\U_n$ and a probability vector $(p_1,\dots,p_N)$
  satisfying
  \begin{equation}
    \Phi(X) = \sum_{k=1}^N p_k U_k X U_k^*
  \end{equation}
  for each $X\in\M_n$, and define a channel $\Psi:\M_n\rightarrow\M_N$
  complementary to $\Phi$ as
  \begin{equation}
    \Psi(X) = \sum_{j,k=1}^N \sqrt{p_jp_k}\ip{U_k^*U_j}{X} E_{j,k} 
  \end{equation}
  for each $X\in\M_n$.
  For each traceless matrix $X$, one has
  \begin{equation}
    \ip{E_{k,k}}{\Psi(X)}  = p_k
    \ip{U_k^*U_k}{X} = p_k \Tr(X) = 0
  \end{equation}
  for each $k\in\{1,\dots,N\}$, and therefore the diagonal entries of $\Psi(X)$
  are equal to zero.
\end{proof}

\begin{corollary} \label{cor:method}
  Let $n$ be a positive integer, let $\Phi:\M_n\rightarrow\M_n$ be a channel,
  let $r$ be the Choi rank of $\Phi$, and let $\Psi:\M_n\rightarrow\M_r$ be a
  channel that is complementary to $\Phi$.
  For every integer $N\geq r$, the channel $\Phi$ is mixed unitary with
  mixed-unitary rank at most $N$ if and only if there exists an isometry
  $V\in\M_{N,r}$ such that $V\Psi(X)V^*$ has vanishing diagonal for each
  traceless matrix $X$.
\end{corollary}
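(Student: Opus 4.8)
The plan is to obtain Corollary~\ref{cor:method} as an immediate consequence of Theorem~\ref{thm:condition} together with the description of all complementary channels of $\Phi$ recorded in Section~2.3. Theorem~\ref{thm:condition} states that, for $N\geq r$, the channel $\Phi$ is mixed unitary with mixed-unitary rank at most $N$ if and only if there exists \emph{some} channel $\Xi:\M_n\rightarrow\M_N$ complementary to $\Phi$ such that $\Xi(X)$ has vanishing diagonal for every traceless matrix $X$. Hence it remains only to reparametrize the existential quantifier over complementary channels $\Xi:\M_n\rightarrow\M_N$ in terms of the fixed complementary channel $\Psi:\M_n\rightarrow\M_r$ that is given in the hypothesis.

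To do this I would invoke the fact (recalled in Section~2.3) that, since $r = \op{rank}(J(\Phi))$ and $\Psi:\M_n\rightarrow\M_r$ is complementary to $\Phi$, a channel $\Xi:\M_n\rightarrow\M_N$ is complementary to $\Phi$ if and only if there is an isometry $V\in\M_{N,r}$ with $\Xi(X) = V\Psi(X)V^*$ for every $X\in\M_n$; the assumption $N\geq r$ guarantees that such isometries exist. Substituting $\Xi = V\Psi(\cdot)V^*$ into the condition furnished by Theorem~\ref{thm:condition}, the statement ``there is a complementary channel $\Xi:\M_n\rightarrow\M_N$ with $\Xi(X)$ of vanishing diagonal for all traceless $X$'' becomes exactly ``there is an isometry $V\in\M_{N,r}$ with $V\Psi(X)V^*$ of vanishing diagonal for all traceless $X$,'' which is the assertion of the corollary. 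Both directions follow simultaneously, since the two characterizations being chained together are biconditionals.

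I do not anticipate a genuine obstacle here: once one grants that the notion of ``channel complementary to $\Phi$'' used in Theorem~\ref{thm:condition} coincides with the one used in the classification of Section~2.3 (it does, both reducing to the same Stinespring-based definition), the argument is a one-line substitution. The only point meriting a sentence of care is to confirm that the representative $\Psi$ has output space $\M_r$ with $r$ equal to the Choi rank, so that the isometry linking $\Psi$ to an arbitrary complementary $\Xi:\M_n\rightarrow\M_N$ indeed lies in $\M_{N,r}$ as claimed.
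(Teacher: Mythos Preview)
Your proposal is correct and follows essentially the same approach as the paper: the paper's proof is a single sentence invoking Theorem~\ref{thm:condition} together with the fact that every channel $\Xi:\M_n\rightarrow\M_N$ complementary to $\Phi$ arises as $V\Psi(\cdot)V^*$ for some isometry $V\in\M_{N,r}$. Your write-up simply spells out this substitution in more detail.
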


\begin{proof}
 This follows from Theorem \ref{thm:condition} and the observation that another channel $\Xi:\M_n\rightarrow\M_N$ is complementary to
$\Phi:\M_n\rightarrow\M_n$ if and only if there exists an isometry
$V\in\M_{N,r}$ satisfying $\Xi(X)= V\Psi(X)V^*$ for each $X\in\M_n$.
\end{proof}

Corollary \ref{cor:method} suggests a useful method for approximating the
mixed-unitary rank of a channel.
Given a channel $\Phi:\M_n\rightarrow\M_n$ with Choi rank $r$ and a candidate
integer $N\geq r$ for the mixed-unitary rank of $\Phi$, one can use the
following procedure to determine if $\Phi$ can be decomposed as a convex
combination of $N$ unitary channels:
 \begin{enumerate}
 \item Choose any channel $\Psi:\M_n\rightarrow\M_r$ that is complementary
   to $\Phi$.
 \item Define a subspace of matrices $\mathcal{B} \subseteq \M_r$ as
   $\mathcal{B} = \{\Psi(X)\, :\, X\in\I_n^\perp\}$ and choose matrices
   $B_1,\dots,B_N\in\mathcal{B}$ such that
   $\mathcal{B} = \op{span}\{B_1,\dots,B_N\}$. 
 \item Use numerical methods to search for an isometry
   $V\in\M_{N,r}$ such that $\ip{E_{j,j}}{VB_kV^*} = 0 $ for each pair
   of indices $j\in\{1,\dots,N\}$ and $k\in\{1,\dots,r\}$. 
 \end{enumerate}
 The last step cannot be performed efficiently (unless
 $\mathrm{P}=\mathrm{NP}$), but we have found that this procedure is useful for
 computing the mixed-unitary ranks of interesting channels for small choices of
 the dimension~$n$.

\section{Upper bounds on mixed-unitary rank}
\label{sec:mixedunitaryrank}

In this section we prove upper bounds on the mixed-unitary rank of
mixed-unitary channels.
We begin with the following general theorem.

\begin{theorem}
  \label{thm:upperbound}
  Let $n$ be a positive integer, let $\Phi:\M_n\rightarrow\M_n$ be a
  mixed-unitary channel having Choi rank~$r$ and mixed-unitary rank $N$, and
  let $s = \op{dim}(\mathcal{S}_{\Phi})$ be the dimension of the operator system
  of $\Phi$.
  It is the case that
 \begin{equation} \label{eq:rMr2d1}
   N \leq r^2 - s + 1.
 \end{equation}
\end{theorem}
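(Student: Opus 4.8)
The plan is to recast the question, via the complementary channel $\Psi$, as that of writing the identity $\I_r$ as a sum of rank-one positive semidefinite matrices all drawn from a subspace of $\M_r$ of dimension $r^2-s+1$, and then to invoke the conic form of Carath\'eodory's theorem inside that subspace.

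Fix a minimal Kraus representation $\Phi(X)=\sum_{j=1}^r A_jXA_j^*$, let $\Psi:\M_n\to\M_r$ be the associated complementary channel, and set $\mathcal{B}=\Psi(\I_n^\perp)$. Since $\mathcal{S}_\Phi=\op{span}\{A_k^*A_j\}$ and $\Psi(X)=0$ exactly when $X\perp\mathcal{S}_\Phi$, the image of $\Psi$ has dimension $s$. The first step is to pin down that $\op{dim}(\mathcal{B})=s-1$: as $\Psi$, being a channel, is trace preserving, every element of $\mathcal{B}$ is traceless, whereas $\Psi(\I_n)$ is positive definite with $\Tr(\Psi(\I_n))=n$ (because $A_1,\dots,A_r$ are linearly independent and $\sum_j A_j^*A_j=\I_n$), so $\op{im}(\Psi)=\op{span}\{\Psi(\I_n)\}\oplus\mathcal{B}$. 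Since $\Psi$ preserves Hermiticity and $\I_n^\perp$ is self-adjoint, $\mathcal{B}$ is self-adjoint, hence so is $\mathcal{B}^\perp\subseteq\M_r$; and $\mathcal{B}^\perp$ is a subspace of complex dimension $D:=r^2-s+1$ that contains $\I_r$ (as every element of $\mathcal{B}$ is traceless), whose real subspace $\mathcal{H}$ of Hermitian elements therefore has real dimension $D$.

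The bridge to unitaries is the identity $\ip{A_w^*A_w}{X}=\ip{ww^*}{\Psi(X)}$, valid for every $w\in\complex^r$ and $X\in\M_n$, where $A_w:=\sum_j w_jA_j$; it shows that $A_w^*A_w\in\op{span}\{\I_n\}$ precisely when $ww^*\in\mathcal{B}^\perp$. Taking any mixed-unitary decomposition $\Phi(X)=\sum_{k=1}^M q_kV_kXV_k^*$ (one exists since $\Phi$ is mixed unitary, and necessarily $M\ge r$, since this is a Kraus representation of a channel of Choi rank $r$), I relate the Kraus operators $\sqrt{q_k}\,V_k$ to $A_1,\dots,A_r$ through the isometry $W\in\M_{M,r}$ supplied by \eqref{eq:krausfreedom}, so that $A_{w^{(k)}}=\sqrt{q_k}\,V_k$ when $w^{(k)}$ denotes the $k$th row of $W$. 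Then $A_{w^{(k)}}^*A_{w^{(k)}}=q_k\I_n$, so $w^{(k)}(w^{(k)})^*\in\mathcal{B}^\perp$ for each $k$, while $\sum_k w^{(k)}(w^{(k)})^*=\I_r$ because $W$ is an isometry. Hence $\I_r$ lies in the convex cone generated by the rank-one positive semidefinite matrices contained in $\mathcal{H}$.

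Applying the conic form of Carath\'eodory's theorem inside the $D$-dimensional real space $\mathcal{H}$ then yields $\I_r=\sum_{k=1}^t y_ky_k^*$ with $t\le D$ and each $y_ky_k^*\in\mathcal{B}^\perp$. Reversing the identity above, each $A_{y_k}^*A_{y_k}$ lies in $\op{span}\{\I_n\}$ and has trace $\ip{y_ky_k^*}{\Psi(\I_n)}=y_k^*\Psi(\I_n)y_k>0$, so $A_{y_k}=\sqrt{p_k}\,U_k$ for some $U_k\in\U_n$ and $p_k>0$; expanding $\sum_k A_{y_k}XA_{y_k}^*$ and using $\sum_k y_ky_k^*=\I_r$ recovers $\Phi(X)=\sum_{k=1}^t p_kU_kXU_k^*$, while $\sum_k p_k=\tfrac1n\Tr(\Psi(\I_n))=1$. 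Therefore $N\le t\le r^2-s+1$. (Alternatively, once $\I_r=\sum_k y_ky_k^*$ is in hand one may simply feed the isometry with rows $y_k^*$ into Corollary~\ref{cor:method}; moreover, since the constructed decomposition is a Kraus representation of $\Phi$ one has $t\ge r$, so the argument also gives $s\le r^2-r+1$.) I expect the main difficulty to be the bookkeeping that forces $\op{dim}(\mathcal{B})$ to be \emph{exactly} $s-1$ and, relatedly, that confines the relevant rank-one matrices to the real space $\mathcal{H}$ of dimension exactly $D$: this precision is what licenses the conic Carath\'eodory bound $D$ rather than $D+1$, hence the ``$-s+1$'' in \eqref{eq:rMr2d1}.
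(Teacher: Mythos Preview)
Your proof is correct. Both your argument and the paper's hinge on the same key subspace $\mathcal{B}^\perp=(\Psi^*)^{-1}(\op{span}\{\I_n\})\subseteq\M_r$ of complex dimension $r^2-s+1$, and both ultimately invoke Carath\'eodory's theorem, but the two arguments deploy it in opposite directions. The paper starts from a \emph{minimal} mixed-unitary decomposition with $N$ terms, shows (via Carath\'eodory applied to $J(\Phi)$) that the associated rank-one matrices $u_ku_k^*$ are linearly independent, and concludes $N\le\dim(\mathcal{B}^\perp)$ because they all lie there. You instead start from an \emph{arbitrary} mixed-unitary decomposition, observe that it witnesses $\I_r$ as a conic combination of rank-one PSD matrices inside the real $D$-dimensional Hermitian slice of $\mathcal{B}^\perp$, apply the conic Carath\'eodory bound to cut the number of terms to at most $D$, and then reconstruct a mixed-unitary decomposition from those terms. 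Your route is more constructive (it produces a short decomposition from any given one rather than arguing by contradiction with minimality), and it yields the side bound $s\le r^2-r+1$ immediately from $t\ge r$, whereas the paper obtains the related corollary $N\le r^2-r+1$ by separately proving $s\ge r$.
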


\begin{proof}
  Let $A_1,\dots,A_r\in\M_n$ be matrices offering a Kraus representation
  of $\Phi$:
  \begin{equation}
    \Phi(X)= \sum_{k=1}^r A_k X A_k^*
  \end{equation}
  for all $X\in\M_n$.
  Also define $\Psi:\M_n\rightarrow\M_r$ as
  \begin{equation}
    \Psi(X) = \sum_{j,k=1}^r \ip{A_k^*A_j}{X} E_{j,k}
  \end{equation}
  for all $X\in\M_n$, so that $\Psi$ is a channel complementary to $\Phi$.
  It is the case that $\op{im}(\Psi^*) = \mathcal{S}_\Phi$ and therefore
  $s = \op{dim}(\op{im}(\Psi^*))$.
  By the rank-nullity theorem we have that
  \begin{equation}
    \op{dim}(\ker(\Psi^*)) + \op{dim}(\op{im}(\Psi^*)) = \op{dim}(\M_r) = r^2,
  \end{equation}
  and so the theorem will follow from a demonstration that
  $N\leq \op{dim}(\ker(\Psi^*))+1$.  
  Toward this goal, suppose $U_1,\dots,U_N\in\U_n$ are unitary matrices
  and $(p_1,\dots,p_N)$ is a probability vector such that
  \begin{equation}
    \Phi(X)=\sum_{k=1}^N p_k U_k X U_k^*
  \end{equation}
  for each $X\in\M_n$, and let us observe that each $p_k$ must be nonzero by
  the assumption that $N$ is the mixed-unitary rank of $\Phi$.

  At this point we have two Kraus representations of $\Phi$, which must be
  related as was discussed in the previous section:
  there must exist an isometry $V\in\M_{N,r}$ such that 
  \begin{equation}
    \sqrt{p_k}U_k = \sum_{j=1}^r V(k,j) A_j 
  \end{equation}
  for every $k\in\{1,\dots,N\}$.
Define $u_k \in\complex^r$ as $u_k=V^{\t} e_k$ for each $k\in\{1,\dots,N\}$ and define a matrix  $A\in\M_{n^2,r}$ as
  \begin{equation}
    A = \sum_{j=1}^r \vec(A_j)e_j^*.
  \end{equation}
  Observe that
  \begin{equation}
    \sqrt{p_k}\vec(U_k) = A V^{\t} e_k = A u_k
  \end{equation}
  for each $k\in\{1,\dots,N\}$.

  Next, consider the collection
  \begin{equation}
    \label{eq:vecUkvecUk}
    \{\vec(U_1)\vec(U_1)^*,\,\dots,\,\vec(U_N)\vec(U_N)^*\},
  \end{equation}
  which we claim must be linearly independent.
  To verify this claim, suppose to the contrary that
  $\alpha_1,\ldots,\alpha_N\in\complex$ are not all zero and satisfy
  \begin{equation}
    \alpha_1 \vec(U_1)\vec(U_1)^* + \cdots + \alpha_N \vec(U_N)\vec(U_N)^* = 0.
  \end{equation}
  By taking the trace of both sides of this equation, we find that
  $\alpha_1 + \cdots + \alpha_N = 0$, and therefore
  \begin{equation}
    \sum_{k = 2}^N
    \alpha_k \bigl( \vec(U_k)\vec(U_k)^{\ast} - \vec(U_1)\vec(U_1)^{\ast}\bigr)
    = 0.
  \end{equation}
  As $\alpha_1 + \cdots + \alpha_N = 0$, it cannot be that
  $\alpha_2,\ldots,\alpha_N$ are all zero, so the collection
  \begin{equation}
    \bigl\{
    \vec(U_2)\vec(U_2)^{\ast} - \vec(U_1)\vec(U_1)^{\ast},\ldots,
    \vec(U_N)\vec(U_N)^{\ast} - \vec(U_1)\vec(U_1)^{\ast}
    \bigr\}
  \end{equation}
  is linearly dependent, implying that the collection \eqref{eq:vecUkvecUk}
  generates an affine subspace of dimension strictly less than $N-1$.
  This, however, contradicts the assumption that the mixed-unitary rank of
  $\Phi$ is $N$ through Carath\'eodory's theorem.
  
  Given that the collection \eqref{eq:vecUkvecUk} is linearly independent and
  each $p_k$ is nonzero, it follows that the collection
  \begin{equation}
    \{Au_ku_k^*A^*,\dots,Au_N u_N^*A^*\}
    = \{p_1\vec(U_1)\vec(U_1)^*,\dots,p_N \vec(U_N)\vec(U_N)^*\}
  \end{equation}
  is also linearly independent.
  This implies that the collection $\mathcal{B}\subseteq\M_r$ defined as 
  \begin{equation}
    \mathcal{B}= \{u_1u_1^*,\dots,u_N u_N^*\}
  \end{equation}
  is linearly independent as well: $\op{dim}(\op{span}(\mathcal{B}))=N$.
  For each index $k\in\{1,\dots,N\}$ we see that
  \begin{equation}
    \Psi^*(u_ku_k^*)
    = \sum_{i,j=1}^r \bigip{E_{i,j}}{V^{\t}E_{k,k}\overline{V}}
    A_j^*A_i
    = \sum_{i,j=1}^r V(k,i) \overline{V(k,j)} A_j^*A_i
    = p_k U_k^* U_k = p_k\I_n,
  \end{equation}
  and therefore $\Psi^*(\op{span}(\mathcal{B}))=\op{span}\{\I_n\}$.
  It follows that $\op{dim}(\ker(\Psi^*)\cap\op{span}(\mathcal{B})) = N-1$ which
  implies that $\op{dim}(\ker(\Psi^*))\geq N-1$, completing the proof.
\end{proof}

\begin{corollary}
 Let $n$ be a positive integer, let $\Phi:\M_n\rightarrow\M_n$ be a
 mixed-unitary channel with mixed-unitary rank equal to $N$ and Choi rank equal
 to $r$.
 It is the case that
 \begin{equation}
   N\leq r^2-r+1.
 \end{equation}
\end{corollary}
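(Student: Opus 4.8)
The plan is to derive the inequality directly from Theorem~\ref{thm:upperbound}, which asserts that $N\le r^2-s+1$ for $s=\op{dim}(\mathcal{S}_\Phi)$. Since $r^2-s+1\le r^2-r+1$ precisely when $s\ge r$, the whole statement reduces to establishing the single inequality $s\ge r$ for every mixed-unitary channel $\Phi$ of Choi rank $r$; this is the relation between the dimension of the operator system and the Choi rank alluded to in the introduction, and I would treat it as the core of the argument.

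To prove $s\ge r$, fix a Kraus representation $\Phi(X)=\sum_{j=1}^{r}A_jXA_j^{\ast}$ with $r$ terms. Because $r$ equals the Choi rank, the operators $A_1,\dots,A_r$ are linearly independent, so they span an $r$-dimensional subspace $\mathcal{A}\subseteq\M_n$. The first and essential step is to observe that $\mathcal{A}$ contains a unitary matrix: taking any mixed-unitary decomposition $\Phi(X)=\sum_{k=1}^{N}p_kU_kXU_k^{\ast}$, the collections $\{A_j\}_{j=1}^{r}$ and $\{\sqrt{p_k}\,U_k\}_{k=1}^{N}$ are both Kraus representations of $\Phi$ with $N\ge r$, so by \eqref{eq:krausfreedom} there is an isometry $V\in\M_{N,r}$ with $\sqrt{p_k}\,U_k=\sum_{j=1}^{r}V(k,j)\,A_j$ for each $k$; choosing any index $k$ with $p_k>0$ then places $U_k$ in $\mathcal{A}$.

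The remaining step is routine bookkeeping. Given a unitary $U=\sum_{j=1}^{r}c_j A_j\in\mathcal{A}$, each matrix $U^{\ast}A_k=\sum_{j=1}^{r}\overline{c_j}\,A_j^{\ast}A_k$ lies in $\mathcal{S}_\Phi=\op{span}\{A_j^{\ast}A_k:1\le j,k\le r\}$, while the linear independence of $A_1,\dots,A_r$ together with the invertibility of $U^{\ast}$ forces $U^{\ast}A_1,\dots,U^{\ast}A_r$ to be linearly independent. Hence $\op{dim}(\mathcal{S}_\Phi)\ge r$, and combining this with Theorem~\ref{thm:upperbound} yields $N\le r^2-s+1\le r^2-r+1$. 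The only delicate point is the first step---locating a unitary (indeed, any invertible matrix would do) inside the span of a minimal family of Kraus operators---and this is precisely where the mixed-unitary hypothesis enters, since it can fail for general channels; beyond that I anticipate no obstacle.
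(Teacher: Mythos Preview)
Your proof is correct and follows essentially the same approach as the paper: both reduce to showing $s\ge r$ by exhibiting $r$ linearly independent elements of $\mathcal{S}_\Phi$ of the form $U^{\ast}A_k$ with $U$ unitary. The only cosmetic difference is that the paper works directly with the mixed-unitary Kraus family $\{U_1,\dots,U_N\}$ and considers $\{U_1^{\ast}U_1,\dots,U_1^{\ast}U_N\}\subset\mathcal{S}_\Phi$, whose span has dimension $\op{dim}(\op{span}\{U_1,\dots,U_N\})=r$, whereas you first pull a single unitary into the span of a minimal Kraus family and then multiply; the underlying idea is identical.
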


\begin{proof}
  By Theorem~\ref{thm:upperbound}, it will suffice to show that
  $r\leq \op{dim}(\mathcal{S}_\Phi)$.
  Let $U_1,\dots,U_N\in\U_n$ be unitary matrices and let $(p_1,\dots,p_N)$ be a
  probability vector for which
  \begin{equation}
    \Phi(X)=\sum_{k=1}^Np_kU_kXU_k^*
  \end{equation}
  for each $X\in\M_n$, and consider the collection of matrices
  \{$U_1^*U_1,\dots,U_1^*U_N\}\subset\mathcal{S}_\Phi$.
  As $U_1$ is invertible, one has that
  \begin{equation}
    \op{dim}(\{U_1^*U_1,\dots,U_1^*U_N\})
    = \op{dim}(\{U_1,\dots,U_N\}) = \op{rank}(J(\Phi))=r
  \end{equation}
  from which it follows that $\op{dim}(\mathcal{S}_\Phi)\geq r$.
\end{proof}

Next we will prove that any channel $\Phi:\M_n\rightarrow\M_n$ having an
operator system of dimension 3 or less must be mixed unitary, and indeed must
have mixed-unitary rank in agreement with its Choi rank.
By combining this theorem with the previous one, we obtain the upper bound
$\max\{r,r^2-3\}$ on the mixed-unitary rank of any mixed-unitary channel having
Choi rank $r$.
In the proof of the theorem to follow, we will use the fact that every
traceless square matrix is unitarily equivalent to one having a vanishing
diagonal.
This is a well-known fact that follows from the Toeplitz--Hausdorff theorem.
(See, for instance, Theorem 1.3.4 of \cite{HornJ1994}.)

\begin{theorem}\label{thm:lowdimension}
  Let $n$ be a positive integer, let $\Phi:\M_n\rightarrow\M_n$ be a channel,
  and suppose that $\op{dim}(\mathcal{S}_\Phi)\leq 3$.
  The channel $\Phi$ is mixed unitary with mixed-unitary rank equal to its Choi
  rank.
\end{theorem}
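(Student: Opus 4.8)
The plan is to apply Corollary~\ref{cor:method} with the choice $N = r$. Fix a minimal Kraus
representation $\Phi(X) = \sum_{k=1}^r A_k X A_k^{\ast}$ and let $\Psi:\M_n\rightarrow\M_r$ be the
associated complementary channel, $\Psi(X) = \sum_{j,k=1}^r \ip{A_k^{\ast} A_j}{X}E_{j,k}$. Since an
isometry in $\M_{r,r}$ is the same thing as a unitary matrix, Corollary~\ref{cor:method} reduces the
theorem to exhibiting a single unitary $V\in\M_r$ such that $V\Psi(X)V^{\ast}$ has vanishing diagonal
for every traceless $X$; granting this, $\Phi$ has mixed-unitary rank at most $r$, and since the
mixed-unitary rank is always at least the Choi rank, equality with $r$ follows. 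Equivalently, writing
$\mathcal{B} = \{\Psi(X):X\in\I_n^\perp\}\subseteq\M_r$ for the image of the traceless subspace under
$\Psi$, it suffices to find a unitary $V$ for which every matrix in $V\mathcal{B}V^{\ast}$ has
vanishing diagonal.

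Next I would record the structural properties of $\mathcal{B}$. As $\Psi$ is complex linear,
$\mathcal{B}$ is a complex subspace of $\M_r$; it is closed under the adjoint, since
$\Psi(X^{\ast}) = \Psi(X)^{\ast}$; and every element of $\mathcal{B}$ is traceless, because $\Psi$ is
trace-preserving and hence $\Tr(\Psi(X)) = \Tr(X) = 0$ whenever $X$ is traceless. Its dimension is
controlled by the hypothesis: one has $\op{im}(\Psi^{\ast}) = \mathcal{S}_{\Phi}$ (as in the proof of
Theorem~\ref{thm:upperbound}), so $\op{rank}(\Psi) = \op{rank}(\Psi^{\ast}) = s$, while
$\op{im}(\Psi) = \op{span}\{\Psi(\I_n)\} + \mathcal{B}$ with $\Psi(\I_n)$ having trace $n\neq 0$ and
therefore lying outside $\mathcal{B}$; it follows that $\op{dim}(\mathcal{B}) = s - 1 \leq 2$.

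The substantive step is then to handle a $\ast$-closed complex subspace $\mathcal{B}$ of traceless
matrices in $\M_r$ with $\op{dim}(\mathcal{B})\leq 2$. Being $\ast$-closed, $\mathcal{B}$ is spanned by
its Hermitian (hence traceless) elements, so one can choose a traceless $M\in\mathcal{B}$ with
$\mathcal{B} = \op{span}\{M, M^{\ast}\}$: take $M = 0$ if $\op{dim}(\mathcal{B}) = 0$, take $M = H$ for
a Hermitian generator $H$ if $\op{dim}(\mathcal{B}) = 1$, and take $M = H_1 + iH_2$ for a Hermitian
basis $H_1,H_2$ of $\mathcal{B}$ if $\op{dim}(\mathcal{B}) = 2$. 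Now invoke the cited consequence of
the Toeplitz--Hausdorff theorem: there is a unitary $V\in\M_r$ for which $VMV^{\ast}$ has vanishing
diagonal. Then $VM^{\ast}V^{\ast} = (VMV^{\ast})^{\ast}$ also has vanishing diagonal, so every element
of $V\mathcal{B}V^{\ast} = \op{span}\{VMV^{\ast}, VM^{\ast}V^{\ast}\}$ has vanishing diagonal, and
Corollary~\ref{cor:method} finishes the proof. The main obstacle is precisely this last step, and it
is exactly where $\op{dim}(\mathcal{S}_{\Phi})\leq 3$ enters: it forces $\op{dim}(\mathcal{B})\leq 2$,
which is the regime in which one application of the Toeplitz--Hausdorff fact to the single
complexified generator $H_1 + iH_2$ simultaneously annihilates the diagonals of two Hermitian
matrices; for three or more independent Hermitian directions no such common conjugation need exist, so
the argument does not extend directly past $s = 3$.
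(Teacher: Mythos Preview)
Your proposal is correct and follows essentially the same approach as the paper: both arguments invoke Corollary~\ref{cor:method} with $N=r$, identify the image $\mathcal{B}=\Psi(\I_n^{\perp})$ as a $\ast$-closed subspace of traceless $r\times r$ matrices of dimension at most $2$, and then apply the Toeplitz--Hausdorff fact to the single combination $H_1+iH_2$ to simultaneously kill the diagonals of both Hermitian generators. The only cosmetic difference is that the paper obtains $\dim(\mathcal{B})\leq 2$ by first writing $\mathcal{S}_\Phi=\op{span}\{\I_n,H,K\}$ in the domain and pushing forward, whereas you obtain it from the rank identity $\dim(\mathcal{B})=s-1$ on the target side.
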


\begin{proof}
 Let $r=\op{rank}(J(\Phi))$ be the Choi rank of $\Phi$, let
 $\Psi:\M_n\rightarrow\M_r$ be any channel complementary to $\Phi$,
 and observe that $\mathcal{S}_\Phi = \op{im}(\Psi^*)$. Define a subspace of matrices $\mathcal{A}\subseteq\M_r$ as
 \begin{equation}
   \mathcal{A} = \{\Psi(X)\,:\, X\in\I_n^\perp\}.
 \end{equation}
 As $\mathcal{A}$ is the image of all traceless matrices under the action of $\Psi$, by Corollary \ref{cor:method} it will suffice to show the existence of a unitary matrix $U\in\U_r$ such that $U\mathcal{A}U^*$ contains only matrices with vanishing diagonal. By the assumption $\op{dim}(\mathcal{S}_\Phi)\leq 3$, together with the
 observation that $\I_n\in \mathcal{S}_{\Phi}$, we conclude that there must exist
 traceless Hermitian matrices $H,K\in\I_n^\perp$ such that
 \begin{equation}
   \mathcal{S}_\Phi = \op{span}\{\I_n,H,K\}.
 \end{equation}
 As $\op{im}(\Psi^*)=\ker(\Psi)^\perp$, one therefore has that $\op{im}(\Psi)=\Psi(\mathcal{S}_\Phi)$ from which we may conclude that $\mathcal{A} = \op{span}\{\Psi(H),\Psi(K)\}$.
 Note that $\Tr(A)=0$ for each $A\in\mathcal{A}$, as $\Psi$ is trace preserving,
 and in particular
 \begin{equation}
   \Tr(\Psi(H+iK))=0.
 \end{equation}
 There must therefore exist a unitary matrix $U\in\U_n$ such that
 $U\Psi(H+iK)U^*$ has vanishing diagonal.
 Because both $H$ and $K$ are Hermitian, each of the Hermitian matrices
 $U\Psi(H)U^*$ and $U\Psi(H)U^*$ must therefore also have vanishing diagonal.
 It follows that $UAU^*$ has vanishing diagonal for every $A\in\mathcal{A}$,
 and therefore $\Phi$ is mixed unitary with mixed-unitary rank equal to $r$ by
 Corollary~\ref{cor:method}.
\end{proof}

We may also use Theorem \ref{thm:lowdimension} to show that every channel with
Choi rank equal to 2 is either extremal or a convex combination of two unitary
channels.

\begin{corollary}\label{cor:rank2}
  Let $n$ be a positive integer and let $\Phi:\M_n\rightarrow\M_n$ be a
  channel with Choi rank equal to $2$.
  If $\Phi$ is not an extreme point in the set of all channels, then $\Phi$ is
  mixed unitary with mixed-unitary rank equal to 2.
\end{corollary}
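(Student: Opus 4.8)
The plan is to combine Choi's characterization of extreme points (recalled in the preliminaries) with Theorem~\ref{thm:lowdimension}; this makes the argument very short, so I will essentially write it out. First I would recall that a channel $\Phi:\M_n\rightarrow\M_n$ with Choi rank $r$ satisfies $\op{dim}(\mathcal{S}_\Phi)\leq r^2$, with equality holding precisely when $\Phi$ is an extreme point of the convex set of all channels. Applying this with $r=2$, the hypothesis that $\Phi$ is \emph{not} an extreme point forces $\op{dim}(\mathcal{S}_\Phi)\leq 3$.

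Next I would invoke Theorem~\ref{thm:lowdimension}: since $\op{dim}(\mathcal{S}_\Phi)\leq 3$, the channel $\Phi$ is mixed unitary with mixed-unitary rank equal to its Choi rank. The Choi rank of $\Phi$ equals $2$ by assumption, so its mixed-unitary rank is exactly $2$, which is the desired conclusion.

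The hard part is essentially nonexistent here; the only things worth double-checking are that the two cited results apply as stated. Theorem~\ref{thm:lowdimension} imposes no lower bound on $\op{dim}(\mathcal{S}_\Phi)$ — it applies to every channel whose operator system has dimension at most $3$, and the condition $\I_n\in\mathcal{S}_\Phi$ guarantees that this dimension is at least $1$ in any case — so no degenerate situation can arise. For context one might also note that the hypothesis is genuinely needed: a Choi-rank-$2$ channel that is an extreme point of the set of channels (an amplitude-damping channel, say) is not mixed unitary at all, which is consistent with $\op{dim}(\mathcal{S}_\Phi)=4$ in that case.
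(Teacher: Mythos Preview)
Your proposal is correct and follows essentially the same argument as the paper: use Choi's extremality criterion to deduce $\op{dim}(\mathcal{S}_\Phi)\leq 3$ from non-extremality and $r=2$, then apply Theorem~\ref{thm:lowdimension}. The additional remarks you make (about degenerate cases and the amplitude-damping example) are fine context but not needed for the proof.
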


\begin{proof}
  The channel $\Phi$ is an extreme point in the convex set of all channels
  if and only if $\op{dim}(\mathcal{S}_\Phi) = \op{rank}(J(\Phi))^2$.
  Therefore, under the assumption that $\Phi$ is not extremal, it 
  follows that $\op{dim}(\mathcal{S}_\Phi)\leq 3$ by the assumption that $\Phi$
  has Choi rank 2.
  The channel $\Phi$ is therefore mixed unitary with mixed-unitary rank
  equal to 2 by Theorem \ref{thm:lowdimension}.
\end{proof}

Finally, we may combine the results of Theorem \ref{thm:upperbound} and Theorem \ref{thm:lowdimension} to improve the upper bound on the mixed-unitary rank of mixed-unitary channels with Choi rank equal to 3.

\begin{corollary}
 Let $n$ be a positive integer, let $\Phi:\M_n\rightarrow\M_n$ be a
 mixed-unitary channel with Choi rank equal to 3 and mixed-unitary rank equal to $N$. It is the case that $N\leq 6$. 
\end{corollary}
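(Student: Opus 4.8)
The plan is to combine the two bounds established above by splitting on the value of $s=\op{dim}(\mathcal{S}_\Phi)$. First recall that, by the corollary immediately following Theorem~\ref{thm:upperbound} (whose proof shows $s\geq r$ for any mixed-unitary channel), we have $s\geq r=3$; and since $\Phi$ has Choi rank $3$ we also have $s\leq r^2=9$. So it suffices to handle the cases $s=3$ and $s\geq 4$ separately.

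If $s\geq 4$, then Theorem~\ref{thm:upperbound} gives immediately
\begin{equation}
 N\leq r^2-s+1 = 10-s \leq 6,
\end{equation}
which is the desired bound. If instead $s=3$, then $\op{dim}(\mathcal{S}_\Phi)\leq 3$, so Theorem~\ref{thm:lowdimension} applies and tells us that $\Phi$ is mixed unitary with mixed-unitary rank equal to its Choi rank, i.e. $N=r=3\leq 6$. In either case $N\leq 6$, completing the proof.

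I do not anticipate a genuine obstacle here: the statement is purely a bookkeeping consequence of Theorem~\ref{thm:upperbound} and Theorem~\ref{thm:lowdimension}, together with the inequality $r\leq s$. The only point that warrants a moment's care is confirming that the two regimes $s=3$ and $s\geq 4$ together exhaust all possibilities — which they do, given $3=r\leq s$ — so that no value of $s$ is left uncovered. (Note in particular that the naive bound from the earlier corollary, $N\leq r^2-r+1=7$, is weaker, and the improvement to $6$ is obtained precisely by using the full strength of the $s$-dependent bound in the range $s\geq 4$ while falling back on Theorem~\ref{thm:lowdimension} at $s=3$.)
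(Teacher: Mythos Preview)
Your proposal is correct and follows essentially the same approach as the paper's proof: split on whether $s\leq 3$ (where Theorem~\ref{thm:lowdimension} gives $N=r=3$) or $s\geq 4$ (where Theorem~\ref{thm:upperbound} gives $N\leq 10-s\leq 6$). Your additional remark that $s\geq r=3$ is not strictly needed, since Theorem~\ref{thm:lowdimension} already covers all $s\leq 3$, but it does no harm.
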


\begin{proof}
   If $s = \op{dim}(\mathcal{S}_\Phi)\leq 3$, then $N=3$ by
  Theorem \ref{thm:lowdimension}; if $s\geq 4$, then $N\leq 6$ by
 Theorem~\ref{thm:upperbound}.
\end{proof}

\section{A construction for non-trivial mixed-unitary rank}

Next we present a construction to obtain mixed-unitary channels with
mixed-unitary ranks strictly larger than their Choi ranks.
The construction makes use of two concepts that were discussed in the
preliminaries section: the direct-sum of two channels and the notion of
a unique mixed-unitary decomposition of a mixed-unitary channel.

\begin{theorem}\label{thm:r2r}
  Let $n$ and $m$ be positive integers, let $\Phi:\M_n\rightarrow\M_n$ be a
  mixed-unitary channel having Choi rank equal to $r$, mixed-unitary rank
  equal to $r$, and a unique mixed-unitary decomposition.
  For every unitary channel $\Psi:\M_m\rightarrow\M_m$, the direct sum
  channel $\Phi\oplus\Psi$ has Choi rank $r+1$ and mixed-unitary rank $2r$.
\end{theorem}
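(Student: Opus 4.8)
The plan is to establish the two claims separately: first that $\op{rank}(J(\Phi\oplus\Psi)) = r+1$, which is immediate, and then that the mixed-unitary rank of $\Phi\oplus\Psi$ equals $2r$, which requires both an upper bound and a matching lower bound. For the Choi rank, since $\Psi$ is a unitary channel we have $\op{rank}(J(\Psi)) = 1$, so the additivity of Choi rank under direct sums gives $\op{rank}(J(\Phi\oplus\Psi)) = r + 1$. For the upper bound on the mixed-unitary rank, write $\Phi(X) = \sum_{k=1}^r p_k U_k X U_k^\ast$ and $\Psi(Y) = W Y W^\ast$. The naive bound from combining the two block decompositions would give $r \cdot 1$ terms using matrices $U_k \oplus W$, but these do not reproduce $\Phi\oplus\Psi$; rather, to build $\Phi\oplus\Psi$ as a mixed-unitary channel I would use block-diagonal unitaries of the form $U_k \oplus W_{k}$ for suitable $W_k \in \U_m$, and the point is that one needs enough pairs $(U_k, W_k)$ so that the $\Psi$-block averages correctly. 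Since $\Psi$ is unitary (rank one), its mixed-unitary rank is $1$, but when forced to piggyback on the $r$ weights $p_k$ one cannot in general collapse to fewer than $2r$ block-diagonal unitaries; I would exhibit an explicit decomposition with $2r$ terms — e.g.\ pairing each $U_k$ with two distinct unitaries $W, W'$ chosen so that $\frac{1}{2}(W X W^\ast + W' X W'^\ast)$-type corrections cancel, or more cleanly, taking $U_k \oplus (\omega_k W)$ and $U_k \oplus (\overline{\omega_k} W)$ with appropriate phases so that the off-diagonal cross terms in the complementary channel vanish.

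The substance is the lower bound: $\Phi\oplus\Psi$ cannot be written as a convex combination of fewer than $2r$ unitary channels. Here I would exploit the structure of Kraus representations of direct sums recalled in the preliminaries: every Kraus representation of $\Phi\oplus\Psi$ has the form $\bigl(\begin{smallmatrix} A_\ell & 0 \\ 0 & B_\ell \end{smallmatrix}\bigr)$ for $\ell = 1,\dots,t$, and if this is to be a mixed-unitary decomposition each block-diagonal matrix must be a scalar multiple of a unitary, which forces each $A_\ell$ to be a scalar multiple of a unitary and each $B_\ell$ likewise (with matching scalar moduli). Restricting to the top-left block, the matrices $A_\ell$ must then furnish a mixed-unitary decomposition of $\Phi$. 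Because $\Phi$ has a \emph{unique} mixed-unitary decomposition into the $r$ channels $U_1,\dots,U_r$, the index set $\{1,\dots,t\}$ partitions into $T_1 \cup \cdots \cup T_r$ with $A_\ell \sim U_k$ for $\ell \in T_k$ and $\sum_{\ell\in T_k}$ of the weights equal to $p_k$. Restricting instead to the bottom-right block, the $B_\ell$ give a mixed-unitary decomposition of $\Psi = W(\cdot)W^\ast$. The key tension: within a single block $T_k$, all the $A_\ell$ are phase-multiples of the same $U_k$, so the Kraus freedom \eqref{eq:krausfreedom} linking the two representations of $\Phi\oplus\Psi$ forces the corresponding $B_\ell$ (for $\ell\in T_k$) to be determined up to a unitary change of basis on an at-most-$\abs{T_k}$-dimensional space; but for $\Phi\oplus\Psi$ to be block-diagonal-unitary we need each $\bigl(\begin{smallmatrix} A_\ell & 0 \\ 0 & B_\ell\end{smallmatrix}\bigr)$ individually proportional to a unitary, and I expect to show that if some $\abs{T_k} = 1$ then the single matrix $\bigl(\begin{smallmatrix} A_\ell & 0 \\ 0 & B_\ell\end{smallmatrix}\bigr)$ being unitary together with $A_\ell = \alpha U_k$ forces $\abs{\alpha}^2 = p_k$, yet then the $\Psi$-block contributes a full weight $p_k$ to a single unitary $W$-direction — and summing over all singleton blocks must still reconstruct $\Psi$, which has weight $1$ on $W$; a counting/consistency argument then shows not every $T_k$ can be a singleton, in fact each $\abs{T_k}\geq 2$, giving $t = \sum_k \abs{T_k} \geq 2r$.

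The main obstacle I anticipate is making the last step rigorous: showing that each block $T_k$ must contain at least two indices. The heuristic is that within block $T_k$ the matrices $B_\ell$ all live in a common subspace but must combine so that $\sum_{\ell\in T_k} B_\ell X B_\ell^\ast = \bigl(\sum_{\ell\in T_k}\abs{\alpha_\ell}^2\bigr) W X W^\ast$ restricted appropriately — wait, more precisely the $\Psi$-part of the full channel is $WYW^\ast$ and the $B_\ell$ across \emph{all} $\ell$ produce it; within block $T_k$, since every $A_\ell = \alpha_\ell U_k$ with $\sum_{\ell\in T_k}\abs{\alpha_\ell}^2 = p_k$, and each $\bigl(\begin{smallmatrix}\alpha_\ell U_k & 0\\0 & B_\ell\end{smallmatrix}\bigr)$ is a scalar times a unitary, we get $B_\ell^\ast B_\ell = \abs{\alpha_\ell}^2 \I_m$, so each $B_\ell = \alpha_\ell W_\ell$ for a unitary $W_\ell$. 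If $\abs{T_k} = 1$ this forces the sole $W_\ell$ to equal $W$ (since the unique-decomposition structure of $\Psi$ — a unitary channel — has mixed-unitary rank $1$). So every singleton block contributes weight $p_k$ to the $W$-direction of $\Psi$; if \emph{all} blocks were singletons the total $\Psi$-contribution would be $\sum_k p_k \cdot (W X W^\ast) = WXW^\ast$, which is actually fine — so singletons alone do \emph{not} immediately contradict anything, and I will need the finer point that the isometry $V$ in \eqref{eq:krausfreedom} relating the two representations cannot be block-respecting unless extra terms are present; concretely, I expect the real argument is that $\Phi\oplus\Psi$ has $\op{dim}(\mathcal{S}_{\Phi\oplus\Psi}) = \op{dim}(\mathcal{S}_\Phi) + \op{dim}(\mathcal{S}_\Psi) + (\text{cross terms})$ and one combines Theorem \ref{thm:upperbound}'s proof technique with the uniqueness hypothesis — so I would recast the lower bound via the complementary-channel / operator-system bookkeeping rather than raw Kraus manipulation, and that reconciliation is where the care is needed.
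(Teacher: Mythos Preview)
Your setup is right: the Choi-rank computation, the explicit $2r$-term decomposition (your phase-pairing idea is exactly the paper's $U_k\oplus \I_m$ and $U_k\oplus(-\I_m)$ construction), the block-diagonal form of all Kraus operators, and the use of uniqueness to partition $\{1,\dots,t\}$ into $T_1\cup\cdots\cup T_r$ with $V_\ell\sim U_k$ for $\ell\in T_k$ are all as in the paper. You also correctly diagnose that restricting to the two diagonal blocks separately cannot by itself rule out all $\abs{T_k}=1$: indeed, $r$ unitaries $U_k\oplus W$ with weights $p_k$ do reproduce both $\Phi$ on the top block and $\Psi$ on the bottom block.

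The gap is that you then veer toward operator-system bookkeeping, when the missing constraint comes from the \emph{off-diagonal} blocks. The map $\Phi\oplus\Psi$ kills the off-diagonal entries:
\[
(\Phi\oplus\Psi)\begin{pmatrix}0 & Z\\ 0 & 0\end{pmatrix}=0,
\]
but evaluating the candidate decomposition on this input gives (after WLOG taking $W=\I_m$, so each $W_\ell=\alpha_\ell\I_m$ with $\abs{\alpha_\ell}=1$)
\[
\sum_{\ell=1}^t q_\ell\,V_\ell\,Z\,(\alpha_\ell\I_m)^\ast
=\Bigl(\sum_{\ell=1}^t q_\ell\,\overline{\alpha_\ell}\,V_\ell\Bigr)Z.
\]
Hence $\sum_\ell q_\ell\,\overline{\alpha_\ell}\,V_\ell=0$. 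Now split this sum along the partition and use $V_\ell=\beta_{\ell,k}U_k$ for $\ell\in T_k$: linear independence of $U_1,\dots,U_r$ forces
\[
\sum_{\ell\in T_k} q_\ell\,\overline{\alpha_\ell}\,\beta_{\ell,k}=0
\]
for every $k$. Each summand is a positive number times a complex unit, and the sum is nonempty (since $p_k>0$), so $\abs{T_k}\geq 2$. That gives $t\geq 2r$ directly. No complementary-channel or operator-system argument is needed; the whole lower bound is this one off-diagonal vanishing condition.
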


\begin{proof}
  We begin by noting that there is no loss of generality in assuming $\Psi$ is
  the identity channel on $\M_m$, which we shall do for the remainder of the
  proof.
  The Choi rank of $\Phi\oplus\Psi$ can simply be calculated:
  \begin{equation}
    \op{rank}(J(\Phi\oplus\Psi)) = \op{rank}(J(\Phi)) + \op{rank}(J(\Psi)) = r+1.
  \end{equation}
  It therefore remains to prove that the mixed-unitary rank of $\Phi\oplus\Psi$
  is equal to $2r$.
  For the purpose of doing this, we let
  \begin{equation}
    \label{eq:unique decomposition}
    \Phi(X) = \sum_{k=1}^r p_k U_k X U_k^*
  \end{equation}
  be a unique mixed-unitary decomposition of $\Phi$, the existence of which has
  been assumed by the theorem.
  We also observe that
  \begin{equation}
    (\Phi\oplus \Psi)(Z)
    =\frac{1}{2} \sum_{k=1}^r p_k
    \begin{pmatrix}
      U_k & 0\\
      0 & \I_m
    \end{pmatrix}
    Z
    \begin{pmatrix}
      U_k & 0\\
      0 & \I_m
    \end{pmatrix}^* +
    \frac{1}{2}\sum_{k=1}^r p_k
    \begin{pmatrix}
      U_k & 0\\
      0 & -\I_m
    \end{pmatrix}
    Z
    \begin{pmatrix}
      U_k & 0\\
      0 & -\I_m
    \end{pmatrix}^*
  \end{equation}
  is a mixed-unitary decomposition of $\Phi\oplus\Psi$, establishing that its
  mixed-unitary rank is no greater than $2r$.

  To complete the proof, we must establish that every mixed-unitary
  decomposition of $\Phi\oplus\Psi$ has at least $2r$ terms.
  With this task in mind, and recalling from the discussion on direct
  sums of channels that each Kraus operator of $\Phi \oplus \Psi$
  is necessarily a direct sum of matrices,  
  consider any mixed-unitary decomposition
  \begin{equation}
    \label{eq:block mixed-unitary decomposition}
    (\Phi\oplus \Psi)(Z)
    = \sum_{j=1}^t q_j
    \begin{pmatrix}
      V_j & 0\\
      0 & W_j
    \end{pmatrix}
    Z
    \begin{pmatrix}
      V_j & 0\\
      0 & W_j
    \end{pmatrix}^*.
  \end{equation}
  We note that the decomposition \eqref{eq:block mixed-unitary decomposition}
  implies that
  \begin{equation}
    \Phi(X) = \sum_{j=1}^t q_j V_j X V_j^{\ast}
    \quad\text{and}\quad
    \Psi(Y) = \sum_{j=1}^t q_j W_j Y W_j^{\ast}
  \end{equation}
  for all $X\in\M_n$ and $Y\in\M_m$.
  Because $\Psi$ is the identity channel on $\M_m$, it must therefore be the
  case that $W_j\sim\I_m$ for all $j\in\{1,\ldots,t\}$.
  That is, there exist complex units $\alpha_1,\ldots,\alpha_t$ such that
  $W_j = \alpha_j\I_m$ for all $j\in\{1,\ldots,t\}$. 
  Moreover, by the assumption that \eqref{eq:unique decomposition}
  is a unique mixed-unitary decomposition, there must therefore exist a
  partition $\{1,\ldots,t\} = T_1 \cup\cdots\cup T_r$ such that
  for every $k\in\{1,\ldots,r\}$ we have $V_j \sim U_k$ for every
  $j\in T_k$ and $p_k = \sum_{j\in T_k}q_j$.
  Let us choose complex units $\beta_{j,k}$ for all $j\in T_k$ so that
  $V_j = \beta_{j,k} U_k$. 
  Finally, we observe that the decomposition
  \eqref{eq:block mixed-unitary decomposition}
  implies that
  \begin{equation}
    (\Phi\oplus \Psi)
    \begin{pmatrix}
      0 & Z\\
      0 & 0
    \end{pmatrix}
    = \sum_{j=1}^t q_j
    \begin{pmatrix}
      0 & V_j Z W_j^{\ast} \\
      0 & 0
    \end{pmatrix}
    = \sum_{j=1}^t q_j
    \begin{pmatrix}
      0 & \overline{\alpha_j}V_j Z \\
      0 & 0
    \end{pmatrix}
  \end{equation}
  for all $Z\in\M_{n,m}$.
  The direct sum of two channels must zero-out the off-diagonal blocks of its
  input, and therefore we conclude that
  \begin{equation}
    \sum_{j = 1}^t q_j \overline{\alpha_j} V_j = 0.
  \end{equation}
  By splitting this sum according to the partition $T_1\cup\cdots\cup T_r$, we
  find that
  \begin{equation}
    0 = \sum_{k = 1}^r \sum_{j\in T_k} q_j \overline{\alpha_j} V_j
    = \sum_{k = 1}^r\Biggl(
    \sum_{j\in T_k} q_j \overline{\alpha_j} \beta_{j,k}\Biggr) U_k.
  \end{equation}

  The matrices $U_1,\ldots,U_r$ are linearly independent by the assumption that
  $\Phi$ has Choi rank $r$.
  It is therefore the case that
  \begin{equation}
    \label{eq:partition sums 0}
    \sum_{j\in T_k} q_j \overline{\alpha_j} \beta_{j,k} = 0
  \end{equation}
  for every $k\in\{1,\ldots,r\}$.
  Note that it cannot be that there is any choice of $k$ such that
  $q_j = 0$ for every $j\in T_k$, for then we would have
  $p_k = \sum_{j\in T_k}q_j = 0$, which violates the assumption that
  $\Phi$ has Choi rank $r$.
  Each of the sums \eqref{eq:partition sums 0} is therefore a positive linear
  combination of complex units, and consequently $\abs{T_k} \geq 2$ for ever
  $k\in\{1,\ldots,r\}$.
  This implies $t = \abs{T_1} + \cdots + \abs{T_r} \geq 2r$, as required.
\end{proof}

Naturally, in order to use the previous theorem to obtain examples of
mixed-unitary channels whose mixed-unitary ranks are greater than their Choi
ranks, one must address the following question:
Under what conditions must a mixed-unitary channel $\Phi$ have a unique
mixed-unitary decomposition?
The following theorem provides one suitable condition:
if the dimension $s$ of the operator system of a mixed-unitary channel $\Phi$
satisfies $s = r^2 - r + 1$, for $r$ being the Choi rank of $\Phi$, then the
mixed-unitary rank of $\Phi$ must also be $r$, and moreover $\Phi$ possesses a
unique mixed-unitary decomposition. 

\begin{theorem}\label{thm:uniqueMU}
  Let $n$ be a positive integer and let $\Phi:\M_n\rightarrow\M_n$ be a
  mixed-unitary channel having Choi rank~$r$.
  If the dimension $s$ of the operator system of $\Phi$ is given by
  $s = r^2 - r + 1$, then $\Phi$ has mixed-unitary rank equal to $r$ and has a
  unique mixed-unitary decomposition.
\end{theorem}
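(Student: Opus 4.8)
The plan is to dispatch the rank claim immediately and then devote the work to uniqueness. Since $s = r^2-r+1$, Theorem~\ref{thm:upperbound} gives $N \le r^2 - s + 1 = r$, and together with the trivial lower bound $N \ge r$ this yields $N = r$; in particular $\Phi$ admits a mixed-unitary decomposition $\Phi(X) = \sum_{k=1}^r p_k U_k X U_k^*$ in which every $p_k > 0$ (a decomposition with a zero weight would not be minimal). Since $\Phi$ has Choi rank $r$, the vectors $\vec(U_1),\dots,\vec(U_r)$, and hence the matrices $U_1,\dots,U_r$, are linearly independent. This is the reference decomposition, and the goal is to show it is a unique mixed-unitary decomposition in the sense of the definition.

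The crux of the argument is to extract structural information from the hypothesis $s = r^2-r+1$. Since $\mathcal{S}_\Phi = \op{span}\{U_j^*U_k : 1\le j,k\le r\}$ and each $U_k^*U_k = \I_n$, the space $\mathcal{S}_\Phi$ is already spanned by the $1 + (r^2-r)$ matrices in $\{\I_n\}\cup\{U_j^*U_k : j\ne k\}$; as $\dim\mathcal{S}_\Phi$ equals this count, that spanning set must be a basis. Thus $\{\I_n\}\cup\{U_j^*U_k : j\ne k\}$ is linearly independent. Establishing and then exploiting this fact is the main obstacle; everything afterward is bookkeeping.

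Now take an arbitrary mixed-unitary decomposition $\Phi(X) = \sum_{j=1}^t q_j V_j X V_j^*$, discarding any zero-weight terms so that $q_j > 0$ for all $j$ and $t \ge r$. By the relation~\eqref{eq:krausfreedom} between Kraus representations there is an isometry $W \in \M_{t,r}$ with $\sqrt{q_j}\,V_j = \sum_{k=1}^r W(j,k)\sqrt{p_k}\,U_k$ for every $j$. Expanding $q_j\I_n = q_j V_j^*V_j = \sum_{k,l}\overline{W(j,k)}W(j,l)\sqrt{p_kp_l}\,U_k^*U_l$ and separating the $k=l$ terms, which collapse to a multiple of $\I_n$, from the $k\ne l$ terms, the linear independence fact together with $p_kp_l>0$ forces $\overline{W(j,k)}W(j,l) = 0$ whenever $k\ne l$. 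Hence each row of $W$ has at most one nonzero entry; since $\sqrt{q_j}\,V_j \ne 0$ and the $U_k$ are linearly independent, exactly one entry per row is nonzero. Setting $T_k = \{j : W(j,k)\ne 0\}$ therefore gives a partition of $\{1,\dots,t\}$.

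It remains to verify the two conditions in the definition of a unique mixed-unitary decomposition. For $j\in T_k$ we have $\sqrt{q_j}\,V_j = W(j,k)\sqrt{p_k}\,U_k$; since $V_j$ and $U_k$ are unitary the scalar $W(j,k)\sqrt{p_k}/\sqrt{q_j}$ has modulus one, so $V_j \sim U_k$, and multiplying this identity by its own adjoint and taking traces gives $q_j = \abs{W(j,k)}^2 p_k$. Summing over $T_k$ and using that $W$ is an isometry,
\begin{equation}
  \sum_{j\in T_k} q_j = p_k\sum_{j\in T_k}\abs{W(j,k)}^2 = p_k\sum_{j=1}^t\abs{W(j,k)}^2 = p_k\,(W^*W)(k,k) = p_k .
\end{equation}
Both conditions hold, so the reference decomposition is a unique mixed-unitary decomposition, completing the proof.
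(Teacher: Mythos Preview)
Your proof is correct and follows essentially the same approach as the paper's: both deduce from $s=r^2-r+1$ the key linear-independence fact (you state it directly as independence of $\{\I_n\}\cup\{U_j^\ast U_k:j\neq k\}$, while the paper phrases the equivalent statement via the complementary channel as ``$\Psi^\ast(Y)\in\op{span}\{\I_n\}$ iff $Y$ is diagonal''), and then both use it to show that each row of the isometry $W$ relating the two Kraus representations has exactly one nonzero entry. Your packaging is a bit more direct, but the argument is the same.
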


\begin{proof}
  Suppose $\Phi$ has mixed-unitary rank equal to $N$.
  By Theorem \ref{thm:upperbound} we have $N \leq r^2 - s + 1 = r$, and
  therefore $N = r$.
  There must therefore exist distinct unitary matrices $U_1,\dots,U_r\in\U_n$ and a
  probability vector $(p_1,\dots,p_r)$ so that
  \begin{equation}
    \label{eq:unique mixed-unitary decomposition}
    \Phi(X) = \sum_{k=1}^r p_k U_k X U_k^*
  \end{equation}
  for every $X\in\M_n$.
  It remains to prove that \eqref{eq:unique mixed-unitary decomposition} is
  a unique mixed-unitary decomposition of $\Phi$.
  Toward this goal we define a map $\Psi:\M_n\rightarrow\M_r$ as
  \begin{equation}
    \Psi(X) = \sum_{j,k=1}^r \sqrt{p_jp_k}\, \ip{U_k^*U_j}{X}\, E_{j,k}
  \end{equation}
  for all $X\in\M_n$, so that $\Psi$ is a complementary channel to $\Phi$.
  For convenience we note that
  \begin{equation}
    \Psi^{\ast}(Y) = \sum_{j,k=1}^r \sqrt{p_j p_k}\,
    Y(j,k) \, U_k^*U_j
  \label{eq:comppsi}
  \end{equation}
  for all $Y\in\M_r$.

  We now observe that $\Psi^{\ast}(Y) \in \op{span}\{\I_n\}$ if and only if
  $Y\in\M_r$ is a diagonal matrix.
  Indeed, from (\ref{eq:comppsi}), for every diagonal matrix $Y$, 
  $\Psi^{\ast}(Y) \in\op{span}\{\I_n\}$. On the other hand, by the rank nullity theorem we have
  \begin{equation}
    \op{dim}(\ker(\Psi^{\ast})) = r^2 - \op{dim}(\op{im}(\Psi^{\ast})) = r^2 - s = r - 1,
  \end{equation}
  so the subspace containing all matrices $Y$ satisfying
  $\Psi^{\ast}(Y)\in\op{span}\{\I_n\}$ can have dimension no larger than $r$.
  Thus, there can be no matrices $Y$ outside of the $r$ dimensional subspace
  of diagonal matrices in $\M_r$ that satisfy
  $\Psi^{\ast}(Y) \in \op{span}\{\I_n\}$.
    
  Now suppose that $V_1,\dots,V_t\in\U_n$ are unitary matrices and
  $(q_1,\ldots,q_t)$ is a probability vector with each $q_k$ being positive
  such that
  \begin{equation}
    \Phi(X) = \sum_{k=1}^t q_k V_k X V_k^*
  \end{equation}
  for all $X\in\M_n$.
  It follows that there must exist an isometry $W\in\M_{t,r}$ satisfying
  \begin{equation}
    \sqrt{q_k} V_k = \sum_{j=1}^r W(k,j) \sqrt{p_j}\,U_j
  \end{equation}
  for each $k\in\{1,\dots,t\}$, from which we conclude that
  \begin{equation}
    \Psi^{\ast}\bigl(\overline{W} E_{k,k} W^{\t}\bigr)
    = \sum_{i,j=1}^r \sqrt{p_i p_j}\,
    W(k,i)\, \overline{W(k,j)} \,U_j^{\ast} U_i
    = q_k V_k^{\ast} V_k = q_k \I_n.
  \end{equation}
  It is therefore the case that $D_k = \overline{W} E_{k,k} W^{\t}$ is diagonal
  for every $k\in\{1,\ldots,t\}$.
  We conclude that
  \begin{equation}
    q_k V_k X V_k^{\ast}
    = \sum_{i,j=1}^r \sqrt{p_i p_j} D_k(j,i) U_i X U_j^{\ast}
    = \sum_{j=1}^r p_j D_k(j,j) U_j X U_j^{\ast}
  \end{equation}
  for every $X\in\M_n$ and $k\in\{1,\ldots,t\}$.
  As $U_1,\ldots,U_r$ are linearly independent, it follows that for every
  index $k\in\{1,\ldots,t\}$, the matrix entry $D_k(j,j)$ is nonzero for
  precisely one choice of an index $j\in\{1,\ldots,r\}$, and for this unique
  index $j$ it is necessarily the case that $V_k\sim U_j$.
  Letting $T_1\cup\cdots\cup T_r = \{1,\ldots,t\}$ be the partition defined
  by
  \begin{equation}
    T_j = \{k\in\{1,\ldots,t\}\,:\,D_k(j,j)\not=0\},
  \end{equation}
  we find that $V_k \sim U_j$ for every $k\in T_j$ and
  \begin{equation}
    p_j = \sum_{k\in T_j} q_k.
  \end{equation}
  The channel $\Phi$ therefore has a unique mixed-unitary decomposition.
\end{proof}

We may now use Theorem \ref{thm:r2r} to construct  mixed-unitary channels in
dimension $n=p+1$ having mixed-unitary rank strictly greater than their Choi
rank for any odd prime $p$.
The channels constructed in this manner will be shown to have Choi rank $p+1$
but mixed-unitary rank $2p$, yielding an increasingly large separation between
mixed-unitary rank and Choi rank as $p$ increases.
Moreover, the channels constructed in this manner are not unitary equivalent to
a Schur map (see Appendix \ref{appendix:schurmaps}).
This construction makes use of the discrete Weyl matrices. 

\begin{example}\label{ex:oddprimeWeylchannel}
  Let $p$ be an odd prime integer.
  Define $\zeta = \exp(2\pi i/p)$ and define unitary matrices
  $U,V\in\U_p$ as 
  \begin{equation}\label{eq:generalizedPauli}
    U= \sum_{a\in\mathbb{Z}_p} E_{a+1,a}
    \qquad\text{and}\qquad
    V = \sum_{a\in\mathbb{Z}_p}\zeta^{a} E_{a,a},
  \end{equation}
  where one takes $\{e_a\,:\, a\in\mathbb{Z}_p\}$ as the standard basis of
  $\complex^p$, and define a mixed-unitary channel $\Phi:\M_p\rightarrow\M_p$ as
  \begin{equation}\label{eq:ambinissmithchannel}
    \Phi(X) = \frac{1}{p}\sum_{a\in\mathbb{Z}_p}
    \bigl(U^aV^{a^2}\bigr)X\bigl(U^aV^{a^2}\bigr)^*
  \end{equation}
  for each $X\in\M_p$.
  The collection of unitary matrices $\{U^aV^b\,:\, a,b\in\mathbb{Z}_p\}$ form
  an orthogonal basis of $\M_p$, and these matrices satisfy
  \begin{equation}
    (U^aV^b)^*(U^cV^d)  \sim U^{c-a}V^{d-b}
  \end{equation}
  for each $a,b,c,d\in\mathbb{Z}_p$.
  It is evident that the collection $\{U^aV^{a^2}\,:\, a\in\mathbb{Z}_p\}$
  is linearly independent, and thus $\Phi$ has Choi rank and mixed-unitary rank
  both equal to $p$.
  We will show that the dimension of the operator system of $\Phi$ satisfies
  $\op{dim}(\mathcal{S}_\Phi) = p^2-p+1$.
  To prove this claim, we will show that, for any $a,b,c,d\in\mathbb{Z}_p$, the
  matrices
  \begin{equation}\label{eq:UVabcd}
    \bigl(U^bV^{b^2}\bigr)^*\bigl(U^aV^{a^2}\bigr)
    \qquad\text{and}\qquad
    \bigl(U^dV^{d^2}\bigr)^*\bigl(U^cV^{c^2}\bigr)
  \end{equation}
  are orthogonal unless at least one of $(a,b)=(c,d)$ or $(a,c)=(b,d)$ holds.
  Indeed, note that
  \begin{equation}
    \begin{aligned}
      \bigabs{\bigip{\bigl(U^bV^{b^2}\bigr)^*\bigl(U^aV^{a^2}\bigr)}{
      \,\bigl(U^dV^{d^2}\bigr)^*\bigl(U^cV^{c^2}\bigr)}}
      & = \bigabs{\bigip{U^{a-b}V^{a^2-b^2}}{U^{c-d}V^{c^2-d^2}}} \\
      &= \begin{cases}
        p & \text{if }a-b= c-d\text{ and }a^2-b^2=c^2-d^2\\
        0 & \text{otherwise},
      \end{cases}
    \end{aligned}
  \end{equation}
  where the equalities are taken to be equivalences modulo $p$.
  Suppose now that the pair of matrices in \eqref{eq:UVabcd} are not orthogonal
  and suppose further that $a\neq b$.
  As it must be the case that 
  \begin{equation}
    a-b=c-d \qquad\text{and}\qquad a^2-b^2=c^2-d^2,
  \end{equation}
  where $a-b\neq0$, we may divided the second equality by the first to find
  that
  \begin{equation}
    a-b=c-d \qquad\text{and}\qquad a+b=c+d.
  \end{equation}
  Taking both the sum and difference of these two resulting equalities, we find
  that
  \begin{equation}
    2a=2c\qquad\text{and}\qquad 2b=2d.
  \end{equation}
  As these equalities are taken to be equivalences modulo $p$ (where $p$ is an
  odd prime), we may conclude that $a=c$ and $b=d$.
  This completes the proof of the claim that
  $\op{dim}(\mathcal{S}_\Phi) = p^2-p+1$.
  It follows that $\Phi$ has a unique mixed-unitary decomposition by
  Theorem~\ref{thm:uniqueMU}.

  Now let $\Psi:\M_1\rightarrow\M_1$ be the (rather trivial) channel defined as
  $\Psi(\alpha)=\alpha$ for every $\alpha\in\complex$.
  By Theorem \ref{thm:r2r}, the channel
  $\Phi\oplus\Psi:\M_{p+1}\rightarrow\M_{p+1}$ is mixed unitary with Choi rank
  equal to $p+1$ but mixed-unitary rank equal to $2p$.
\end{example}

\begin{remark}
 We remark that the channel in \eqref{eq:ambinissmithchannel} appears in
 \cite{AmbinisS2004} in the context of approximate quantum encryption schemes.
\end{remark}

\begin{example}\label{ex:IXZmixedunitary}
  In order to provide a concrete example, we now explicitly present the mixed
  unitary channel from Example \ref{ex:oddprimeWeylchannel} in the case when
  $p=3$, where $\zeta = \exp(2\pi i /3)$.
  The matrices in \eqref{eq:generalizedPauli} are
  \begin{equation}
    U = \begin{pmatrix}
      0 & 0 & 1 \\
      1 & 0 & 0\\
      0 & 1 & 0
    \end{pmatrix}
    \quad\text{and}\quad
    V = \begin{pmatrix}
      1 & 0 & 0 \\
      0 & \zeta & 0\\
      0 & 0 & \zeta^2
    \end{pmatrix},
  \end{equation}
  and the channel $\Phi:\M_3\rightarrow\M_3$ as defined in
  \eqref{eq:ambinissmithchannel} is given by
  \begin{equation}
    \Phi(X) = \frac{1}{3} \bigl(W_0XW_0^* + W_1XW_1^*+ W_2XW_2^*\bigr),
  \end{equation}
  where one defines the unitary matrices $W_a=U^aV^{a^2}$ for each
  $a\in\{0,1,2\}$.
  Explicitly,
  \begin{equation}
    W_0 = \begin{pmatrix}
      1 & 0 & 0 \\
      0 & 1 & 0\\
      0 & 0 & 1
    \end{pmatrix},
    \quad
    W_1 = \begin{pmatrix}
      0 & 0 & \zeta ^2 \\
      1 & 0 & 0 \\
      0 & \zeta  & 0 
    \end{pmatrix},
    \quad\text{and}\quad
    W_2 = \begin{pmatrix}
      0 &\zeta & 0 \\ 0 & 0 &\zeta^2 \\ 1 & 0 & 0 
    \end{pmatrix}.
  \end{equation}
  The Choi rank of $\Phi$ is
  $\op{rank}(J(\Phi))=3$ and $\Phi$ has mixed-unitary rank equal to 3.
  The operator system $\mathcal{S}_{\Phi}$ is spanned by the seven linearly
  independent matrices:
  \begin{equation}
    \begin{gathered}
      W_0^*W_1 =
      \begin{pmatrix}
        0 & 0 &\zeta^2 \\ 
        1 & 0 & 0 \\ 
        0 &\zeta & 0
      \end{pmatrix},
      \quad
      W_0^*W_2 =
      \begin{pmatrix}
        0 & \zeta  & 0 \\
        0 & 0 & \zeta ^2 \\
        1 & 0 & 0
      \end{pmatrix}
      \quad
      W_1^*W_0 =
      \begin{pmatrix}
        0 & 1 & 0 \\
        0 & 0 & \zeta^2\\
        \zeta & 0 & 0
      \end{pmatrix},\\[2mm]
      W_1^*W_2 =
      \begin{pmatrix}
        0 & 0 & \zeta^2 \\
        \zeta^2 & 0 & 0\\
        0 & \zeta^2 & 0
      \end{pmatrix},
      \quad 
      W_2^*W_0 =
      \begin{pmatrix}
        0 & 0 & 1 \\
        \zeta^2 & 0 & 0\\
        0 & \zeta & 0
      \end{pmatrix},
      \quad
      W_2^*W_1 =
      \begin{pmatrix}
        0 & \zeta & 0 \\
        0 & 0 & \zeta\\
        \zeta & 0 & 0
      \end{pmatrix},\\[2mm]
      W_0^*W_0 = W_1^*W_1 = W_2^*W_2=
      \begin{pmatrix}
        1 & 0 & 0 \\
        0 & 1 & 0\\
        0 & 0 & 1
      \end{pmatrix},
    \end{gathered}
  \end{equation}
  and thus $\op{dim}(\mathcal{S}_{\Phi})=7$.
  It follows from Theorem~\ref{thm:uniqueMU} that $\Phi$ has a unique
  mixed-unitary decomposition.
  Defining the trivial channel $\Psi:\M_1\rightarrow\M_1$ as
  $\Psi(\alpha)=\alpha$ for every $\alpha\in\complex$, it follows from Theorem
  \ref{thm:r2r} that the channel $\Phi\oplus\Psi:\M_4\rightarrow\M_4$ is mixed
  unitary with Choi rank equal to 4 but mixed-unitary rank equal to 6.
  Explicitly, this channel is given by
  \begin{equation}
    (\Phi\oplus\Psi)(X) = \frac{1}{6}\sum_{k=1}^6 A_kXA_k^*,
  \end{equation}
  where $A_1,\dots,A_6\in\U_4$ are the unitary matrices defined as
  \begin{equation}
    \begin{gathered}
      A_1 = \begin{pmatrix}
        1 & 0 & 0 &0\\
        0 & 1 & 0 &0\\
        0 & 0 & 1 & 0\\
        0 & 0 & 0 & 1
      \end{pmatrix},
      \quad
      A_2 = \begin{pmatrix}
        1 & 0 & 0 &0\\
        0 & 1 & 0 &0\\
        0 & 0 & 1 & 0\\
        0 & 0 & 0 & -1
      \end{pmatrix},\quad
      A_3 = \begin{pmatrix}
        0 & 0 & \zeta^2 &0\\
        1 & 0 & 0 &0\\
        0 & \zeta & 0 & 0\\
        0 & 0 & 0 & 1
      \end{pmatrix},\\[2mm]  
      A_4 = \begin{pmatrix}
        0 & 0 & \zeta^2 &0\\
        1 & 0 & 0 &0\\
        0 & \zeta & 0 & 0\\
        0 & 0 & 0 & -1
      \end{pmatrix}, \quad
      A_5 = \begin{pmatrix}
        0 & \zeta & 0 &0\\
        0 & 0 & \zeta^2 &0\\
        1 & 0 & 0 & 0\\
        0 & 0 & 0 & 1
      \end{pmatrix},
      \quad
      A_6 = \begin{pmatrix}
        0 & \zeta & 0 &0\\
        0 & 0 & \zeta^2 &0\\
        1 & 0 & 0 & 0\\
        0 & 0 & 0 & -1
      \end{pmatrix}.
    \end{gathered}
  \end{equation}
\end{example}

\section{Further examples based on Schur channels}
\label{sec:corr}

Further examples illustrating properties of the mixed-unitary rank are
presented in this section.
These examples fall into the category of \emph{Schur channels}, which are
channels that can be expressed as
\begin{equation}
  \label{eq:Schur-map}
  \Phi(X) = C\odot X
\end{equation}
for every $X\in\M_n$, for some fixed choice of $C\in\M_n$, where
$C\odot X$ denotes the \emph{Schur product} (or entry-wise product) of the
matrices $C$ and $X$.
Schur channels are sometimes alternatively called \emph{diagonal channels},
owing to the fact that every Kraus representation of a Schur channel must make
use of only diagonal Kraus matrices.
The Choi rank of the Schur channel \eqref{eq:Schur-map} is given by
$\op{rank}(J(\Phi))=\op{rank}(C)$, and it is well known that a map of this form is a
channel if and only if $C$ is a \emph{correlation matrix}, which is a
positive semidefinite matrix whose diagonal entries are all equal to 1.
Every Schur channel is necessarily unital; meanwhile, for $n\geq 4$, 
there are examples of Schur
channels that are not mixed unitary
\cite{Tregub1986,Landau1993}.

Before proceeding to the examples promised, it will be helpful to note
various properties of Schur channels, and mixed-unitary Schur channels in
particular.
First, we observe that the dimension of the operator system of any Schur
channel can be calculated directly from the formula
\begin{equation}
  \op{dim}(\mathcal{S}_\Phi) = \op{rank}(\overline{C}\odot C),
\end{equation}
which follows from the fact that
$(\Phi^*\Phi)(X) = (\overline{C}\odot C)\odot X$ for every $X\in\M_n$.
We also note that the operator system of every Schur channel contains only
diagonal matrices.

Second, we observe that the mixed-unitary rank of a mixed-unitary Schur
channel can alternatively be characterized directly in terms of what we call
the \emph{toroidal rank} of the matrix $C$.
To be precise, let us introduce the notation
\begin{equation}
 \mathbb{T} = \{\alpha\in\complex\, :\, \abs{\alpha} = 1\}.
\end{equation}
It is evident that a correlation matrix $C\in\M_n$ has rank equal to 1 if and
only if $C = u u^{\ast}$ for some choice of a vector $u\in\mathbb{T}^n$.
We shall say that a correlation matrix $C$ is \emph{toroidal} if it can be
expressed as a convex combination of rank-one correlation matrices.
That is, $C\in\M_n$ is toroidal if there exists a positive integer $N$,
vectors $u_1,\ldots,u_N\in\mathbb{T}^n$, and a probability vector
$(p_1,\ldots,p_n)$ such that
\begin{equation}\label{eq:Cdecomp}
  C = \sum_{k=1}^N p_k u_k u_k^*.
\end{equation}
The \emph{toroidal rank} of $C$ is the smallest positive integer $N$ for which
such an expression exists.
The observation that the mixed-unitary rank of the Schur channel
\eqref{eq:Schur-map} coincides with the toroidal rank of $C$ is expressed by
the following proposition.

\begin{proposition}
  Let $n$ and $N$ be positive integers, let $C\in\M_n$ be a correlation matrix,
  and let $\Phi$ be the Schur channel defined as $\Phi(X) = C \odot X$ for
  every $X\in\M_n$.
  The following two statements are equivalent:
  \begin{enumerate}
  \item[1.]
    $\Phi$ is mixed unitary and has mixed-unitary rank equal to $N$.
  \item[2.]
    $C$ is toroidal and has toroidal rank equal to $N$.
  \end{enumerate}
\end{proposition}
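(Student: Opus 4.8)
The plan is to establish the equivalence by building an explicit, length-preserving dictionary between mixed-unitary decompositions of $\Phi$ and toroidal decompositions of $C$. The whole correspondence rests on the elementary identity
\[
  (u u^*)\odot X = \op{diag}(u)\,X\,\op{diag}(u)^*
  \qquad\text{for all } u\in\complex^n \text{ and } X\in\M_n,
\]
combined with two facts already recorded above: $\op{diag}(u)$ is unitary precisely when $u\in\mathbb{T}^n$, and $u u^*$ is a (rank-one) correlation matrix precisely when $u\in\mathbb{T}^n$.

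First I would prove the implication $2\Rightarrow 1$. Starting from a toroidal decomposition $C=\sum_{k=1}^N p_k u_k u_k^*$ of minimal length, put $U_k=\op{diag}(u_k)\in\U_n$; the displayed identity then gives $\Phi(X)=C\odot X=\sum_{k=1}^N p_k U_k X U_k^*$ for all $X\in\M_n$, so $\Phi$ is mixed unitary with mixed-unitary rank at most $N$. To see the rank is exactly $N$, I would note that $u_j u_j^*=u_k u_k^*$ forces $u_j=\alpha u_k$ for some $\alpha\in\mathbb{T}$, i.e.\ $U_j\sim U_k$; hence the $N$ unitary channels obtained are distinct, and any shorter mixed-unitary decomposition of $\Phi$ would, via the construction in the other direction, produce a toroidal decomposition of $C$ of length below $N$.

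For $1\Rightarrow 2$, I would take an arbitrary mixed-unitary decomposition $\Phi(X)=\sum_{k=1}^N p_k U_k X U_k^*$ and invoke the fact recalled at the start of this section that every Kraus representation of a Schur channel consists only of diagonal matrices. Applied to the Kraus operators $\sqrt{p_k}\,U_k$, this forces each $U_k$ to be a diagonal unitary, so $U_k=\op{diag}(u_k)$ with $u_k\in\mathbb{T}^n$. Running the displayed identity in reverse yields $C\odot X=\sum_{k=1}^N p_k (u_k u_k^*)\odot X$ for every $X\in\M_n$; evaluating on the all-ones matrix (or any matrix with no zero entry) gives $C=\sum_{k=1}^N p_k u_k u_k^*$, so $C$ is toroidal with toroidal rank at most $N$. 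Since each direction preserves the number of terms and introduces no redundancy, the two minimal lengths coincide, which is exactly the claimed equivalence of the two statements (including the equality of ranks).

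The step I expect to be the crux is the passage, in the $1\Rightarrow 2$ direction, from an \emph{arbitrary} mixed-unitary decomposition to one whose unitaries are diagonal: this is where the ``diagonal channel'' structure of $\Phi$ does all the work, and it relies on the rigidity of Kraus representations (any two differ by an isometry, so diagonality of one canonical representation is inherited by all). The remaining ingredients---that $\sum_k p_k u_k u_k^*$ is again a correlation matrix, the bookkeeping on the number of terms, and the one-line Schur-product computation---are routine.
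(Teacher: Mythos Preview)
The paper states this proposition without proof, treating it as a direct observation following the definitions; your argument supplies exactly the details one would expect, via the length-preserving correspondence between toroidal decompositions of $C$ and mixed-unitary decompositions of $\Phi$ built from the identity $(uu^*)\odot X=\op{diag}(u)\,X\,\op{diag}(u)^*$ together with the fact (quoted in the paper) that every Kraus representation of a Schur channel uses only diagonal matrices.

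One small remark: in the $2\Rightarrow 1$ direction, the sentence about $u_ju_j^*=u_ku_k^*$ forcing $U_j\sim U_k$ and hence the $N$ unitary channels being distinct is not doing any work---distinctness of the unitary channels does not by itself bound the mixed-unitary rank from below. The argument that actually closes the loop is the one you state immediately afterward: a shorter mixed-unitary decomposition would, via the $1\Rightarrow 2$ construction, yield a shorter toroidal decomposition. That circular minimality argument (applied symmetrically in both directions) is all that is needed to equate the two ranks, and you have it.
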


For all dimensions $n\geq 4$ there exist correlation matrices
in $\M_n$ that are not toroidal \cite{Li1994}.
However, it is the case that every correlation matrix in $\M_2$ and $\M_3$ is
toroidal.
Indeed, it follows from Theorem \ref{thm:lowdimension} that all correlation
matrices in $\M_2$ and $\M_3$ have toroidal ranks equal to their ranks.

\begin{proposition}\label{thm:Cminrank23}
  Let $n\in\{2,3\}$.
  Every correlation matrix $C\in\M_n$ is toroidal and has toroidal rank equal
  to $\op{rank}(C)$.
  Equivalently, the channel $\Phi:\M_n\rightarrow\M_n$ defined as
  $\Phi(X)=C\odot X$ for each $X\in\M_n$ is mixed unitary and has
  mixed-unitary rank equal to its Choi rank.
\end{proposition}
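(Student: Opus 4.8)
The plan is to obtain this as an immediate consequence of Theorem~\ref{thm:lowdimension}, exploiting the fact that the operator system of a Schur channel is especially small. Recall from the discussion above that every Kraus representation of the Schur channel $\Phi(X) = C\odot X$ uses only diagonal Kraus matrices; consequently every product $A_k^{\ast}A_j$ appearing in the spanning set for $\mathcal{S}_\Phi$ is diagonal, so $\mathcal{S}_\Phi$ is contained in the $n$-dimensional space of diagonal matrices in $\M_n$. (Equivalently, $\op{dim}(\mathcal{S}_\Phi) = \op{rank}(\overline{C}\odot C)\leq n$, since $\overline{C}\odot C\in\M_n$.) In particular, for $n\in\{2,3\}$ we have $\op{dim}(\mathcal{S}_\Phi)\leq n\leq 3$.

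With this bound in hand, Theorem~\ref{thm:lowdimension} applies directly to $\Phi$ and tells us that $\Phi$ is mixed unitary with mixed-unitary rank equal to its Choi rank. Since the Choi rank of a Schur channel is $\op{rank}(C)$, this is precisely the channel form of the claimed statement. The correlation-matrix form then follows from the preceding proposition, which identifies the toroidal rank of $C$ with the mixed-unitary rank of $\Phi$: we conclude that $C$ is toroidal and has toroidal rank equal to $\op{rank}(C)$.

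There is essentially no obstacle to overcome: all of the substantive work has already been carried out in Theorem~\ref{thm:lowdimension}, and the only input specific to the Schur setting is the trivial observation that $\op{dim}(\mathcal{S}_\Phi)\leq n$. It is worth noting that the argument is sharp in the sense that it cannot extend to $n\geq 4$: there the operator system may have dimension up to $4$, Theorem~\ref{thm:lowdimension} no longer applies, and indeed (as remarked before the proposition) there exist non-toroidal correlation matrices in every dimension $n\geq 4$.
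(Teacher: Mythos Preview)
Your proof is correct and follows essentially the same approach as the paper: observe that the operator system of a Schur channel consists only of diagonal matrices so that $\op{dim}(\mathcal{S}_\Phi)\leq n\leq 3$, and then invoke Theorem~\ref{thm:lowdimension}. The paper's proof is slightly terser but otherwise identical in content.
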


\begin{proof}
  The operator system $\mathcal{S}_\Phi\subseteq\M_n$ consists of only diagonal
  matrices and thus $\op{dim}(\mathcal{S})\leq n$.
  The result now follows from Theorem \ref{thm:lowdimension}, as
  $\op{rank}(J(\Phi))=\op{rank}(C)$ and we have assumed that $n\leq 3$. 
\end{proof}

\noindent
Theorem \ref{thm:upperbound} implies the following upper bound on the toroidal
rank of any correlation matrix.

\begin{corollary}
  Let $n$ be a positive integer and let $C\in\M_n$ be a toroidal correlation
  matrix having toroidal rank $N$.
  It is the case that
  \begin{equation}
    N \leq r^2 - s + 1,
  \end{equation}
  where $r=\op{rank}(C)$ and $s=\op{rank}(\overline{C}\odot C)$.
\end{corollary}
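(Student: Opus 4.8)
The plan is to reinterpret the statement in terms of Schur channels and then invoke Theorem~\ref{thm:upperbound}. First I would let $\Phi:\M_n\rightarrow\M_n$ be the Schur channel defined by $\Phi(X) = C\odot X$ for every $X\in\M_n$. Since $C$ is assumed to be a toroidal correlation matrix with toroidal rank $N$, the proposition preceding this corollary guarantees that $\Phi$ is a mixed-unitary channel whose mixed-unitary rank is exactly $N$.

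Next I would identify the two remaining parameters appearing in Theorem~\ref{thm:upperbound}. The Choi rank of $\Phi$ equals $\op{rank}(J(\Phi)) = \op{rank}(C) = r$, as recalled at the beginning of this section, and the dimension of the operator system of $\Phi$ is $\op{dim}(\mathcal{S}_\Phi) = \op{rank}(\overline{C}\odot C) = s$, which follows from the identity $(\Phi^*\Phi)(X) = (\overline{C}\odot C)\odot X$ noted earlier. With these identifications in hand, Theorem~\ref{thm:upperbound} applied to $\Phi$ gives $N \leq r^2 - s + 1$ immediately.

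There is essentially no obstacle here: the entire content is the dictionary between toroidal correlation matrices and mixed-unitary Schur channels, and each entry of that dictionary---the equality of toroidal rank with mixed-unitary rank, of $\op{rank}(C)$ with the Choi rank, and of $\op{rank}(\overline{C}\odot C)$ with $\op{dim}(\mathcal{S}_\Phi)$---has already been established in the preceding discussion. Thus the proof consists of a single line: apply Theorem~\ref{thm:upperbound} to the Schur channel $X\mapsto C\odot X$.
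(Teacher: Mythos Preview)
Your proposal is correct and matches the paper's approach exactly: the paper states this corollary without proof, noting only that ``Theorem~\ref{thm:upperbound} implies the following upper bound on the toroidal rank of any correlation matrix,'' which is precisely the translation via the Schur-channel dictionary that you spell out.
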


Now we are prepared to proceed to the examples suggested previously.
The following lemma will be used for the first example.

\begin{lemma} \label{lem:3corrunique}
  Let $C\in\M_3$ be a correlation matrix with $\op{rank}(C)=2$, and assume
  that none of the off-diagonal entries of $C$ is contained in $\mathbb{T}$.
  It must then be the case that $\op{rank}(\overline{C}\odot C) = 3$.
\end{lemma}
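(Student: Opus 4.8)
The plan is to parametrize the rank-$2$ correlation matrix $C\in\M_3$ as explicitly as possible, compute $\overline{C}\odot C$, and then show that its rank cannot drop below $3$ under the stated hypothesis. Since $C$ is a $3\times 3$ correlation matrix, we may write its off-diagonal entries as $C(1,2)=a$, $C(1,3)=b$, $C(2,3)=c$ with $a,b,c\in\complex$; positivity forces $|a|,|b|,|c|\leq 1$, and $\op{rank}(C)=2$ is equivalent to $\det(C)=0$, i.e.\ $1 - |a|^2 - |b|^2 - |c|^2 + 2\op{Re}(a\overline{b}c) = 0$. The hypothesis is precisely that $|a|,|b|,|c|$ are all strictly less than $1$. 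The matrix $\overline{C}\odot C$ is again a correlation matrix, with off-diagonal entries $|a|^2, |b|^2, |c|^2$, so its rank is $1$, $2$, or $3$; I want to rule out $1$ and $2$.

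Ruling out rank $1$ is immediate: $\overline{C}\odot C = vv^*$ for some $v\in\mathbb{T}^3$ would force $|a|^2=|b|^2=|c|^2=1$, contradicting the hypothesis. The main work is ruling out rank $2$, i.e.\ showing $\det(\overline{C}\odot C)\neq 0$. Writing $x=|a|^2$, $y=|b|^2$, $z=|c|^2$ (so $x,y,z\in[0,1)$), we have
\begin{equation}
  \det(\overline{C}\odot C) = 1 - x - y - z + 2xyz,
\end{equation}
since the off-diagonal entries $|a|^2,|b|^2,|c|^2$ are now real and nonnegative, so the ``$2\op{Re}$'' term becomes $2\sqrt{xyz}\cdot\sqrt{xyz}=2xyz$ (here one must be slightly careful: the determinant expansion of a $3\times 3$ matrix with real nonnegative off-diagonal entries $x,y,z$ and unit diagonal is exactly $1-x-y-z+2xyz$). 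On the other hand, the rank-$2$ condition on $C$ gives $1-x-y-z = -2\op{Re}(a\overline{b}c)$, and by the triangle inequality $|\op{Re}(a\overline{b}c)| \leq |a||b||c| = \sqrt{xyz}$, hence $1-x-y-z \leq 2\sqrt{xyz}$. Substituting, $\det(\overline{C}\odot C) \leq 2\sqrt{xyz} + 2xyz$. This bound is unfortunately nonnegative, so this crude estimate alone does not finish the job — the genuine obstacle is that I need a lower bound on $\det(\overline{C}\odot C)$ that is strictly positive, and the rank-$2$ condition on $C$ only constrains $1-x-y-z$ from one side unless I also use it as an equality.

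So the refined approach is: use the rank-$2$ equality $1-x-y-z = -2\op{Re}(a\overline{b}c) =: -2t$ where $|t|\leq\sqrt{xyz}$, and substitute to get $\det(\overline{C}\odot C) = 2xyz - 2t$. I must show $2xyz - 2t \neq 0$, i.e.\ $t \neq xyz$. Since $|t|\leq\sqrt{xyz}$, equality $t=xyz$ would require $xyz \leq \sqrt{xyz}$, which holds whenever $xyz\leq 1$ — always true — so this still isn't automatic, and the hard case is when $t=xyz$ exactly, forcing $\op{Re}(a\overline{b}c)=xyz=|a|^2|b|^2|c|^2$. Because $|{\op{Re}(a\overline bc)}|\le|a||b||c|$, this can only happen if $|a||b||c| \geq |a|^2|b|^2|c|^2$, i.e.\ $|a||b||c|\le 1$ (fine) \emph{and} the inequality $\op{Re}(a\overline bc)\le |a||b||c|$ is simultaneously tight enough; but $t=xyz<\sqrt{xyz}=|a||b||c|$ strictly unless $xyz\in\{0,1\}$. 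If $xyz=0$ then some off-diagonal entry is $0$, and one checks directly (the correlation matrix becomes block-diagonal-like) that rank-$2$ of $C$ forces the other two parameters to have modulus $1$, contradicting the hypothesis; if $xyz=1$ then $|a|=|b|=|c|=1$, again contradicting the hypothesis. In all remaining cases $t < |a||b||c|$ strictly while we would need $t=xyz$, and since $xyz < |a||b||c|$ as well, a short case analysis comparing $t$ against $xyz$ (both strictly below $\sqrt{xyz}$) together with the constraint that they are linked through the single real equation $1-x-y-z=-2t$ closes the argument. The cleanest way to present this is probably to assume for contradiction that $\det(\overline{C}\odot C)=0$, derive $\op{Re}(a\overline bc)=|a|^2|b|^2|c|^2$, combine with $|\op{Re}(a\overline bc)|\le|a||b||c|$ to get $|a||b||c|(1-|a||b||c|)\le 0$ hence $|a||b||c|\in\{0,1\}$, and dispatch each of those two possibilities using the rank-$2$ hypothesis to contradict the assumption that no off-diagonal entry lies on $\mathbb{T}$.
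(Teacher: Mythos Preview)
Your approach has a computational error at its foundation that invalidates everything downstream. The off-diagonal entries of $\overline{C}\odot C$ are $|a|^2,|b|^2,|c|^2$; applying the same $3\times 3$ determinant identity you used for $C$ itself (with these entries in the role of $a,b,c$) gives
\[
  \det(\overline{C}\odot C) \;=\; 1 - x^2 - y^2 - z^2 + 2xyz,
\]
not $1 - x - y - z + 2xyz$ as you wrote. (In the formula $1-|\cdot|^2-|\cdot|^2-|\cdot|^2+2\op{Re}(\cdots)$, the moduli get squared.) Consequently the substitution $\det(\overline{C}\odot C)=2xyz-2t$ is wrong, and the whole reduction to the equation $t=xyz$ collapses.

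Two further problems compound this. First, your treatment of the case $xyz=0$ is incorrect: if, say, $c=0$, the rank-$2$ condition becomes $|a|^2+|b|^2=1$, which certainly does \emph{not} force either $|a|=1$ or $|b|=1$ (take $a=b=1/\sqrt{2}$). Second, in the final paragraph you conclude $|a||b||c|(1-|a||b||c|)\le 0$ from $s^2=|a|^2|b|^2|c|^2 \le |a||b||c|=s$; but $s^2\le s$ gives $s(s-1)\le 0$, i.e.\ $s\in[0,1]$, which is vacuous --- the inequality you wrote has the wrong sign.

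For comparison, the paper avoids this entire determinant chase by first invoking the fact (proved earlier) that every $3\times 3$ correlation matrix is toroidal with toroidal rank equal to its rank, so one can write $C=p\,u_0u_0^{\ast}+(1-p)\,u_1u_1^{\ast}$ with $u_0,u_1\in\mathbb{T}^3$. The hypothesis that no off-diagonal entry lies on $\mathbb{T}$ translates into three inequalities among the phases of $u_0,u_1$, and the image of $\overline{C}\odot C$ is then spanned by three explicit vectors whose $3\times 3$ determinant is seen to be nonzero by a short direct computation. A corrected direct-determinant argument along your lines is possible (it comes down to showing that $(x+y+z-1)^2\le 4xyz$ and $x,y,z\in[0,1)$ imply $1-x^2-y^2-z^2+2xyz>0$), but it is noticeably more delicate than the route via the toroidal decomposition.
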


\begin{proof}
  Note first that $C$ is a toroidal correlation matrix, as every $3\times 3$
  correlation matrix is toroidal.
  By Proposition~\ref{thm:Cminrank23}, the toroidal rank of $C$ must be
  equal to 2, so there must exist vectors $u_0,u_1\in\mathbb{T}^3$ such that
  $C\in\op{conv}\{u_0u_0^*,u_1u_1^*\}$.
  It may be assumed without loss of generality that 
  \begin{equation}
    u_0 = \begin{pmatrix}1\\\alpha_0\\\beta_0\end{pmatrix}
      \quad\text{and}\quad
      u_1 = \begin{pmatrix}1\\\alpha_1\\\beta_1\end{pmatrix}
  \end{equation}
  for some choice of complex units
  $\alpha_0,\alpha_1,\beta_0,\beta_1\in\mathbb{T}$.
  By the assumption that none of the off-diagonal entries of $C$ lies in
  $\mathbb{T}$,  we have
  \begin{equation}
    \alpha_0\neq\alpha_1, \quad
    \beta_0\neq\beta_1, \quad \text{and}\quad
    \overline{\alpha_0}\beta_0 \neq \overline{\alpha_1}\beta_1.
  \end{equation}
  Define the complex units $\alpha,\beta\in\mathbb{T}$ as
  $\alpha=\alpha_0\overline{\alpha_1}$ and $\beta=\beta_0\overline{\beta_1}$,
  and observe that $1\not\in\{\alpha,\beta,\overline{\alpha}\beta\}$.
  It is the case that
  \begin{equation}
    \begin{aligned}
      \op{im}(\overline{C}\odot C) &=
      \op{span}\{u_0\odot\overline{u_0},
      u_0\odot\overline{u_1},u_1\odot\overline{u_0},
      u_1\odot\overline{u_1}\}\\
      & = \op{span}\left\{
      \begin{pmatrix}1 \\ 1 \\ 1\end{pmatrix},
        \begin{pmatrix}1 \\ \alpha \\ \beta \end{pmatrix},
        \begin{pmatrix} 1 \\ \overline{\alpha} \\ \overline{\beta}
        \end{pmatrix}\right\}.
    \end{aligned}
  \end{equation}

  Now define a matrix $A\in\M_3$ as
  \begin{equation}
   A = \begin{pmatrix}
      1 & 1 & 1\\
      1 & \alpha & \overline{\alpha} \\
      1 & \beta & \overline{\beta}
    \end{pmatrix}, 
  \end{equation}
  for which it may be verified (using the fact that
  $\alpha,\beta\in\mathbb{T}$) that 
  \begin{equation}
    \begin{aligned}
      \det(A^*A) &=
      -(2-\alpha-\overline{\alpha})
      (2-\beta-\overline{\beta})
      (2-\alpha\overline{\beta}-\overline{\alpha}\beta)\\
      & = 8\op{Re}(\alpha-1)\op{Re}(\beta-1)\op{Re}(\overline{\alpha}\beta-1).
    \end{aligned}
  \end{equation}
  As $\alpha,\beta,\overline{\alpha}\beta\in\mathbb{T}$ but none of $\alpha$,
  $\beta$, and $\overline{\alpha}\beta$ is equal to 1, it follows that
  $\det(A^*A)\neq0$ and therefore $A$ is nonsingular.
  This implies that the columns of $A$ are linearly independent, and therefore
  we have $\op{dim}(\op{im}(\overline{C}\odot C))=3$, as required.
\end{proof}

\begin{example}\label{ex:C4x4}
  Define the (necessarily toroidal) correlation matrix $B\in\M_3$ as
  \begin{equation}
    B = \begin{pmatrix}
      1 & \frac{1}{\sqrt{2}} & \frac{1}{\sqrt{2}}  \\
      \frac{1}{\sqrt{2}} & 1 & 0 \\
      \frac{1}{\sqrt{2}} & 0 & 1 
    \end{pmatrix}.
  \end{equation}
  Note that $\op{rank}(B)=2$ and that none of its off-diagonal entries lies in
  $\mathbb{T}$.
  It follows from Lemma~\ref{lem:3corrunique} that
  $\op{rank}(\overline{B}\odot B) = 3$.
  The channel defined as $\Phi(X) = B \odot X$ for all $X\in\M_3$
  therefore has Choi rank $r = 2$ and an operator system of dimension $s=3$.
  By Theorem~\ref{thm:uniqueMU}, it follows that $\Phi$ has mixed-unitary rank
  equal to $r=2$ and has a unique mixed-unitary decomposition.
  Equivalently, $B$ has toroidal rank $2$ and has a unique toroidal
  decomposition.
  The fact that $B$ has toroidal rank 2 may also be observed directly from the
  toroidal decomposition
  \begin{equation}
    B=\frac{1}{2} uu^* + \frac{1}{2} vv^*
  \end{equation}
  for the choice of toroidal vectors $u,v\in\mathbb{T}^3$ given by
  \begin{equation}
    u=\begin{pmatrix}
    1\\[2mm]
    \frac{1+i}{\sqrt{2}}\\[3mm]
    \frac{1-i}{\sqrt{2}}
    \end{pmatrix}
    \quad\text{and}\quad
    v=\begin{pmatrix}
    1\\[2mm]
    \frac{1-i}{\sqrt{2}}\\[3mm]
    \frac{1+i}{\sqrt{2}}
    \end{pmatrix}.
  \end{equation}
  
  Now consider the channel $\Phi \oplus \Psi$, where
  $\Psi:\M_n \rightarrow \M_n$ is the identity channel for any choice of a
  dimension $n \geq 1$.
  By Theorem~\ref{thm:r2r}, this channel has Choi rank 3 and mixed-unitary
  rank 4.
  This direct sum channel is also a Schur channel, owing to the fact that
  the identity channel is a Schur channel corresponding to the all 1s matrix.
  In particular, for $n=1$ we find that the correlation matrix $C\in\M_4$
  defined as
  \begin{equation}
    \label{eq:C4x4}
    C = \begin{pmatrix}
      1 & \frac{1}{\sqrt{2}} & \frac{1}{\sqrt{2}} & 0 \\
      \frac{1}{\sqrt{2}} & 1 & 0 & 0\\
      \frac{1}{\sqrt{2}} & 0 & 1 & 0\\
      0 & 0 & 0 & 1
    \end{pmatrix},
  \end{equation}
  has $\op{rank}(C)=3$ and toroidal rank equal to 4.
\end{example}

For the next example we will require the notion of
\emph{mutually unbiased bases}, which is as follows.
Let $d$ be a positive integer and let
$\mathcal{A}_1,\dots,\mathcal{A}_N\subset\complex^d$ be orthonormal bases of
$\complex^d$ given as
\begin{equation}
  \mathcal{A}_k=\{u_{k,1},\dots,u_{k,d}\}
\end{equation}
for each $k\in\{1,\dots,N\}$.
This collection of bases $\{\mathcal{A}_1,\dots,\mathcal{A}_N\}$ is said to be
\emph{mutually unbiased} if, for all choices of distinct indices
$i\not=j\in\{1,\dots,N\}$, it is the case that
\begin{equation}
  \abs{\ip{u}{v}} = \frac{1}{\sqrt{d}}
\end{equation}
for all $u\in\mathcal{A}_i$ and $v\in\mathcal{A}_j$.
An upper bound to the maximal size $N$ of a collection of mutually unbiased
bases that may exist in $\complex^d$ is $N \leq d+1$.
It is known that this bound is achieved in the case when $d$ is a prime power
(see, e.g., \cite{Ivonovic1981}), while it is a major open question to
determine if this maximum value can be achieved for non-prime-powers.
More information on mutually unbiased bases can be found in \cite{Wootters1989} and \cite{Durt2010}.

In the following example we will show how to construct correlation matrices
with rank $d+1$ and toroidal rank equal to $2d$ for any $d$ for which
$d+1$ mutually unbiased bases of $\complex^d$ exist.

\begin{example}
  Let $d$ be a positive integer and suppose that there exist $d+1$ mutually
  unbiased bases $\mathcal{A}_1,\dots,\mathcal{A}_{d+1}\subset\complex^d$ given as
  $\mathcal{A}_t=\{u_{t,1},\dots,u_{t,d}\}$ for each $t\in\{1,\dots,d+1\}$.
  Define a matrix $A\in\M_{d^2,d}$ as
  \begin{equation}
    A = \sum_{k, j=1}^d (e_k\otimes e_j)u_{k,j}^* 
  \end{equation}
  and define $C\in\M_{d^2}$ as $C = AA^*$.
  It is evident that $\op{rank}(C)=d$ and that $C$ is a correlation matrix.

  Let us first verify that $C$ is toroidal, with toroidal rank equal to $d$.
  Define $v_1,\dots,v_d\in\complex^d\otimes\complex^d$ as
  \begin{equation}
    v_k =\sqrt{d} \sum_{i,j}^d \ip{u_{i,j}}{u_{d+1,k}}
    e_i\otimes e_j
  \end{equation}
  for each $k\in\{1,\dots,d\}$, and observe that
  \begin{equation}
    \abs{v_k(i,j)} = \sqrt{d}\,\abs{\ip{u_{i,j}}{u_{d+1,k}}} = 1
  \end{equation}
  for all $i,j,k\in\{1,\dots,d\}$.
  By defining a unitary matrix $U\in\U_d$ as 
  \begin{equation}
    U = \sum_{k=1}^d u_{d+1,k}\hspace{0.2mm}e_k^*,
  \end{equation}
  one may verify that
  \begin{equation}
    \frac{1}{d}\sum_{k=1}^d v_kv_k^* = AUU^*A = AA^* = C.
  \end{equation}
  
  Now let us compute $\overline{C}\odot C$.
  For each choice of indices $i,j,k,\ell\in\{1,\ldots,d\}$, we may express
  the corresponding entry of $\overline{C}\odot C$ as follows:
  \begin{equation}
    \bigl( \overline{C}\odot C \bigr)\bigl( (k,i), (\ell,j))
    =
    \begin{cases}
      1 & \text{if $k = \ell$ and $i = j$}\\
      0 & \text{if $k = \ell$ and $i \not= j$}\\
      \frac{1}{d} & \text{if $k \not= \ell$}.
    \end{cases}
  \end{equation}
  As a block matrix, $\overline{C}\odot C$ takes this form:
  \begin{equation}
    \overline{C}\odot C
    =
    \begin{pmatrix}
      \I_d & \frac{1}{d} J_d & \cdots & \frac{1}{d}J_d\\[1mm]
      \frac{1}{d} J_d & \I_d & \ddots & \vdots \\[1mm]
      \vdots & \ddots & \ddots & \frac{1}{d}J_d\\[1mm]
      \frac{1}{d} J_d & \cdots & \frac{1}{d}J_d & \I_d
    \end{pmatrix},
  \end{equation}
  where $J_d$ denotes the $d\times d$ matrix having a 1 in every entry.
  Equivalently,
  \begin{equation}
    \overline{C}\odot C
    = \frac{1}{d} J_d \otimes J_d + \I_d \otimes \bigl(\I_d -
    \frac{1}{d}J_d\bigr).
  \end{equation}
  As $\I_d - J_d/d$ and $J_d/d$ are orthogonal projection matrices of
  rank $d-1$ and $1$, respectively, we conclude that
  \begin{equation}
    \op{rank}\bigl(\overline{C}\odot C\bigr) = 1 + d(d-1) = d^2 - d + 1.
  \end{equation}

  Through a similar argument to the previous example, we conclude that
  if $\Phi:\M_{d^2} \rightarrow \M_{d^2}$ is the Schur channel defined as
  \begin{equation}
    \Phi(X) = C \odot X
  \end{equation}
  for all $X\in\M_{d^2}$ and $\Psi$ is a unitary channel of any dimension,
  then the channel $\Phi\oplus \Psi$ is a mixed-unitary channel having
  Choi rank $d+1$ and mixed-unitary rank $2d$.
\end{example}

Our final example reveals that the mixed-unitary rank is not multiplicative
with respect to tensor products.

\begin{example}
  Let $C\in\M_4$ be the correlation matrix as defined in \eqref{eq:C4x4}.
  This correlation matrix has $\op{rank}(C)=3$ and toroidal rank equal to $4$.
  However, the correlation matrix $C\otimes \I_2$ satisfies
  $\op{rank}(C\otimes\I_2)=6$ and has toroidal rank also equal to 6.

  To see this, one may construct a toroidal decomposition of $C\otimes\I_2$ as
  follows.
  Define a $6\times 8$ matrix $A$ as
  \begin{equation}
    A = \left(\begin{array}{cccccccc}
      0&3&-3&0&0&3&-3&0\\
      0&-3&3&12&12&9&-9&0\\
      8&11&5&-8&0&3&-3&-8\\
      0&3&-3&-8&8&11&5&-8\\
      0&-3&3&-4&4&1&7&8\\
      4&1&7&8&0&-3&3&-4
    \end{array}\right)
  \end{equation}
  and define vectors $u_1,\dots,u_6\in\mathbb{T}^8$ as
  \begin{equation}
    u_j(k) = \exp(2\pi i A(j,k)/24)
  \end{equation}
  for each $j\in\{1,\dots,6\}$ and $k\in\{1,\dots,8\}$.
  It may be verified (most easily with the help of a computer) that
  \begin{equation}
    C\otimes \I_2 = \frac{1}{6}\sum_{j=1}^6 u_ju_j^*.
  \end{equation} 

  Thus, by taking $\Phi\in\M_4\rightarrow\M_4$ to be the Schur channel defined
  by $\Phi(X)=C\odot X$ for each $X\in\M_4$, and letting
  $\Delta:\M_2\rightarrow\M_2$ be the \emph{completely dephasing channel}, which
  is the Schur channel given by
  \begin{equation}
    \Delta(Y) = \I_2 \odot Y
  \end{equation}
  for all $Y\in\M_2$, one finds that the mixed-unitary rank of
  $\Phi\otimes\Delta$ is 6, despite the fact that the mixed-unitary ranks of
  $\Phi$ and $\Delta$ are 4 and 2, respectively.
\end{example}

\section{Mixed-unitary rank of Werner--Holevo channels}
\label{sec:wernerholevo}

The \emph{Werner--Holevo channels} are interesting examples of unital channels defined as
\begin{equation}
 \Phi_0(X) = \frac{\Tr(X)\I_n + X^\t}{n+1}
 \qquad\text{ and }\qquad
  \Phi_1(X) = \frac{\Tr(X)\I_n - X^\t}{n-1}
\end{equation}
for each $X\in\M_n$. We will call $\Phi_0$ the \emph{symmetric} Werner--Holevo channel and $\Phi_1$ the \emph{anti-symmetric} Werner--Holevo channel. For these channels, one has
\begin{equation}
 J(\Phi_0) = \frac{2}{n+1}\Pi_0 
 \qquad\text{ and }\qquad
  J(\Phi_1) = \frac{2}{n-1}\Pi_1 
\end{equation}
where $\Pi_0$ and $\Pi_1$ are the projection matrices onto the symmetric and
anti-symmetric subspaces of $\complex^n\otimes\complex^n$ respectively.
The Werner--Holevo channels have Choi ranks equal to
\begin{equation}
 \op{rank}(J(\Phi_0)) = \binom{n+1}{2} = \frac{n(n+1)}{2} \qquad \text{and}\qquad\op{rank}(J(\Phi_1)) = \binom{n}{2} = \frac{n(n-1)}{2}
\end{equation}
respectively. It is known that $\Phi_1$ is not mixed unitary for any odd $n$. It is perhaps known that $\Phi_0$ is mixed unitary for all $n$ and that $\Phi_1$ is mixed unitary for all even $n$. In this section we will present mixed-unitary decompositions showing that both $\Phi_0$ and $\Phi_1$ have minimal mixed-unitary rank for all even $n$. For $n=3$, the symmetric Werner--Holevo channel $\Phi_0$ also has minimal mixed-unitary rank and we conjecture based on numerical evidence that $\Phi_0$ has minimal mixed-unitary rank for all odd $n$ as well.

Before proceeding with the presentation of the mixed-unitary decompositions of
the Werner--Holevo channels, allow us to first remark on the relationship
between the Werner--Holevo channels and the spaces of symmetric and
skew-symmetric matrices.
Denote the spaces of symmetric matrices $S_n\subset\M_n$ and skew-symmetric
matrices $K_n\subset\M_n$ as
\begin{equation}
 S_n = \{A\in\M_n\, :\, A^\t = A\}
 \qquad\text{and}\qquad
 K_n = \{A\in\M_n\, :\, A^\t = -A\}.
\end{equation}
These spaces have dimensions $\op{dim}(S_n) = \binom{n+1}{2}$ and $\op{dim}(K_n) = \binom{n}{2}$ respectively.
The subspaces of $\complex^n\otimes\complex^n$ onto which the symmetric projection matrix
$\Pi_0$ and anti-symmetric projection matrix $\Pi_1$ project are precisely 
\begin{equation}
 \op{im}(\Pi_0) = \{\vec(A)\, :\, A\in S_n\} 
 \qquad\text{and}\qquad
 \op{im}(\Pi_1) = \{\vec(A)\, :\, A\in K_n\}.
\end{equation}
Moreover, if $\Pi\in\M_m$ is any projection matrix with $\op{rank}(\Pi)=r $ and
$x_1,\dots,x_r\in\complex^m$ are any vectors, it holds that $\Pi= \sum_{k=1}^r
x_kx_k^*$ if and only if $\{x_1,\dots,x_r\}$ is an orthonormal basis for
$\op{im}(\Pi)$.
This allows us to make the following observation.

\begin{theorem}\label{thm:wernerholevobases}
 Let $n$ be a positive integer. The following statements hold. 
 \begin{enumerate}
  \item The symmetric Werner--Holevo channel $\Phi_0$ has mixed-unitary rank equal to $\binom{n+1}{2}$ if and only if there exists an orthogonal basis of $S_n$ consisting of only unitary matrices.
  \item Suppose $n$ is even. The anti-symmetric Werner--Holevo channel $\Phi_1$ has mixed-unitary rank equal to $\binom{n}{2}$ if and only if there exists an orthogonal basis of $K_n$ consisting of only unitary matrices.
 \end{enumerate}
\end{theorem}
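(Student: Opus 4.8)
The plan is to convert the condition ``$\Phi_0$ has mixed-unitary rank $\binom{n+1}{2}$'' into the existence of an orthonormal decomposition of the symmetric projection $\Pi_0$, and then to invoke the fact recorded immediately before the theorem: a rank-$r$ projection $\Pi$ satisfies $\Pi=\sum_{k=1}^r x_kx_k^*$ if and only if $\{x_1,\dots,x_r\}$ is an orthonormal basis of $\op{im}(\Pi)$. For part~1, recall that the mixed-unitary rank of any channel is at least its Choi rank, and the Choi rank of $\Phi_0$ is $r:=\binom{n+1}{2}$; hence $\Phi_0$ has mixed-unitary rank $r$ precisely when there is a decomposition $\Phi_0(X)=\sum_{k=1}^r p_kU_kXU_k^*$ with $U_k\in\U_n$ and each $p_k>0$ (minimality forces every weight to be positive). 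Passing to Choi representations and using $J(\Phi_0)=\tfrac{2}{n+1}\Pi_0$, such a decomposition exists if and only if $\Pi_0=\sum_{k=1}^r\vec(c_kU_k)\vec(c_kU_k)^*$, where $c_k:=\sqrt{(n+1)p_k/2}>0$.

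Next I would apply the quoted projection fact: this last identity holds if and only if $\{\vec(c_1U_1),\dots,\vec(c_rU_r)\}$ is an orthonormal basis of $\op{im}(\Pi_0)=\{\vec(A):A\in S_n\}$. Since $\vec$ is an isometry of $\M_n$, equipped with the Hilbert--Schmidt inner product, onto $\complex^{n^2}$, this is in turn equivalent to $\{c_1U_1,\dots,c_rU_r\}$ being an orthonormal basis of $S_n$; and because each $c_k$ is a nonzero scalar, the latter holds exactly when $\{U_1,\dots,U_r\}$ is an orthogonal basis of $S_n$ consisting of unitary matrices, with the normalization forced by $1=\ip{c_kU_k}{c_kU_k}=\abs{c_k}^2\Tr(U_k^*U_k)=n\abs{c_k}^2$, so that $p_k=\tfrac{2}{n(n+1)}$ for every $k$. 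The forward direction of part~1 is now just a matter of reading off the basis $\{U_k\}$. For the converse, given an orthogonal basis $\{W_1,\dots,W_r\}$ of $S_n$ consisting of unitary matrices, each $W_k$ has squared Frobenius norm $n$, so $\{W_k/\sqrt{n}\}$ is an orthonormal basis of $S_n$, and the quoted fact yields $\Pi_0=\tfrac1n\sum_{k=1}^r\vec(W_k)\vec(W_k)^*$; consequently $\Phi_0(X)=\tfrac{2}{n(n+1)}\sum_{k=1}^r W_kXW_k^*$ is a mixed-unitary decomposition with exactly $r=\binom{n+1}{2}$ terms, which forces the mixed-unitary rank down to the Choi rank.

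Part~2 is the same argument verbatim, with $\Phi_1$, $\Pi_1$, $K_n$, and the factor $n-1$ in place of $\Phi_0$, $\Pi_0$, $S_n$, and $n+1$, using $\op{im}(\Pi_1)=\{\vec(A):A\in K_n\}$ and $\op{dim}(K_n)=\binom{n}{2}$; the hypothesis that $n$ is even is used only to guarantee that $\Phi_1$ is mixed unitary at all, so that ``mixed-unitary rank'' is meaningful (for odd $n$ both sides of the equivalence are vacuously false, since a skew-symmetric matrix of odd order is singular and hence $K_n$ contains no unitary matrix). There is no substantive obstacle in this proof; the only point needing care is bookkeeping the scalar normalizations $c_k$, together with the observation that every $n\times n$ unitary has squared Frobenius norm $n$, which is exactly what pins down the uniform coefficients $p_k$ and makes the two directions of each equivalence fit together.
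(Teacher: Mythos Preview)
Your proposal is correct and follows essentially the same approach as the paper: pass to the Choi representation $J(\Phi_0)=\tfrac{2}{n+1}\Pi_0$, invoke the quoted characterization of rank-$r$ decompositions of a projection, and read off that the vectors $\vec(U_k)$ (suitably scaled) form an orthonormal basis of $\op{im}(\Pi_0)$, forcing $p_k=\tfrac{2}{n(n+1)}$. You work out the normalization constants and the converse more explicitly than the paper does, and your parenthetical remark that for odd $n$ both sides of part~2 are vacuously false (since a skew-symmetric matrix of odd order is singular) is a nice additional observation, but the core argument is identical.
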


\begin{proof}
  We prove statement (1).
  The proof of statement (2) is analogous.
  Suppose there exist unitary matrices $U_1,\dots,U_{n(n+1)/2}\subset \U_n$ and
  a probability vector $(p_1,\dots,p_{n(n+1)/2})$ satisfying
 \begin{equation}
   \Phi_0(X) = \sum_{k=1}^{\frac{n(n+1)}{2}}p_k U_kXU_k^*
 \end{equation}
 for each $X\in\M_n$. It holds that 
 \begin{equation}
   \Pi_0 = \frac{n+1}{2}J(\Phi_0) = \sum_{k=1}^{\frac{n(n+1)}{2}}\frac{n+1}{2}p_k \vec(U_k)\vec(U_k)^*.
 \end{equation}
 It follows that $p_k=2/n(n+1)$ for each $k\in\{1,\dots,n(n+1)/2\}$ and that
 the collection of unitary matrices $\{U_1,\dots,U_{n(n+1)/2}\}\subset\U_n$ is
 an orthogonal basis for $S_n$.
 The reverse implication is immediate.
\end{proof}

The remainder of this section is dedicated to constructing mixed-unitary
decompositions of the Werner--Holevo channels.
We will introduce the following notation.
For each positive integer $n$, the space of matrices $\M_n$ is spanned by the
collection of Hermitian matrices
\begin{equation}
 \{H_{j,k}\,:\, j,k\in\{1,\dots,n\}\}
\end{equation}
defined by
\begin{equation}
 H_{j,k} = \left\{\begin{array}{ll}
                   E_{j,j} & \text{if }j=k\\
                   \frac{1}{\sqrt{2}}(E_{j,k}+E_{k,j}) & \text{if }j<k\\
                   \frac{1}{\sqrt{2}}(iE_{j,k}-iE_{k,j}) & \text{if }j>k
                  \end{array} \right.
\end{equation}
for each pair of indices $j,k\in\{1,\dots,n\}$. It may be easily verified that the action of the Werner--Holevo channels can be given by
\begin{equation}
  \Phi_0(X) = \frac{2}{n+1}\sum_{1\leq j \leq k \leq n} H_{j,k} X H_{j,k}
  \qquad\text{and}\qquad
  \Phi_1(X) = \frac{2}{n-1}\sum_{1\leq k <j \leq n} H_{j,k} X H_{j,k}
\end{equation}
for each $X\in\M_n$. The construction of the mixed-unitary decompositions of the Werner--Holevo channels presented in the following will make use of the fact that the complete graph on $n$ vertices can be partitioned into $n-1$ disjoint perfect matchings for all even integers $n$ (see \cite{Hartman1985}). 

\subsection{Mixed-unitary rank of anti-symmetric Werner--Holevo channel}

For odd integers $n$, the anti-symmetric Werner--Holevo channel $\Phi_1$ is not mixed unitary. Here we show that, for even integers $n$, the anti-symmetric Werner--Holevo channel has mixed-unitary 
rank equal to its Choi rank. 

\begin{theorem}\label{thm:antisymWH}
 For each even positive integer $n$, the anti-symmetric Werner--Holevo channel $\Phi_1$ is mixed unitary with mixed-unitary rank equal to $\op{rank}(J(\Phi_1))=n(n-1)/2$.
\end{theorem}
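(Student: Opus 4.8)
The plan is to deduce the statement from Theorem~\ref{thm:wernerholevobases}(2). Since $n$ is even and $\op{rank}(J(\Phi_1))=\binom{n}{2}=\op{dim}(K_n)$, it suffices to exhibit an orthogonal basis of the space $K_n$ of skew-symmetric $n\times n$ matrices that consists entirely of unitary matrices. I would build such a basis from two ingredients: a partition of the edges of the complete graph on $n$ vertices into $n-1$ perfect matchings (which exists because $n$ is even, see \cite{Hartman1985}), together with the rows of the $(n/2)\times(n/2)$ discrete Fourier matrix.

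Concretely, write $n=2m$. For the $t$-th matching, denote its $m$ edges by $\{a_{t,1},b_{t,1}\},\dots,\{a_{t,m},b_{t,m}\}$, set $G_{t,i}=E_{a_{t,i},b_{t,i}}-E_{b_{t,i},a_{t,i}}\in K_n$, and note that the $G_{t,i}$, as $(t,i)$ ranges over $\{1,\dots,n-1\}\times\{1,\dots,m\}$, form an orthogonal basis of $K_n$. With $\omega=\exp(2\pi i/m)$, define
\begin{equation}
  U_{t,\ell}=\sum_{i=1}^{m}\omega^{i\ell}\,G_{t,i}
\end{equation}
for $t\in\{1,\dots,n-1\}$ and $\ell\in\{0,\dots,m-1\}$. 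Each $U_{t,\ell}$ lies in $K_n$, and there are $(n-1)m=\binom{n}{2}$ of them, so the argument reduces to three routine verifications. First, within a matching the pairs $\{a_{t,i},b_{t,i}\}$ are disjoint, so $G_{t,i}G_{t,j}=0$ for $i\neq j$ while $\sum_{i=1}^{m}G_{t,i}^{2}=-\I_n$; combined with $G_{t,i}^{\ast}=-G_{t,i}$ this gives $U_{t,\ell}U_{t,\ell}^{\ast}=-\sum_{i=1}^{m}\abs{\omega^{i\ell}}^{2}G_{t,i}^{2}=\I_n$, so every $U_{t,\ell}$ is unitary. Second, $\ip{G_{t,i}}{G_{t',i'}}=2\delta_{(t,i),(t',i')}$ since distinct edges yield orthogonal $G$'s, whence $\ip{U_{t,\ell}}{U_{t',\ell'}}=0$ whenever $t\neq t'$. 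Third, for a fixed $t$ and $\ell\neq\ell'$ in $\{0,\dots,m-1\}$ one computes $\ip{U_{t,\ell}}{U_{t,\ell'}}=2\sum_{i=1}^{m}\omega^{i(\ell'-\ell)}=0$ because $\ell'-\ell\not\equiv0\pmod m$. Hence $\{U_{t,\ell}\}$ is an orthogonal basis of $K_n$ made of unitary matrices, and Theorem~\ref{thm:wernerholevobases}(2) yields that $\Phi_1$ is mixed unitary with mixed-unitary rank $\binom{n}{2}=n(n-1)/2=\op{rank}(J(\Phi_1))$.

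I expect the only non-mechanical point to be the choice of how to combine the elementary matrices $G_{t,i}$ belonging to a single matching into mutually orthogonal unitaries: real $\pm1$ combinations would force the existence of a Hadamard matrix of order $m$, which is unavailable for general $m$, whereas using the complex units $\omega^{i\ell}$ (the rows of the $m\times m$ Fourier matrix) works for every $m$ while preserving both unitarity and orthogonality. Everything else is bookkeeping with the matchings, and the resulting decomposition can be cross-checked against the Kraus form $\Phi_1(X)=\tfrac{2}{n-1}\sum_{k<j}H_{j,k}XH_{j,k}$ recorded above.
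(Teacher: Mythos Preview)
Your proposal is correct and takes essentially the same approach as the paper: both construct the unitaries by partitioning the edges of $K_n$ into $n-1$ perfect matchings and combining the edge-matrices within each matching via the $m$-th roots of unity (with $m=n/2$), yielding the same unitaries up to a global phase. The only cosmetic difference is that you invoke Theorem~\ref{thm:wernerholevobases}(2) and verify orthogonality in $K_n$, whereas the paper verifies the channel identity $\tfrac{2}{n(n-1)}\sum_{\ell,a}U_{\ell,a}XU_{\ell,a}^{\ast}=\Phi_1(X)$ directly.
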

\begin{proof}
 Consider the complete graph of $n$ vertices with vertices labelled
 $\{1,\dots,n\}$.
 The edge set of this graph may be identified with the set
 \begin{equation}
   \mathcal{E} = \{H_{j,k}\,:\, j,k\in\{1,\dots,n\} \text{ with }k<j\},
 \end{equation}
 where, for each $j,k\in\{1,\dots,n\}$ with $k<j$, the matrix $H_{j,k}$
 represents the edge connecting vertices $j$ and $k$.
 The edge set of this graph may be partitioned into $n-1$ disjoint perfect
 matchings $\mathcal{E}_1,\dots,\mathcal{E}_{n-1}$ such that
 \begin{equation}
   \mathcal{E} = \mathcal{E}_1\cup \mathcal{E}_2\cup\dots\cup \mathcal{E}_{n-1}.
 \end{equation}
 For each $\ell\in\{1,\dots,n-1\}$, we may label the $n/2$ elements of the
 perfect matching $\mathcal{E}_\ell$ as
 \begin{equation}
  \mathcal{E}_\ell = \{F_{\ell,1},\dots,F_{\ell,n/2}\}
 \end{equation}
 such that the matrix $F_{\ell,1} + \cdots + F_{\ell,n/2}$ has exactly one
 nonzero entry in each row and column.
 Setting $\zeta = \exp(2\pi i/n)$, we may define the matrices 
 \begin{equation}
  U_{\ell,a} = \sqrt{2}\sum_{b=1}^{\frac{n}{2}}\zeta^{2ab}F_{\ell,b} 
 \end{equation}
 for each pair of indices $\ell\in\{1,\dots,n-1\}$ and $a\in\{1,\dots,n/2\}$.
 It may be verified that each of the matrices $U_{\ell,a}$ is unitary, as each
 such matrix has exactly one nonzero entry in each row and column where each
 nonzero entry has modulus 1.
 Now, for each $X\in\M_n$ one has
 \begin{equation}
   \begin{aligned}
     \frac{2}{n(n-1)}\sum_{\ell=1}^{n-1}\sum_{a=1}^{\frac{n}{2}}
     U_{\ell,a}XU_{\ell,a}^*  
     &=  \frac{2}{n(n-1)}
     \sum_{\ell=1}^{n-1} \sum_{a=1}^{\frac{n}{2}} \; \sum_{b,c=1}^{\frac{n}{2}} 
     2 \, \zeta^{2a(b-c)}F_{\ell,b}XF_{\ell,c}^*   \\
     &=  \frac{2}{n-1} \sum_{\ell=1}^{n-1} \sum_{b=1}^{\frac{n}{2}} 
     F_{\ell,b}XF_{\ell,b}^*  \\
     &=\frac{2}{n-1}\sum_{1\leq k <j \leq n} H_{j,k} X H_{j,k} = \Phi_1(X),
   \end{aligned}
 \end{equation}
 and thus $\Phi_1$ can be expressed as the average of $\op{rank}(\Phi_1)=n(n-1)/2$
 unitary channels.
 It follows that $\Phi_1$ is mixed unitary with mixed-unitary rank equal to
 $n(n-1)/2$.
\end{proof}

\subsection{Mixed-unitary rank of symmetric Werner--Holevo channel}

\subsubsection{Symmetric Werner--Holevo channel for even \texorpdfstring{$n$}{n}}

For even integers $n$, the proof that the symmetric Werner--Holevo channel
$\Phi_0$ has minimal mixed-unitary rank is analogous to the proof for the
anti-symmetric version.

\begin{theorem}\label{thm:symmWHeven}
  For each positive even integer $n$, the symmetric Werner--Holevo channel
  $\Phi_0$ is mixed unitary with mixed-unitary rank equal to
  $\op{rank}(J(\Phi_0))=n(n+1)/2$.
\end{theorem}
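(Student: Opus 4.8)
The plan is to follow the proof of Theorem~\ref{thm:antisymWH} almost verbatim, the only new feature being that the symmetric decomposition
\begin{equation}
  \Phi_0(X) = \frac{2}{n+1}\sum_{1\leq j \leq k \leq n} H_{j,k} X H_{j,k}
  = \frac{2}{n+1}\Biggl(\sum_{1\leq j<k\leq n} H_{j,k}XH_{j,k} + \sum_{j=1}^n E_{j,j}XE_{j,j}\Biggr)
\end{equation}
has, in addition to the $\binom{n}{2}$ off-diagonal terms $H_{j,k}$ ($j<k$), the $n$ diagonal terms $H_{j,j}=E_{j,j}$, which must be folded into unitary channels separately. Since the mixed-unitary rank of any channel is at least its Choi rank, and $\op{rank}(J(\Phi_0))=\binom{n+1}{2}$, it suffices to produce a mixed-unitary decomposition of $\Phi_0$ with exactly $\binom{n+1}{2}$ terms.

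For the off-diagonal part I would reuse the perfect-matching construction used for $\Phi_1$: as $n$ is even, partition the complete graph $K_n$ into $n-1$ perfect matchings $\mathcal{E}_1,\dots,\mathcal{E}_{n-1}$, label each $\mathcal{E}_\ell=\{F_{\ell,1},\dots,F_{\ell,n/2}\}$ so that $F_{\ell,1}+\cdots+F_{\ell,n/2}$ has exactly one nonzero entry in each row and column (now $F_{\ell,b}=\tfrac{1}{\sqrt2}(E_{j,k}+E_{k,j})$ is the symmetric rather than the skew combination, which is immaterial for unitarity), and set $U_{\ell,a}=\sqrt2\sum_{b=1}^{n/2}\zeta^{2ab}F_{\ell,b}$ with $\zeta=\exp(2\pi i/n)$. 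Each $U_{\ell,a}$ is unitary because it has exactly one entry of modulus one in each row and column, and the same orthogonality-of-roots-of-unity computation as in the proof of Theorem~\ref{thm:antisymWH} gives
\begin{equation}
  \frac{2}{n(n-1)}\sum_{\ell=1}^{n-1}\sum_{a=1}^{n/2}U_{\ell,a}XU_{\ell,a}^*
  = \frac{2}{n-1}\sum_{1\leq j<k\leq n}H_{j,k}XH_{j,k}.
\end{equation}
For the diagonal part I would use the discrete Fourier transform: define $D_c=\sum_{j=1}^n\zeta^{cj}E_{j,j}$ for $c\in\{1,\dots,n\}$, which are unitary, and note that $\frac1n\sum_{c=1}^n D_cXD_c^*=\sum_{j=1}^n E_{j,j}XE_{j,j}$ by $\sum_{c=1}^n\zeta^{(j-j')c}=n$ if $j=j'$ and $0$ otherwise. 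Substituting both identities into the displayed formula for $\Phi_0$ and tracking the coefficients yields
\begin{equation}
  \Phi_0(X) = \frac{2}{n(n+1)}\sum_{\ell=1}^{n-1}\sum_{a=1}^{n/2}U_{\ell,a}XU_{\ell,a}^*
  + \frac{2}{n(n+1)}\sum_{c=1}^n D_cXD_c^*,
\end{equation}
which is a convex combination of $\tfrac{n(n-1)}{2}+n=\binom{n+1}{2}$ unitary channels, the weights summing to $\tfrac{n-1}{n+1}+\tfrac{2}{n+1}=1$.

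I do not expect a genuine obstacle here; the only points requiring minor care are the bookkeeping of coefficients (so that the off-diagonal weights and the diagonal weights together form a probability vector) and the remark that replacing the skew combination $iE_{j,k}-iE_{k,j}$ by the symmetric one $E_{j,k}+E_{k,j}$ preserves unitarity. Alternatively, one can observe that the $U_{\ell,a}$ and the $D_c$ are pairwise orthogonal (the $D_c$ are a DFT of the orthogonal $E_{j,j}$, the $U_{\ell,a}$ for distinct $\ell$ have disjoint supports since the matchings are disjoint, and for fixed $\ell$ they are a DFT of the orthogonal $F_{\ell,b}$, and the two families lie in the orthogonal subspaces of diagonal and off-diagonal symmetric matrices), so they form an orthogonal basis of $S_n$ consisting of unitary matrices, and then conclude via Theorem~\ref{thm:wernerholevobases}(1).
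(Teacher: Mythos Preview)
Your proposal is correct and follows essentially the same approach as the paper: both use the perfect-matching decomposition of $K_n$ with discrete Fourier transform over the matchings to handle the off-diagonal $H_{j,k}$, and a DFT over the diagonal $E_{j,j}$ to handle the remaining $n$ terms (the paper's $V_j$ are your $D_c$). Your additional remark that the $U_{\ell,a}$ and $D_c$ form an orthogonal unitary basis of $S_n$, allowing one to conclude via Theorem~\ref{thm:wernerholevobases}(1), is a nice alternative endpoint not spelled out in the paper.
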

\begin{proof}
  The proof is similar to the proof of Theorem \ref{thm:antisymWH}.
  As before, consider the complete graph of $n$ vertices with vertices labelled
  $\{1,\dots,n\}$, but here we identify the edge set of the graph with the
  collection of matrices
 \begin{equation}
   \mathcal{E} = \{H_{j,k}\,:\, j,k\in\{1,\dots,n\} \text{ with }j<k\},
 \end{equation}
 where, for each $j,k\in\{1,\dots,n\}$ with $j<k$, the matrix $H_{j,k}$
 represents the edge connecting vertices $k$ and $j$.
 The edge set of this graph may be partitioned into $n-1$ disjoint perfect
 matchings $\mathcal{E}_1,\dots,\mathcal{E}_{n-1}$ such that $\mathcal{E} =
 \mathcal{E}_1\cup \mathcal{E}_2\cup\dots\cup \mathcal{E}_{n-1}$.
 For each $\ell\in\{1,\dots,n-1\}$, we may label the $n/2$ elements of the
 perfect matching $\mathcal{E}_\ell$ as $\mathcal{E}_\ell =
 \{F_{\ell,1},\dots,F_{\ell,n/2}\}$.
 Setting $\zeta = \exp(2\pi i/n)$, we may define the matrices 
 \begin{equation}
   U_{\ell,a} = \sqrt{2}\sum_{b=1}^{\frac{n}{2}}\zeta^{2ab}F_{\ell,b} 
 \end{equation}
 for each pair of indices $\ell\in\{1,\dots,n-1\}$ and $a\in\{1,\dots,n/2\}$,  and define the matrices
 \begin{equation}
   V_j = \sum_{k=1}^n \zeta^{jk} H_{k,k}
 \end{equation}
 for each $j\in\{1,\dots,n\}$.
 It may be verified that each of the matrices $U_{\ell,a}$ and $V_j$ is
 unitary.
 Analogous to the proof of Theorem \ref{thm:antisymWH}, for each $X\in\M_n$ it
 holds that
 \begin{align}
   \frac{2}{n(n+1)}\sum_{\ell=1}^{n-1}\sum_{a=1}^{\frac{n}{2}}
   U_{\ell,a}XU_{\ell,a}^*
   & =
   \frac{2}{n(n+1)} \sum_{\ell=1}^{n-1}
   \sum_{a=1}^{\frac{n}{2}} \; \sum_{b,c=1}^{\frac{n}{2}}
   2 \; \zeta^{2a(b-c)}F_{\ell,b}XF_{\ell,c}^* 
    \nonumber\\
   &= 
    \frac{2}{n+1} \sum_{\ell=1}^{n-1} \sum_{b=1}^{\frac{n}{2}} 
   F_{\ell,b}XF_{\ell,b}^*
    \nonumber\\
   &=\frac{2}{n+1}\sum_{1\leq k <j \leq n} H_{j,k} X H_{j,k}.
   \label{eq:symmetricpart1}
 \end{align}
 One also has that
 \begin{align}
   \frac{2}{n(n+1)}\sum_{j=1}^nV_jXV_j^* 
   &= \frac{2}{n(n+1)}\sum_{j,k,\ell=1}^n
   \zeta^{j(k-\ell)} H_{k,k}XH_{\ell,\ell}\nonumber\\
   &=\frac{2}{n+1}\sum_{k=1}^n H_{k,k}X H_{k,k}\label{eq:symmetricpart2}
 \end{align}
 for each $X\in\M_n$.
 Putting together the results of \eqref{eq:symmetricpart1} and
 \eqref{eq:symmetricpart2}, we see that
 \begin{equation}
   \begin{aligned}
     \frac{2}{n(n+1)}\biggl(\sum_{\ell=1}^{n-1}\sum_{a=1}^{\frac{n}{2}}
     U_{\ell,a}XU_{\ell,a}^* + \sum_{j=1}^nV_jXV_j^*\biggr) 
     & =\frac{2}{n+1} \sum_{1\leq j\leq k\leq n} H_{j,k}X H_{j,k}\\
     & = \Phi_{0}(X)
   \end{aligned}
 \end{equation}
 holds for each $X\in\M_n$.
 As $\Phi_0$ is written as the average of $n(n+1)/2$ unitary channels, it
 follows that $\Phi_1$ is mixed unitary with mixed-unitary rank equal to
 $n(n+1)/2$.
\end{proof}

\subsubsection{Symmetric Werner--Holevo channel for odd \texorpdfstring{$n$}{n}}

For odd $n$, we will show that $\Phi_0$ has mixed-unitary rank at most $n(n+3)/2$. 

\begin{theorem}\label{thm:symmWHodd}
 For each odd positive integer $n$, the symmetric Werner--Holevo channel $\Phi_0$ has mixed-unitary rank at most $n(n+3)/2$.
\end{theorem}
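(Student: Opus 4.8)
The plan is to exhibit $\Phi_0$ explicitly as a convex combination of $n(n+1)/2 + n = n(n+3)/2$ unitary channels, adapting the matching-based construction used for even $n$. The obstruction in the odd case is that $K_n$ has no perfect matching; instead I would use the classical fact that, for odd $n$, the edge set of $K_n$ decomposes into $n$ matchings $\mathcal{E}_1,\dots,\mathcal{E}_n$ of size $(n-1)/2$ such that $\mathcal{E}_\ell$ misses exactly one vertex $m_\ell$ and $\ell\mapsto m_\ell$ is a bijection of $\{1,\dots,n\}$ (cf.\ \cite{Hartman1985}). Writing $\mathcal{E}_\ell=\{F_{\ell,1},\dots,F_{\ell,(n-1)/2}\}$, each $F_{\ell,b}$ being one of the matrices $H_{j,k}$ with $j<k$, the matrix $\sqrt 2\,(F_{\ell,1}+\dots+F_{\ell,(n-1)/2})$ has exactly one nonzero entry, of modulus $1$, in each row and column except the row and column indexed by $m_\ell$; I would restore unitarity by inserting the entry $E_{m_\ell,m_\ell}$.

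Concretely, with $\omega=\exp(4\pi i/(n+1))$ a primitive $(n+1)/2$-th root of unity, set
\begin{equation}
  U_{\ell,a}=\sqrt 2\sum_{b=1}^{(n-1)/2}\omega^{ab}F_{\ell,b}+E_{m_\ell,m_\ell}
\end{equation}
for $\ell\in\{1,\dots,n\}$ and $a\in\{1,\dots,(n+1)/2\}$; each $U_{\ell,a}$ has exactly one nonzero entry, of modulus $1$, in each row and column, hence is unitary. Expanding $U_{\ell,a}XU_{\ell,a}^\ast$ and averaging over $a$, the $F_{\ell,b}$--$F_{\ell,c}$ cross terms ($b\neq c$) vanish because $b-c\not\equiv 0\pmod{(n+1)/2}$, and the $E_{m_\ell,m_\ell}$--$F_{\ell,b}$ cross terms vanish because $b\not\equiv 0\pmod{(n+1)/2}$ for $1\le b\le (n-1)/2$; this is exactly why $(n+1)/2$ values of $a$ (rather than $(n-1)/2$) are needed. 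Using that the $\mathcal{E}_\ell$ cover all of $K_n$ and that the $m_\ell$ exhaust $\{1,\dots,n\}$, one obtains
\begin{equation}
  \frac{1}{n+1}\sum_{\ell=1}^{n}\sum_{a=1}^{(n+1)/2}U_{\ell,a}XU_{\ell,a}^\ast
  =\sum_{1\le j<k\le n}H_{j,k}XH_{j,k}+\frac{1}{2}\sum_{k=1}^{n}E_{k,k}XE_{k,k}.
\end{equation}

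It then remains to supply the leftover term $\frac{1}{2}\sum_k E_{k,k}XE_{k,k}$, which I would do with the $n$ diagonal unitaries $V_j=\sum_{k=1}^{n}\zeta^{jk}E_{k,k}$, $\zeta=\exp(2\pi i/n)$, since $\frac{1}{n}\sum_{j=1}^n V_jXV_j^\ast=\sum_{k=1}^n E_{k,k}XE_{k,k}$. Combining, and recalling that $\Phi_0(X)=\frac{2}{n+1}\bigl(\sum_{j<k}H_{j,k}XH_{j,k}+\sum_k E_{k,k}XE_{k,k}\bigr)$, gives
\begin{equation}
  \Phi_0(X)=\frac{2}{n+1}\Biggl(\frac{1}{n+1}\sum_{\ell,a}U_{\ell,a}XU_{\ell,a}^\ast+\frac{1}{2n}\sum_{j=1}^{n}V_jXV_j^\ast\Biggr),
\end{equation}
and a short check shows the coefficients are nonnegative and sum to $1$ (the $U_{\ell,a}$ contribute total weight $n/(n+1)$ and the $V_j$ total weight $1/(n+1)$). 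This is a mixed-unitary decomposition of $\Phi_0$ into $n(n+1)/2+n=n(n+3)/2$ unitary channels, which proves the bound. I expect the only delicate points to be the vanishing of the cross terms involving the inserted diagonal entry $E_{m_\ell,m_\ell}$ and the final coefficient bookkeeping; the graph-theoretic input and the verification that each $U_{\ell,a}$ is unitary are routine.
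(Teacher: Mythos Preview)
Your proof is correct and takes essentially the same approach as the paper. The paper phrases the construction as perfect matchings of $K_{n+1}$ on vertex set $\{0,1,\dots,n\}$ with the convention $H_{0,k}=H_{k,k}/\sqrt{2}$, so that the edge $\{0,m_\ell\}$ in each matching contributes exactly the diagonal entry $E_{m_\ell,m_\ell}$ you insert by hand; unwinding this, the paper's $U_{\ell,a}$ coincide with yours up to relabelling, and the remainder of the argument (the $V_j$ and the final bookkeeping) is identical.
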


\begin{proof}
 The proof is similar to the proofs of Theorems \ref{thm:antisymWH} and \ref{thm:symmWHeven}. Now however, consider the complete graph of $n+1$ vertices with vertices labelled $\{0,1,\dots,n\}$, and  identify the edge set of this graph with the collection of matrices defined as
 \begin{equation}
   \mathcal{E} = \{H_{j,k}\,:\, j,k\in\{0,1,\dots,n\} \text{ with }j<k\},
 \end{equation}
 where we define $H_{0,k}=H_{k,k}/\sqrt{2}$ for each $k\in\{1,\dots,n\}$. As before, this edge set may be partitioned into $n$ disjoint perfect matchings $\mathcal{E}_1,\dots,\mathcal{E}_{n}$ such that $\mathcal{E} = \mathcal{E}_1\cup \mathcal{E}_2\cup\dots\cup \mathcal{E}_{n}$. For each $\ell\in\{1,\dots,n\}$, we may label the $(n+1)/2$ elements of the perfect matching $\mathcal{E}_\ell$ as $\mathcal{E}_\ell = \{F_{\ell,1},\dots,F_{\ell,(n+1)/2}\}$. Setting $\zeta = \exp(2\pi i/(n+1))$, we may define the matrices 
 \begin{equation}
   U_{\ell,a} = \sqrt{2}\sum_{b=1}^{\frac{n+1}{2}}\zeta^{2ab}F_{\ell,b} 
 \end{equation}
 for each pair of indices $\ell\in\{1,\dots,n\}$ and $a\in\{1,\dots,(n+1)/2\}$, and the matrices
 \begin{equation}
   V_{j}=\sum_{k=1}^{n} e^{i2\pi/n}H_{k,k}
 \end{equation}
 for each index $j\in\{1,\dots,n\}$. It may be verified that each of the matrices $U_{\ell,a}$ and $V_j$ is unitary. Analogous to the proofs of Theorems \ref{thm:antisymWH} and \ref{thm:symmWHeven}, one has
 \begin{align}
   \frac{1}{n+1}\sum_{\ell=1}^{n}\sum_{a=1}^{\frac{n+1}{2}} U_{\ell,a}XU_{\ell,a}^* 
     & =  
      \frac{1}{n+1} \sum_{\ell=1}^{n}\sum_{a=1}^{\frac{n+1}{2}} \; \sum_{b,c=1}^{\frac{n+1}{2}} 
       2 \; \zeta^{2a(b-c)}F_{\ell,b}XF_{\ell,c}^*  \nonumber\\[1ex]
     &=\sum_{0\leq j <k \leq n} H_{j,k} X H_{j,k}\nonumber\\
     &=\sum_{1\leq j <k \leq n} H_{j,k} X H_{j,k} + \frac{1}{2}\sum_{j=1}^nH_{j,j}X H_{j,j}\label{eq:Uoddsymm}
 \end{align}
 and
 \begin{equation}\label{eq:Voddsymm}
   \frac{1}{2n}\sum_{j=1}^nV_jXV_j^* = \frac{1}{2}\sum_{j=1}^nH_{j,j}X H_{j,j}
 \end{equation}
 for each $X\in\M_n$. Putting together the results of \eqref{eq:Uoddsymm} and \eqref{eq:Voddsymm}, we see that
 \begin{equation}
   \Phi_0(X) = \frac{2}{n+1}\Bigl(\frac{1}{n+1}\sum_{\ell=1}^{n}\sum_{a=1}^{\frac{n+1}{2}} U_{\ell,a}XU_{\ell,a}^* + \frac{1}{2n}\sum_{j=1}^nV_jXV_j^*\Bigr)
 \end{equation}
 holds for each $X\in\M_n$ and thus $\Phi_0$ can be expressed as a convex combination of $n(n+1)/2 + n = n(n+3)/2$ unitary channels. 
\end{proof}

While Theorems \ref{thm:antisymWH} and \ref{thm:symmWHeven} indicate that the symmetric and anti-symmetric Werner--Holevo channels have minimal mixed-unitary rank for all even integers $n$, Theorem \ref{thm:symmWHodd} only gives an upper bound on the mixed-unitary rank of the symmetric Werner--Holevo channel $\Phi_1$ for odd $n$. As it must be the case that the mixed-unitary rank of a channel is at least equal to its rank, the mixed-unitary rank $N$ of $\Phi_1$ for odd $n$ is therefore bounded by
\begin{equation}
 \frac{n(n+1)}{2}\leq N\leq \frac{n(n+3)}{2}.
\end{equation}
It would be interesting if it turned out that $\Phi_1$ were to have minimal mixed-unitary rank for every positive integer $n$. As the Choi representation of the symmetric Werner--Holevo channel is proportional to the projection matrix onto the symmetric subspace of $\complex^n\otimes\complex^n$,
\begin{equation}
 J(\Phi_0) = \frac{2}{n+1} \, \Pi_0 \,,
\end{equation}
finding a minimal mixed-unitary decomposition of $\Phi_0$ for an integer $n$ amounts to finding an orthogonal collection of $n(n+1)/2$ unitary matrices $\{U_1,\dots,U_{n(n+1)/2}\}\subset\U_n$ such that each $U_k$ is symmetric in the sense that $U_k^\t=U_k$. If such a collection could be found, it would satisfy
\begin{equation}
 \Pi_0 = \frac{1}{n} \sum_{k=1}^{\frac{n(n+1)}{2}} \vec{U_k}\vec(U_k)^*,
\end{equation}
as $\op{rank}(\Pi_0) = n(n+1)/2$. In the case when $n=3$, it turns out that $\Phi_0$ indeed has minimal mixed-unitary rank, as will be shown in Theorem \ref{thm:3symmWH} by explicitly constructing a mixed-unitary decomposition.

\begin{theorem}\label{thm:3symmWH}
  Let $\Phi_0:\M_3\rightarrow \M_3$ be the symmetric Werner--Holevo channel on
  $\M_3$.
  It holds that $\Phi_0$ has mixed-unitary rank equal to $6$ and thus has
  minimal mixed-unitary rank.
\end{theorem}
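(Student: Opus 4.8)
The Choi rank of $\Phi_0:\M_3\rightarrow\M_3$ is $\op{rank}(J(\Phi_0)) = \binom{4}{2} = 6$, so its mixed-unitary rank $N$ automatically satisfies $N\geq 6$; the task is to prove $N\leq 6$. By statement (1) of Theorem~\ref{thm:wernerholevobases}, this is equivalent to exhibiting an orthogonal basis of the $6$-dimensional space $S_3$ of symmetric $3\times 3$ matrices consisting entirely of unitary matrices---that is, six matrices $U_1,\dots,U_6\in\U_3$ with $U_k^\t = U_k$ for each $k$ and $\ip{U_j}{U_k} = \Tr(U_j^\ast U_k) = 0$ for all $j\neq k$. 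Any such family is linearly independent and hence spans $S_3$, each $U_k$ satisfies $\Tr(U_k^\ast U_k) = 3$, and so $\frac{1}{3}\sum_{k=1}^6\vec(U_k)\vec(U_k)^\ast = \Pi_0$; since $J(\Phi_0) = \frac{2}{n+1}\Pi_0 = \frac12\Pi_0$, this yields the mixed-unitary decomposition $\Phi_0(X) = \frac16\sum_{k=1}^6 U_k X U_k^\ast$.

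The substance of the proof is therefore to write down six such matrices. In carrying this out I would keep in mind that no $3\times 3$ symmetric unitary matrix can have vanishing diagonal: if it did, each of its columns would be supported on only two coordinates, so two columns would overlap in exactly one coordinate and have a nonzero inner product, contradicting unitarity. Consequently a unitary orthogonal basis of $S_3$ cannot contain three diagonal matrices, and more generally cannot be assembled by pairing ``pure diagonal'' pieces with ``pure off-diagonal'' pieces. This defeats the most obvious candidates: the Hermitian basis $\{H_{j,k}\}$ is not unitary, while the diagonal ``clock'' unitaries $\op{diag}(1,\zeta,\zeta^2)$ (with $\zeta=\exp(2\pi i/3)$), the discrete Weyl matrices, and the conjugates $DFD$ of the $3\times 3$ discrete Fourier matrix by diagonal unitary matrices $D$ each produce at most three mutually orthogonal symmetric unitaries. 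One must therefore exhibit an explicit family of six symmetric unitary matrices that genuinely mixes diagonal and off-diagonal structure---for instance combining block-diagonal $2\oplus 1$ symmetric unitaries carrying suitable phases with dense symmetric unitary matrices---chosen so that all fifteen pairwise Hilbert--Schmidt inner products vanish.

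Given such a list, the remaining verification is mechanical: for each $k$ one checks $U_k^\t = U_k$ by inspection and $U_k^\ast U_k = \I_3$ by direct multiplication, and then computes the $\binom{6}{2} = 15$ inner products $\Tr(U_j^\ast U_k)$ and confirms that each is zero. Orthogonality together with $\dim(S_3) = 6$ shows that $\{U_1,\dots,U_6\}$ is an orthogonal basis of $S_3$, so Theorem~\ref{thm:wernerholevobases}(1) gives that $\Phi_0$ has mixed-unitary rank exactly $6 = \op{rank}(J(\Phi_0))$, i.e., minimal mixed-unitary rank. As a cross-check one may also verify directly, using $\Phi_0(X) = \frac12\sum_{1\leq j\leq k\leq 3} H_{j,k} X H_{j,k}$, that $\Phi_0(X) = \frac16\sum_{k=1}^6 U_k X U_k^\ast$.

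The one genuinely hard step is the construction in the second paragraph. Unlike $\M_2$, where $S_2$ has the transparent unitary orthogonal basis $\{\I_2,\ \op{diag}(1,-1),\ E_{1,2}+E_{2,1}\}$, there is no canonical unitary orthogonal basis of $S_3$, and the structural obstruction above rules out the naive guesses; one is thus reduced to presenting an ad hoc family of six matrices (most easily located with the help of a computer) and checking that it satisfies the three required properties. Everything else---the reduction via Theorem~\ref{thm:wernerholevobases} and the finite verifications---is routine.
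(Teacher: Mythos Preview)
Your reduction is exactly the paper's: invoke Theorem~\ref{thm:wernerholevobases}(1) and exhibit six symmetric unitary $3\times 3$ matrices that are pairwise orthogonal with respect to the Hilbert--Schmidt inner product. Your structural observations are correct and even anticipate the shape of the paper's construction (at most two diagonal matrices can appear, since a third would force the remaining three to have vanishing diagonal, which you rightly rule out).

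The gap is that you never actually produce the six matrices. You describe the search space, eliminate several natural candidates, and then defer the construction to ``an ad hoc family \dots\ most easily located with the help of a computer.'' That is precisely the content of the theorem: everything else is, as you say, routine. The paper closes this gap by writing the matrices down explicitly. With $\zeta=\exp(2\pi i/3)$ and $\alpha=\tfrac{3}{8}+i\tfrac{\sqrt{15}}{8}$ (so that $|\alpha|^2=\tfrac{3}{8}$), it takes
\[
U_1=\operatorname{diag}(1,\zeta,\zeta^2),\quad
U_2=\operatorname{diag}(1,\zeta^2,\zeta),
\]
together with four dense matrices $U_3,\dots,U_6$ each having all diagonal entries $\tfrac12$ and off-diagonal entries $\pm\alpha$, the sign patterns being chosen so that each is unitary and the four are mutually orthogonal (and orthogonal to $U_1,U_2$, which follows since their diagonals sum to $\tfrac32$ while $1+\zeta+\zeta^2=0$). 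This is exactly the ``two diagonals plus four genuinely mixed'' structure your analysis predicts. Without supplying such a family and verifying the three properties, your proposal is a correct outline but not yet a proof.
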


\begin{proof}
  It is evident that the Choi rank of $\Phi_0$ is equal to
  $\op{rank}(J(\Phi_0)) = 6$.
  Define $\alpha$ and $\zeta$ as
  \begin{equation}
    \alpha = \frac{3}{8}+i\frac{ \sqrt{15}}{8}
    \qquad\text{and}\qquad
    \zeta = \exp(2\pi i/3)
 \end{equation}
 and define unitary matrices $U_1,U_2,U_3,U_4,U_5,U_6\in\U_3$ as
 \begin{equation}
   \begin{aligned}
     U_1 &=  \left(
     \begin{array}{ccc}
       1 & 0  & 0  \\
       0  & \zeta & 0  \\
       0  & 0  & \zeta^2 
     \end{array}
     \right)&
     U_2 &= \left(
     \begin{array}{ccc}
       1 & 0  & 0  \\
       0  & \zeta^2 & 0  \\
       0  & 0  & \zeta 
     \end{array}
     \right) &
     U_3&=\left(
     \begin{array}{ccc}
       \frac{1}{2} & -\alpha  & -\alpha  \\
       -\alpha  & \frac{1}{2} & -\alpha  \\
       -\alpha  & -\alpha  & \frac{1}{2} \\
     \end{array}
     \right)\\[2mm]
     U_4&=\left(
     \begin{array}{ccc}
       \frac{1}{2} & \alpha  & -\alpha  \\
       \alpha  & \frac{1}{2} & \alpha  \\
       -\alpha  & \alpha  & \frac{1}{2} 
     \end{array}
     \right)&
     U_5&=\left(
     \begin{array}{ccc}
       \frac{1}{2} & \alpha  & \alpha  \\
       \alpha  & \frac{1}{2} & -\alpha  \\
       \alpha  & -\alpha  & \frac{1}{2} 
     \end{array}
     \right)&
     U_6&=\left(
     \begin{array}{ccc}
       \frac{1}{2} & -\alpha  & \alpha  \\
       -\alpha  & \frac{1}{2} & \alpha  \\
       \alpha  & \alpha  & \frac{1}{2} 
     \end{array}
     \right).
   \end{aligned}
 \end{equation}
 It may be verified that the matrices $U_1,\dots,U_6$ are symmetric, unitary,
 and pairwise orthogonal.
 Hence, $\{U_1,\dots,U_6\}$ is an orthogonal collection of 6 symmetric unitary
 matrices in $\U_3$.
 Comparing this fact with the result of Theorem \ref{thm:wernerholevobases}
 completes the proof.
\end{proof}

The construction for the mixed-unitary decomposition of $\Phi_0$
for $n=3$ presented in the proof of Theorem \ref{thm:3symmWH} does not appear
to generalize for odd integers $n\geq5$.
Nevertheless, numerical evidence seems to suggest that a minimal mixed-unitary
decomposition of $\Phi_0$ might be found for every positive integer.
A proof of the conjecture that $\Phi_0$ has minimal mixed-unitary rank for
every positive integer $n$ would be interesting to pursue.

\section*{Acknowledgements}

The authors are grateful to Mizanur Rahaman and Jamie Sikora for helpful comments and discussions. MG, DL, VP, and JW acknowledge support from Canada's NSERC. MG is partially supported by the Canadian Institute for Advanced Research and through funding provided to IQC by the Government of Canada. DL's Research at Perimeter Institute is supported in part by the Government of Canada through the Department of Innovation, Science and Economic Development Canada and by the Province of Ontario through the Ministry of Colleges and Universities. CKL is supported by Simons Foundation Grant 351047. 

\appendix

\section{A characterization of Schur channels}
\label{appendix:schurmaps}

Two channels $\Phi:\M_n\rightarrow\M_n$ and $\Psi:\M_n\rightarrow\M_n$ are said
to be \emph{unitarily equivalent} if there exist unitary matrices $U,V\in\U_n$
satisfying
\begin{equation}
  \Psi(X) = U\Phi(VXV^*)U^*
\end{equation}
for each $X\in\M_n$.
In this appendix we provide necessary and sufficient conditions that
characterize when a channel is unitarily equivalent to a Schur channel in terms
of the channel's operator system.
We conclude from this characterization that the channels in Examples
\ref{ex:oddprimeWeylchannel} and \ref{ex:IXZmixedunitary} are not equivalent to
Schur channels.

It is known that a channel is a Schur map if and only if every Kraus
representation for the channel consists of only diagonal matrices.
The following lemma provides another useful characterization for Schur
channels.

\begin{lemma}\label{lem:schur}
  A channel $\Phi:\M_n\rightarrow\M_n$ is a Schur map if and only if
  $\Phi(D)=D$ holds for each diagonal matrix $D\in\M_n$.
\end{lemma}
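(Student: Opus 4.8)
The forward implication needs only a direct computation, so I would dispense with it first: if $\Phi(X)=C\odot X$ for a correlation matrix $C\in\M_n$, then for any diagonal matrix $D$ the matrix $C\odot D$ agrees with $D$ on the diagonal (because every diagonal entry of $C$ equals $1$) and is zero off the diagonal (because $D$ is), so $\Phi(D)=D$.

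The reverse implication is the substantive direction, and here the plan is as follows. Assume $\Phi(D)=D$ for every diagonal $D\in\M_n$, fix an arbitrary Kraus representation $\Phi(X)=\sum_{k=1}^r A_kXA_k^*$, and show that every $A_k$ must be diagonal. The one idea required is to feed the matrices $E_{j,j}$ into the hypothesis: since $\Phi(E_{j,j})=\sum_{k=1}^r (A_ke_j)(A_ke_j)^*$ is forced to equal the rank-one matrix $e_je_j^*$, the image of this positive semidefinite operator is $\op{span}\{e_j\}$, and therefore $A_ke_j\in\op{span}\{e_j\}$ for all $k$ and all $j$ — that is, each $A_k$ is diagonal.

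Once diagonality is established, I would finish by writing $A_k=\op{diag}(\lambda_{k,1},\dots,\lambda_{k,n})$ and computing entrywise: $\Phi(X)(i,j)=\bigl(\sum_{k=1}^r \lambda_{k,i}\overline{\lambda_{k,j}}\bigr)X(i,j)$, so that $\Phi(X)=C\odot X$ with $C(i,j)=\sum_{k=1}^r\lambda_{k,i}\overline{\lambda_{k,j}}$, a positive semidefinite matrix whose diagonal entries equal $1$ by the already-used relation $\Phi(E_{j,j})=E_{j,j}$; hence $C$ is a correlation matrix and $\Phi$ is a Schur channel. (Alternatively, once all Kraus operators are diagonal one may simply invoke the characterization recalled just before the lemma, namely that a channel admitting an all-diagonal Kraus representation is a Schur map.)

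I do not anticipate any genuine obstacle: the only non-mechanical step is the rank argument that promotes the scalar identities $\Phi(E_{j,j})=E_{j,j}$ to diagonality of the Kraus operators, and even that is immediate once $\Phi(E_{j,j})$ is written as a sum of rank-one positive semidefinite terms.
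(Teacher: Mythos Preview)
Your proof is correct. Both your argument and the paper's reach the same intermediate conclusion---that every Kraus operator of $\Phi$ must be diagonal---and then finish by invoking the characterization of Schur maps via diagonal Kraus operators. The difference lies in how that diagonality is established. The paper simply asserts that if $\Phi(D)=D$ for every diagonal $D$ then each Kraus matrix commutes with every diagonal matrix (a consequence of the standard fixed-point theory for unital completely positive maps, stated without proof), whereas you give a direct, self-contained argument: feeding in the rank-one projections $E_{j,j}$ and using that a sum of positive semidefinite rank-one terms equaling $e_je_j^*$ forces each term to lie in $\op{span}\{e_j\}$. Your route is more elementary and avoids appealing to an unproved background fact; the paper's route is shorter to state but leaves more to the reader.
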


\begin{proof}
  If $\Phi(D)=D$ holds for each diagonal matrix $D\in\M_n$, then each Kraus
  matrix of $\Phi$ must commute with each diagonal matrix, and thus each Kraus
  matrix must itself be diagonal.
  On the other hand, if $\Phi$ is a Schur map then there is a correlation
  matrix $C\in\M_n$ satisfying $\Phi(X) = C\odot X$ for each $X\in\M_n$.
  As each diagonal entry of $C$ must be equal to one, it holds that
  $\Phi(D) = D$ for each diagonal matrix $D\in\M_n$. 
\end{proof}

We now provide a necessary and sufficient condition for characterizing when a
map is unitarily equivalent to a Schur map in terms of the operator system of
the channel.

\begin{theorem}\label{thm:schurequivalent}
  Let $\Phi:\M_n\rightarrow\M_n$ be a channel.
  The following statements are equivalent.
  \begin{enumerate}
  \item The channel $\Phi$ is unitarily equivalent to a Schur map.
  \item The operator system $\mathcal{S}_\Phi$ is a commuting family of
    matrices.
 \end{enumerate}
\end{theorem}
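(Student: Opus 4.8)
The plan is to prove the two implications separately, relying throughout on the description $\mathcal{S}_\Phi = \op{span}\{A_k^{\ast}A_j : 1\leq j,k\leq r\}$ of the operator system in terms of any Kraus representation $\Phi(X)=\sum_k A_k X A_k^{\ast}$, together with the fact recorded earlier that the operator system of a Schur channel consists only of diagonal matrices. For the implication $(1)\Rightarrow(2)$, suppose $\Psi(X) = U\Phi(VXV^{\ast})U^{\ast}$ is a Schur channel for some $U,V\in\U_n$. The matrices $B_k = UA_kV$ form a Kraus representation of $\Psi$, and since $B_k^{\ast}B_j = V^{\ast}A_k^{\ast}A_jV$ we get $\mathcal{S}_\Psi = V^{\ast}\mathcal{S}_\Phi V$. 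As $\mathcal{S}_\Psi$ consists of diagonal matrices, which pairwise commute, and $\mathcal{S}_\Phi = V\mathcal{S}_\Psi V^{\ast}$ is obtained from it by conjugation by a fixed unitary, $\mathcal{S}_\Phi$ is a commuting family as well. This direction requires no further work.

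For $(2)\Rightarrow(1)$, suppose $\mathcal{S}_\Phi$ is a commuting family. Being an operator system, $\mathcal{S}_\Phi$ is $\ast$-closed and equals $\op{span}\{\I_n,H_1,\dots,H_{s-1}\}$ for suitable Hermitian matrices $H_1,\dots,H_{s-1}$; these are normal and, by hypothesis, pairwise commute, so they may be simultaneously diagonalized by a single unitary $V\in\U_n$. Consequently $V^{\ast}\mathcal{S}_\Phi V$ is contained in the space of diagonal matrices. Fix a Kraus representation $\Phi(X)=\sum_{k=1}^r A_kXA_k^{\ast}$ and set $B_k = A_kV$, so that every $B_k^{\ast}B_j$ is diagonal and $\sum_{k=1}^r B_k^{\ast}B_k = \I_n$. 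The crux is to produce a single unitary $U$ with $UB_k$ diagonal for every $k$ simultaneously; granting this, $UA_kV = UB_k$ is diagonal for all $k$, so $\Psi(X) = U\Phi(VXV^{\ast})U^{\ast} = \sum_k (UA_kV)X(UA_kV)^{\ast}$ is a channel with diagonal Kraus operators and hence a Schur map by Lemma~\ref{lem:schur}.

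To carry out this simultaneous diagonalization, write $b_{k,i}\in\complex^n$ for the $i$-th column of $B_k$. Diagonality of each $B_k^{\ast}B_j$ translates into $\ip{b_{k,i}}{b_{j,\ell}} = 0$ whenever $i\neq\ell$, for all $k,j$, so the subspaces $\mathcal{W}_i = \op{span}\{b_{k,i}\,:\,1\leq k\leq r\}$ are pairwise orthogonal in $\complex^n$. The trace-preserving identity yields $\sum_{k=1}^r \|b_{k,i}\|^2 = (\I_n)(i,i) = 1$ for each $i$, so no $\mathcal{W}_i$ is the zero subspace; since $n$ pairwise-orthogonal nonzero subspaces of $\complex^n$ must each be one-dimensional and together span $\complex^n$, there is an orthonormal basis $w_1,\dots,w_n$ of $\complex^n$ with $b_{k,i}\in\op{span}\{w_i\}$ for every $k$. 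Taking $U = \sum_{i=1}^n e_iw_i^{\ast}\in\U_n$, each column $Ub_{k,i}$ is a scalar multiple of $e_i$, so $UB_k$ is diagonal, which finishes the argument.

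I expect the main obstacle to be precisely this last simultaneous-diagonalization step. Knowing only that each $B_k^{\ast}B_j$ is diagonal is a constraint on the pairwise overlaps among the columns of $B_1,\dots,B_r$, and it is not a priori clear that one left-multiplying unitary can diagonalize all of them at once; the resolution is that the orthogonality relations among the subspaces $\mathcal{W}_i$, together with the trace condition preventing any $\mathcal{W}_i$ from collapsing, force the dimension count that exhibits the required orthonormal basis. The remaining ingredients—simultaneous diagonalization of the pairwise-commuting Hermitian generators of $\mathcal{S}_\Phi$, and the passage between diagonal Kraus operators and Schur maps—are standard.
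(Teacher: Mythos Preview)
Your proof is correct. The direction $(1)\Rightarrow(2)$ is identical to the paper's, but your argument for $(2)\Rightarrow(1)$ takes a different and somewhat more direct route. After the common step of choosing $V\in\U_n$ that diagonalizes $\mathcal{S}_\Phi$, the paper works at the level of channel outputs: it derives the multiplicative identity $\Phi(VD_0V^{\ast})\Phi(VD_1V^{\ast})=\Phi(VD_0D_1V^{\ast})\Phi(\I_n)$ for diagonal $D_0,D_1$, deduces that the matrices $P_j=\Phi(VE_{j,j}V^{\ast})$ are mutually orthogonal positive trace-one matrices (hence rank-one projections), and then invokes Lemma~\ref{lem:schur} to recognize the resulting channel as a Schur map. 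You instead work at the level of the Kraus matrices $B_k=A_kV$, reading the diagonality of $B_k^{\ast}B_j$ as orthogonality of the column subspaces $\mathcal{W}_i$ and using a dimension count to see that they are one-dimensional lines from which the left unitary $U$ is built directly. The two arguments are in fact dual to one another---one checks that $P_j=\sum_k b_{k,j}b_{k,j}^{\ast}$, so $\op{im}(P_j)=\mathcal{W}_j$ and the paper's condition $P_jP_k=0$ is exactly your orthogonality of $\mathcal{W}_j$ and $\mathcal{W}_k$---but your presentation is more elementary, avoiding the multiplicative identity and needing only the trivial direction of Lemma~\ref{lem:schur} (diagonal Kraus operators imply Schur).
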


\begin{proof}
  Let $A_1,\dots,A_N\in\M_n$ be linear matrices satisfying
  \begin{equation}
    \Phi(X) = \sum_{k=1}^N A_kXA_k^*
  \end{equation}
  for each $X\in\M_n$.
  First suppose that $\Phi$ is unitarily equivalent to a Schur map.
  There exist unitary matrices $U,V\in\U_n$ such that the channel
  $\Psi:\M_n\rightarrow\M_n$ defined as $\Psi(X) = U\Phi(VXV^*)U^*$ is a Schur
  map.
  The channel $\Psi$ has a Kraus representation of the form
  \begin{equation}
    \Psi(X) = \sum_{j=1}^N (UA_jV)X(UA_jV)^*
  \end{equation}
  and thus $UA_kV$ is a diagonal matrix for each $k\in\{1,\dots,N\}$.
  Moreover, each of the matrices in the collection
  \begin{equation}
    V^*\mathcal{S}_\Phi V = \op{span}\{V^*A_j^*U^*UA_kV\, :\, j,k\in\{1,\dots,N\}\}
  \end{equation}
  is also diagonal.
  It follows that $\mathcal{S}_\Phi$ is a commuting family of normal matrices
  in $\M_n$.
 
  For the other direction, suppose that $\mathcal{S}_\Phi$ is a commuting
  family.
  As $\mathcal{S}_\Phi$ is self-adjoint, each matrix in $\mathcal{S}_\Phi$ is
  also normal.
  There exists a unitary matrix $V\in\U_n$ such that $V^*A_j^*A_kV$ is a
  diagonal matrix for each pair of indices $j,k\in\{1,\dots,N\}$.
  For any two diagonal matrices $D_0,D_1\in\M_n$, one has that 
  \begin{equation}
    \begin{aligned}
      \Phi(VD_0V^*)\Phi(VD_1V^*)  
      & = \sum_{j,k=1}^N A_jVD_0(V^*A_j^*A_kV)D_1V^*A_k^*\\
      & = \sum_{j,k=1}^N A_jVD_0D_1(V^*A_j^*A_kV)V^*A_k^*\\
      & = \Phi(VD_0D_1V^*)\Phi(\I_n) 
    \end{aligned}
  \end{equation}
 as each of the matrices in $\{V^*A_j^*A_kV\,:j,k\in\{1,\dots,N\}\}$ is
 diagonal and commutes with the diagonal matrices $D_0$ and $D_1$.
 Define  matrices $P_1,\dots,P_n\in\M_n$ as 
 \begin{equation}
  P_k = \Phi(VE_{k,k}V^*)
 \end{equation}
 for each $k\in\{1,\dots,n\}$.
 For indices $j,k\in\{1,\dots,n\}$ with $j\neq k$, one has that 
 \begin{equation}
   P_jP_k = \Phi(VE_{j,j}E_{k,k}V^*)\Phi(\I_n)  = 0
 \end{equation}
 as $E_{j,j}E_{k,k}=0$.
 Moreover, it holds that $\Tr(P_k)=\Tr(E_{k,k})=1$ and that $P_k$ is positive
 for each $k\in\{1,\dots,n\}$ as $\Phi$ is a quantum channel.
 The collection $\{P_1,\dots,P_n\}\subset\M_n$ is therefore an orthogonal set
 of positive matrices each with trace equal to 1.
 Hence there must exist a unitary matrix $U\in\U_n$ such that $UP_jU^* =
 E_{j,j}$ for each $j\in\{1,\dots,n\}$.
 Define a channel $\Psi:\M_n\rightarrow\M_n$ as 
 \begin{equation}
  \Psi(X) = U\Phi(VXV^*)U^*
 \end{equation}
 for each $X\in\M_n$.
 From the  observations above, one finds that $\Psi(E_{k,k})=E_{k,k}$ for each
 $k\in\{1,\dots,n\}$, and thus $\Psi(D)=D$ for each diagonal matrix
 $D\in\M_n$.
 It follows that $\Psi$ is a Schur map by Lemma \ref{lem:schur}.
 This completes the proof.
\end{proof}

\begin{remark}
 Every unital quantum channel with Choi rank at most 2 is unitarily equivalent
 to a Schur map.
 This fact was proven in \cite{Landau1993}, but we remark that another proof of
 this fact can be found by making use of Theorem \ref{thm:schurequivalent}.
 Indeed, let $\Phi:\M_n\rightarrow\M_n$ be a unital quantum channel for some
 positive integer $n$ such that $\op{rank}(J(\Phi))\leq2$.
 There exist matrices $A_0,A_1\in\M_n$ such that 
 \begin{equation}
  \Phi(X) = A_0XA_0^* + A_1XA_1^*
 \end{equation}
 is a Kraus representation of $\Phi$.
 As $\Phi$ is unital and trace preserving, these matrices must satisfy
 $A_0^*A_0 + A_1^*A_1  = \I_n$ and $A_0 A_0^* + A_1A_1^*=\I_n$.
 The operator system of $\Phi$ may be given by
 $\mathcal{S}_\Phi=\op{span}\{A_0^*A_0,A_0^*A_1,A_1^*A_0,A_1^*A_1\}$, and it is
 straightforward to verify that each of these matrices commute with one
 another:
 \begin{equation}
   \begin{aligned}
     (A_0^*A_0) (A_0^*A_1) & =
     A_0^*(\I-A_1A_1^*)A_1 = A_0^*A_1(\I-A_1^*A_1)
     = (A_0^*A_1)(A_0^*A_0),\\[1mm]
     (A_0^*A_0) (A_1^*A_0) & =
     (\I-A_1^*A_1)A_1^*A_0 = A_1^*(1-A_1A_1^*)A_0 = (A_1^*A_0)(A_0^*A_0),\\[1mm]
     (A_1^*A_1) (A_0^*A_1) & =
     (\I-A_0^*A_0)A_0^*A_1 = A_0^*(\I-A_0A_0^*)A_1 = A_0^*A_1A_1^*A_1,\\[1mm]
     (A_1^*A_1) (A_1^*A_0) & =
     A_1^*(\I-A_0A_0^*)A_0 = A_1^*A_0(\I-A_0^*A_0) = A_1^*A_0A_1^*A_1,\\[1mm]
     (A_0^*A_0) (A_1^*A_1) & =
     (\I-A_1^*A_1)(\I-A_0^*A_0) = \I - A_0^*A_0 - A_1^*A_1 +A_1^*A_1A_0^*A_0\\
     & = (A_1^*A_1)(A_0^*A_0),\\[1mm]
     (A_0^*A_1) (A_1^*A_0) & =
     A_0^*(\I-A_0A_0^*)A_0 = A_0^*A_0(\I-A_0^*A_0) = (\I-A_1^*A_1)A_1^*A_1\\
     & = A_1^*(\I-A_1A_1^*)A_1 = (A_1^*A_0)(A_0^*A_1).
   \end{aligned}
 \end{equation}
\end{remark}

 
\bibliographystyle{alpha}
\bibliography{mixedunitary}

\end{document}